\newif\ificalp
\newif\iffullversion
\newtheorem{lemma}{Lemma}
\newtheorem{theorem}[lemma]{Theorem}
\newtheorem{claim}[lemma]{Claim}
\newtheorem{definition}[lemma]{Definition}
\newtheorem{corollary}[lemma]{Corollary}
\DeclareMathOperator{\Cov}{Cov}
\newcommand{\by}{\mathbf{y}}
\newcommand{\bx}{\mathbf{x}}
\newcommand{\tT}{\tilde{T}}
\newcommand{\tZ}{\tilde{Z}}
\newcommand{\eps}{\epsilon}
\newcommand{\E}{\mathbb{E}}
\newcommand{\cP}{\mathcal{P}}
\newcommand{\cM}{\mathcal{M}}
\newcommand{\cS}{\mathcal{S}}
\newcommand{\cT}{\mathcal{T}}
\newcommand{\cI}{\mathcal{I}}
\newcommand{\cZ}{\mathcal{Z}}
\newcommand{\bk}{\mathbf{k}}
\newcommand{\dks}{\textsc{DkS}}
\newcommand{\SOPTc}{S^{\OPT}_{\text{close}}}
\newcommand{\cB}[2]{\mathcal{B}(#1,#2)}
\newcommand{\bcB}[2]{\overline{\mathcal{B}(#1,#2)}}
\newcommand{\ind}{\mathbf{1}}
\newcommand{\oind}{\overline{\ind}}
\newcommand{\cX}{\mathcal{X}}
\newcommand{\bw}{\mathbf{w}}
\newcommand{\bW}{\mathbf{W}}
\newcommand{\ow}{\overline{\bw}}
\DeclareMathOperator{\argmax}{argmax}
\DeclareMathOperator{\DCG}{DCG}
\DeclareMathOperator{\OPT}{OPT}
\DeclareMathOperator{\disp}{Disp}
\DeclareMathOperator{\dive}{Div}
\DeclareMathOperator{\den}{Den}
\DeclareMathOperator{\poly}{poly}
\newcommand{\N}{\mathbb{N}}
\newcommand{\R}{\mathbb{R}}
\title{Improved Approximation Algorithms and Lower Bounds for Search-Diversification Problems}
\titlerunning{Improved Approximation and Lower Bounds for Search-Diversification}
\author{Anon}{Anon}{}{}{}
\keywords{Approximation Algorithms, Complexity, Data Mining, Diversification}
\author{
Amir Abboud\\
Weizmann Institute\\
\texttt{amir.abboud@weizmann.ac.il}
\and
Vincent Cohen-Addad \\
Google Research \\ \texttt{cohenaddad@google.com}
\and
Euiwoong Lee\thanks{Partially supported by  Google.}\\
University of Michigan\\
\texttt{euiwoong@umich.edu}
\and
Pasin Manurangsi\\
Google Research\\
\texttt{pasin@google.com}
}
\begin{document}

\maketitle

\begin{abstract}
We study several questions related to diversifying search results. We give improved approximation algorithms in each of the following problems, together with some lower bounds.
\begin{enumerate}
\item We give a polynomial-time approximation scheme (PTAS) for a diversified search ranking problem~\cite{BansalJKN10} whose objective is to minimizes the discounted cumulative gain. Our PTAS runs in time $n^{2^{O(\log(1/\eps)/\eps)}} \cdot m^{O(1)}$ where $n$ denotes the number of elements in the databases and $m$ denotes the number of constraints. Complementing this result, we show that no PTAS can run in time $f(\eps) \cdot (nm)^{2^{o(1/\eps)}}$ assuming Gap-ETH and therefore our running time is nearly tight. Both our upper and lower bounds answer open questions from~\cite{BansalJKN10}.
\item We next consider the Max-Sum Dispersion problem, whose objective is to select $k$ out of $n$ elements from a database that maximizes the dispersion, which is defined as the sum of the pairwise distances under a given metric. We give a quasipolynomial-time approximation scheme (QPTAS) for the problem which runs in time $n^{O_{\eps}(\log n)}$. This improves upon previously known polynomial-time algorithms with approximate ratios 0.5~\cite{HassinRT97,BorodinJLY17}. Furthermore, we observe that reductions from previous work rule out approximation schemes that run in $n^{\tilde{o}_\eps(\log n)}$ time assuming ETH.
\item Finally, we consider a generalization of Max-Sum Dispersion called Max-Sum Diversification. In addition to the sum of pairwise distance, the objective also includes another function $f$. For monotone submodular function $f$, we give a quasipolynomial-time algorithm with approximation ratio arbitrarily close to $(1 - 1/e)$. This improves upon the best polynomial-time algorithm which has approximation ratio $0.5$~\cite{BorodinJLY17}. Furthermore, the $(1 - 1/e)$ factor is also tight as achieving better-than-$(1 - 1/e)$ approximation is NP-hard~\cite{Feige98}.
\end{enumerate}
\end{abstract}

\section{Introduction}

A fundamental task in databases in general and in search engines in particular is the selection and ordering of the results to a given query. 
Suppose that we have already retrieved the set of appropriate answers $S_q$ to a query $q$ by a certain preliminary process. Which item from the (possibly huge) set $S_q$ should be presented \emph{first}? Which should be the first ten?

Besides the obvious approach of ranking the \emph{most relevant} answers first, perhaps the second most important consideration is that the output set should satisfy certain \emph{diversity} requirements.
If a user searches for ``Barcelona'' it would be desirable that the first ten results contain a mix of items containing, e.g. general details of the city, tourist information, and news about the associated soccer team, even though the most relevant items in certain absolute terms may only pertain to the latter.
There are various natural ways to formalize what makes a set of results diverse, and 
much research has gone into this \emph{Search Diversification} topic in the past two and a half decades in various context (see e.g. \cite{CarbonellG98,AgrawalGHI09,GollapudiS09,BhaskaraGMS16,BansalJKN10,kulesza2012determinantal,rodrygo2015search,BorodinJLY17,DrosouJPS17,BasteJMPR19,FominGPP021,Moumoulidou0M21,HKKLO21,DBLP:conf/kdd/AbbassiMT13,DBLP:conf/pods/IndykMMM14,DBLP:conf/spaa/EpastoMZ19,DBLP:conf/aaai/ZadehGMZ17}).
Recently, there have also been extensive research efforts into algorithmic fairness (see e.g. a survey~\cite{fairness-survey}). Some of these fairness notions (e.g.~\cite{Chierichetti0LV17,BackursIOSVW19}) are also closely related to diversity:
%The modern focus on algorithms that satisfy certain notions of fairness \cite{Chierichetti0LV17,BackursIOSVW19,2,3} has reignited the interest in this topic:
a set of results that is not diverse enough (e.g. returning only pictures of members of one group when a user searches for ``scientists'') could be problematic in terms of fairness.

A well-known work on search diversification \cite{CarbonellG98} suggests that a diverse set of results is one that satisfies the following: The $k^{th}$ result in the list should maximize the sum\footnote{To be more precise, it is a weighted average of the two terms.} of: (1) the relevance to the query, and (2) the total distance to the first $k-1$ results in the list.
The success of this natural notion of diversification may be attributed to the fact that it can be computed efficiently with a greedy algorithm.
However, it may be a bit too simplistic and the objectives that real-world search engines seem to optimize for are actually closer to other, more complicated (to compute) notions of diversity that have been proposed in follow-up works (e.g. \cite{BansalJKN10,GollapudiS09,BorodinJLY17}).

The goal of this paper is to investigate the time complexity of computing these latter, more intricate definitions of the search diversification task.
Since such problems are NP-Hard even for restricted settings, and since approximate solutions are typically acceptable in this context, our focus is on understanding their time vs. approximation trade-offs.
Our results reduce the gaps in the literature, completely resolving the complexity of some of the most natural notions.

\subsection{Diversified Search Ranking}

The first problem we study is a diversified search ranking problem formulated by Bansal et al.~\cite{BansalJKN10}. Here we are given a collections $\cS$ of subsets of $[n]$ and, for each $S \in \cS$, a positive integer $k_S$. Our goal is to find a permutation $\pi: [n] \to [n]$ that minimizes the \emph{discounted cumulative gain (DCG)} defined as
\begin{align} \label{eq:dcg-def}
\DCG_{\cS, \bk}(\pi) := \sum_{S \in \cS} \frac{1}{\log(t_{\pi}(S) + 1)},
\end{align}
where $t_{\pi}(S)$ is defined as the earliest time the set $S$ is covered $k_S$ times, i.e. $\min \{i \in [n] |S \cap \pi([i])| \geq k_S\}$.

This formulation relates to diversification by viewing the output $\pi$ as the ranking of the documents to be shown, and each topic corresponds to a set $S$ of documents related to that topic. With this interpretation, the DCG favors rankings that display ``diverse topics as early in the ranking as possible''. Bansal et al.~\cite{BansalJKN10} gave a polynomial-time approximation scheme (PTAS) for the problem in the special case that $k_S = 1$ for all $S \in \cS$ with running time $n^{2^{O(\log(1/\eps)/\eps)}} m^{O(1)}$. On the other hand, for the case of general $k_S$'s, they give a quasipolynomial-time approximation scheme with running time $n^{(\log \log n)^{O(1/\eps)}} m^{O(1)}$ and left as an open question whether a PTAS exists. We resolve this open question by giving a PTAS for the more general problem; the running time we obtain for this more general problem is similar to the running time 
obtained by Bansal et al.'s PTAS for the special case $k_S = 1$. We then show that this 
is indeed the best possible (under some complexity assumption).

\begin{theorem} \label{thm:main-ptas}
There is a randomized PTAS for maximizing DCG that runs in time $n^{2^{O(\log(1/\eps)/\eps)}} \cdot m^{O(1)}$.
\end{theorem}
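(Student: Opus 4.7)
The plan is to extend the PTAS of Bansal et al.\ for the special case $k_S = 1$ to general coverage thresholds $k_S$, following the standard ``head-plus-tail'' paradigm for minimum-latency-type objectives. At a high level I would (i) reduce the DCG objective to a combinatorial ``interval-assignment'' problem via geometric discretization of positions, (ii) enumerate the first $T := 2^{\Theta(\log(1/\eps)/\eps)}$ positions of the optimal permutation exhaustively, and (iii) for each head-guess, solve and round an LP for the tail.

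For step (i), partition $[n]$ into geometric intervals $I_j = [\lceil (1+\eps)^j \rceil, \lceil (1+\eps)^{j+1} \rceil)$ for $j = 0, 1, \ldots, L = O(\log n / \eps)$. The DCG contribution $1/\log(t+1)$ varies by at most a $(1+\eps)$ factor within each $I_j$, so up to a $(1 \pm O(\eps))$ multiplicative loss we may replace $\DCG$ by a step-function surrogate that depends on $\pi$ only through the index $j$ of the interval containing $t_\pi(S)$ for each $S \in \cS$. For step (ii), the threshold $T$ is chosen so that only $O(\log(1/\eps)/\eps^2)$ intervals lie in $[1, T]$; I then enumerate all orderings of the first $T$ positions at a cost of $n^T = n^{2^{O(\log(1/\eps)/\eps)}}$, which accounts for the leading factor in the target running time. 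Given a prefix, I compute the residual coverage requirements $k'_S$ (sets already $k_S$-covered have their $t_\pi(S)$ fixed and contribute a known value), then set up an LP for step (iii) with variables $y_{e,j}$ (element $e$ placed in interval $I_j$, for $j$ past the head) and $x_{S,j}$ (set $S$ first $k_S$-covered by the end of $I_j$), coupled by $\sum_{e \in S} \sum_{j' \leq j} y_{e,j'} \geq k'_S \sum_{j' \leq j} x_{S, j'}$, assignment constraints $\sum_j y_{e,j} \leq 1$, and capacity constraints $\sum_e y_{e,j} \leq |I_j|$; the objective is $\sum_{S,j} x_{S,j} / \log((1+\eps)^j + 1)$. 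Because positions within the same interval contribute nearly-equal DCG values, an integral LP solution is converted to a permutation by arbitrarily ordering the selected elements inside each interval, losing nothing in objective.

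The main obstacle is rounding the LP to an integral schedule when $k_S > 1$. In the $k_S = 1$ case of Bansal et al., ``covering'' a set in an interval requires placing a single element there, which is amenable to standard dependent rounding; for general $k_S$, multiple elements of the same $S$ must simultaneously land in a prescribed interval, introducing positive correlations that can cause a naive rounding to violate coverage with constant probability. I would address this by (a) strengthening the LP with knapsack-cover inequalities that correctly capture coverage constraints with large right-hand sides, and (b) extending the enumeration of step (ii) to additionally guess the interval assignment of the $O(\log(1/\eps)/\eps)$ ``heaviest'' sets---those whose residual $k'_S$ is comparable to the size of the interval they are assigned to. The extra enumeration fits inside the already-present $n^{2^{O(\log(1/\eps)/\eps)}}$ budget, and for the remaining sets each interval receives enough candidate elements relative to $k'_S$ that Chernoff-type concentration makes a single randomized rounding succeed up to a $(1+\eps)$ factor on $\DCG$ in expectation, yielding the claimed randomized PTAS.
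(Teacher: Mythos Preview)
Your head-plus-tail skeleton (enumerate the first $2^{\Theta(\log(1/\eps)/\eps)}$ positions, then solve an LP for the tail) matches the paper exactly. The gap is entirely in the tail rounding, and specifically in step~(b) of your plan.

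You propose to ``guess the interval assignment of the $O(\log(1/\eps)/\eps)$ heaviest sets,'' meaning those $S$ whose residual demand $k'_S$ is comparable to the size of the interval in which they get covered. But the number of such sets is not bounded by any function of $\eps$: $m=|\cS|$ is an independent input parameter, a single element can lie in arbitrarily many sets, and for any fixed interval $I_j$ there is nothing preventing \emph{every} set from having $k'_S \approx |I_j|$ and being covered exactly there. So the extra enumeration you budget for cannot absorb all the problematic sets, and for the remaining ones a ``single randomized rounding'' with a uniform inflation factor does not give concentration: if $S$ is first $\eta$-fractionally covered at position $t^*(S)$ with $t^*(S)$ large, the LP mass you have to work with is only $\Theta(\eta/\log t^*(S))$, which goes to $0$ with $n$ and kills any Chernoff bound that uses a fixed (poly$(1/\eps)$) blow-up.

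The paper sidesteps both issues with one idea and no extra enumeration over sets. It keeps Bansal et al.'s knapsack-constrained LP verbatim and changes only the rounding: at doubling level $t_i=2^i$, element $e$ is sampled with probability $\min\{1,\ z_{e,i}/(\gamma\, f(t_i))\}$, i.e.\ the inflation factor is \emph{level-dependent}, equal to $\Theta(\log(t_i+u)/\gamma)$ after the head of length $u$. This is precisely large enough that the knapsack inequalities force $\sum_{e\in S\setminus S_g} p_{e,i^*(S)} \ge (\eta/\gamma)(k_S-|S_g|)\ge 2(k_S-|S_g|)$, so Chernoff gives \emph{constant} (not high) probability of covering each $S$ at its target level---and constant probability is all that is needed for a maximization objective in expectation. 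At the same time the stretching of positions is only by a factor $O(\log(t+u)/\gamma)$, and once $t\ge u=2^{\Theta(\log(1/\eps)/\eps)}$ this costs at most a $(1-O(\eps))$ factor in $1/\log(\cdot)$. The whole argument is packaged as the bound $(1-\alpha)\tau_{f,\alpha}$ with $\tau_{f,\alpha}=\min_t f\bigl(\tfrac{C\log(1/\alpha)}{\alpha}\cdot \tfrac{t}{f(t)}\bigr)/f(t)$; the head enumeration then makes $\tau_{f,\alpha}\ge 1-O(\eps)$.
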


The above running time is doubly exponential in $1/\eps$, and Bansal et al.~\cite{BansalJKN10} asked whether this dependency is necessary even for the special case $k_S = 1$. We also answer this question by showing that the doubly exponential is necessary, assuming the Gap Exponential Time Hypothesis (Gap-ETH)\footnote{Gap-ETH~\cite{Din16,ManurangsiR17} asserts that there is no $2^{o(n)}$-time algorithm to distinguish between a satisfiable $n$-variable 3SAT formula and one which is not even $(1 - \eps)$-satisfiable for some $\eps > 0$}:
\begin{theorem} \label{thm:dcg-lb}
Assuming Gap-ETH, for any function $g$, there is no PTAS for maximizing DCG that runs in time $g(\eps) \cdot (nm)^{2^{o(1/\eps)}}$. Moreover, this holds even when restricted to instances with $k_S = 1$ for all $S \in \cS$.
\end{theorem}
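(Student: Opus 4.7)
The plan is to prove the lower bound via a gap-preserving reduction from Maximum $k$-Coverage. The starting point is the Gap-ETH-based parameterized inapproximability of Max-$k$-Cover due to Manurangsi: for some absolute constant $\delta > 0$, distinguishing the YES case (some $k$ sets cover all of $U$) from the NO case (every $k$ sets cover at most a $(1 - 1/e + \delta)$ fraction of $U$) requires time $n^{\tilde{\Omega}(k)}$ under Gap-ETH. I would instantiate this with $k = 2^{\Theta(1/\eps)}$, so that the hardness becomes $n^{2^{\tilde{\Omega}(1/\eps)}}$, matching the form claimed in the theorem. The doubly exponential shape of the lower bound comes directly from the fact that the DCG weight at position $k$ is $1/\log(k+1) = \Theta(\eps)$ precisely when $k = 2^{\Theta(1/\eps)}$, so a constant-factor Max-Cover gap at ``cutoff'' $k$ becomes a $(1+\eps)$-scale gap in DCG.

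The reduction itself is natural. Take the DCG ground set to be $\Sigma$ (the Max-Cover sets); for each universe element $u \in U$, create a single DCG constraint set $N_u := \{\sigma \in \Sigma : u \in \sigma\}$ with $k_{N_u} = 1$. Then $t_\pi(N_u)$ equals the first position in $\pi$ at which a set containing $u$ appears, and the DCG objective becomes $\sum_u 1/\log(t_\pi(u)+1)$. In the YES case, placing the $k$ good sets in positions $1, \ldots, k$ forces every $t_\pi(N_u) \leq k$, yielding DCG $\geq |U|/\log(k+1)$. In the NO case, the first $k$ positions of any permutation cover at most $(1 - 1/e + \delta)|U|$ elements, so the remaining $(1/e - \delta)$ fraction of universe elements must be covered at positions $> k$, contributing strictly less to DCG.

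To ensure the YES/NO gap in DCG is a positive constant $\gamma_0 > 0$ independent of $k$, I would rely on a \emph{robust} form of Max-$k$-Cover hardness: any collection of $k \cdot f(k)$ sets still covers at most a $(1 - 1/e + \delta')$ fraction of $U$, for some $\delta' > 0$ and a slowly growing function $f$. Such robustness is essentially built into the Label-Cover-based reductions underlying Max-$k$-Cover hardness (and can also be enforced by padding the Max-Cover instance with decoy sets that don't interact with the core structure), but extracting it cleanly from known constructions is the main technical step. Given such a constant gap in DCG and a polynomial-size reduction, any algorithm computing a $(1+\eps)$-approximation of DCG in time $g(\eps) \cdot (nm)^{2^{o(1/\eps)}}$ would solve Max-$k$-Cover (with $k = 2^{\Theta(1/\eps)}$) in time strictly less than $n^{\tilde{\Omega}(2^{1/\eps})}$, contradicting Gap-ETH. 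The chief obstacle is thus the robust-hardness step; once it is in place, the gap calculation and the parameter tuning are routine, and the reduction produces $k_S = 1$ for every set, so the moreover-clause of the theorem is automatic.
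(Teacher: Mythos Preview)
Your reduction and the choice $k = 2^{\Theta(1/\eps)}$ match the paper exactly, and you correctly identify that the naive NO-case bound (``positions $>k$ contribute strictly less'') is not quantitative enough. But your proposed fix is off target. The paper does not need any robust Max-$k$-Cover hardness; instead it exploits the \emph{regularity} of Manurangsi's hard instances, i.e., every set has size exactly $M/k$. Regularity is what makes the whole calculation go through: it forces $t_i \ge \lceil ik/M\rceil$ for the $i$-th element covered under \emph{any} permutation, so the YES-case optimum is \emph{exactly} $\OPT^* = \tfrac{M}{k}\sum_{j=1}^{k}\tfrac{1}{\log(j+1)} = \Theta(M/\log k)$ (the same value for every regular YES instance, and simultaneously an upper bound on every regular instance). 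In the NO case one combines the coverage bound with regularity to obtain $t_{0.9M}\ge 1.1k$, and a short computation then gives $\DCG(\pi)\le \OPT^* - \Theta(M/\log^{2}k)$. The multiplicative gap is therefore $1+\Theta(1/\log k)=1+\Theta(\eps)$, obtained directly from the cited theorem with no new hardness machinery.

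Without regularity your sketch does not actually close: your YES lower bound is only $M/\log(k+1)$, whereas with no set-size cap a NO instance could place one huge set first and achieve DCG close to $(1-1/e+\delta)M$, which is far larger. Robust hardness at budgets $k\cdot f(k)$ does not help here, because it constrains how many elements are eventually covered, not how \emph{early} the covered ones are covered. This is also the source of the tension in your write-up between a ``$(1+\eps)$-scale'' gap and ``a positive constant $\gamma_0$ independent of $k$'': the correct target is the former, and regularity delivers it.
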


\subsection{Max-Sum Dispersion}

The second problem we consider is the so-called \emph{Max-Sum Dispersion} problem where we are given a metric space $(U, d)$ where $|U| = n$ and an integer $p \geq 2$. The goal is to select $S \subseteq U$ of size $p$ that maximizes
\begin{align*}
\disp(S) := \sum_{\{u, v\} \subseteq S} d(u, v).
\end{align*}

Roughly speaking, if the metric determines how different the items are, then our goal is to pick items that are ``as diverse as possible'' according to the $\disp$ objective.

The Max-Sum Dispersion problem is a classic problem that has been studied since the 80s~\cite{MoonC84,Kuby87,RaviRT94,HassinRT97,BorodinJLY17}.
Previous works have given 0.5-approximation algorithm for the problem in polynomial time~\cite{HassinRT97,BorodinJLY17}. We observe that the known NP-hardness reduction, together with newer hardness of approximation results for the Densest $k$-Subgraph problem with perfect completeness, yields strong lower bounds for the problem. (\Cref{app:hardness-from-dks}.) For example, if we assume the Strongish Planted Clique Hypothesis~\cite{ManurangsiRS21}, then no $(0.5 + \eps)$-approximation algorithm is possible in $n^{o(\log n)}$ time. In other words, to achieve an improvement over the known approximation ratio, the algorithm must run in $n^{\Omega(\log n)}$ time. Complementing this, we provide a quasipolynomial-time approximation scheme that runs in time $n^{O_\eps(\log n)}$:
\begin{theorem} \label{thm:qptas-dispersion}
There is a QPTAS for Max-Sum Dispersion that runs in time $n^{O(\log n / \eps^4)}$.
\end{theorem}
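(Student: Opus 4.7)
My plan is a sample-and-enumerate approach in the spirit of classical QPTAS algorithms for dense or metric quadratic optimization problems (such as de la Vega's PTAS for metric Max-Cut). The core idea is that a random subset $R \subseteq S^*$ of size $r = \Theta(\log n / \eps^4)$ leaks enough information about the distance profile of $S^*$ that, after enumerating all such $R$, the residual selection problem becomes essentially one-dimensional.

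Concretely, let $S^*$ be an optimal solution and let $R$ be a uniformly random size-$r$ subset of $S^*$. For each $v \in U$, the empirical estimator
\[
\hat{w}(v) \;:=\; \tfrac{p}{r} \sum_{u \in R} d(v,u)
\]
is unbiased for $w(v,S^*) := \sum_{u \in S^*} d(v,u)$. Bernstein's inequality (exploiting the variance bound $\mathrm{Var}_{u \in S^*}[d(v,u)] \leq w(v,S^*)\cdot \max_{u}d(v,u) / p$) together with a union bound over $v \in U$ shows that with high probability $\hat{w}(v)$ approximates $w(v,S^*)$ uniformly, with error small compared to $\disp(S^*)/p$. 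The algorithm then enumerates all $\binom{n}{r} = n^{O(\log n/\eps^4)}$ candidate sets $R \subseteq U$ of size $r$; for each candidate it extends $R$ to a set $S \supseteq R$ of size $p$ by appending the top $p - r$ elements of $U \setminus R$ under the score $\hat{w}$ (equivalently, under $d(\cdot, R)$), and finally returns the best such $S$ under the true $\disp$ objective.

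The main analytic challenge is translating per-vertex concentration of the \emph{linear} surrogate $\hat{w}$ into a $(1-\eps)$-approximation of the \emph{quadratic} objective $\disp$. Decomposing
\[
\disp(R \cup T) \;=\; \disp(R) \;+\; \sum_{v \in T} d(v,R) \;+\; \disp(T),
\]
greedy optimality handles the middle (linear) term, while $\disp(T)$ must be compared to $\disp(S^* \setminus R)$ via a swap argument. Each swap---replacing an outlier $v \in T \setminus S^*$ by a missing element $u \in (S^* \setminus R) \setminus T$---contributes an error governed by the concentration of $\hat{w}$, and a structural optimality inequality ($w(v,S^*) \leq w(u,S^*) + d(v,u)$ for $v \notin S^*$ and $u \in S^*$, derived from the fact that swapping $u$ for $v$ does not improve $\disp(S^*)$) keeps this controlled. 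The $\eps^{-4}$ in the sample size, rather than the naive $\eps^{-2}$, appears because up to $p$ such swaps must each be controlled to within $\eps^2 \cdot \disp(S^*)/p$; I expect the technically delicate step to be making this accumulated-error bound tight. Vertices $v$ with abnormally large maximum distance to $S^*$ are handled by a separate case using the metric inequality $\disp(S^*) \geq \tfrac{p}{2} \max_{u,u' \in S^*} d(u,u')$, which follows from the triangle inequality. The enumeration over $n^{O(\log n / \eps^4)}$ choices of $R$, combined with a polynomial-time subroutine for each, yields the claimed running time.
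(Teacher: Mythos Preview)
Your approach diverges from the paper's, and there is a genuine gap in the step where you control the quadratic term $\disp(T)$ via swaps. The greedy extension maximizes only the linear surrogate $\sum_{v\in T} d(v,R)$; your swap argument does not recover $\disp(T)$ because it conflates distances to $S^*$ with distances to the current hybrid set.

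Concretely: let $S^*=\{u_1,\dots,u_p\}$ be a unit simplex (all pairwise distances $1$), and adjoin $m\ge p$ points $v_1,\dots,v_m$ with $d(v_j,u_1)=\delta$, $d(v_j,u_i)=1+\delta$ for $i\ge 2$, and $d(v_j,v_k)=2\delta$, where $0<\delta<1/2$. This is a valid metric, and one checks $S^*$ is optimal with $\disp(S^*)=\binom{p}{2}$. Now $w(v_j,S^*)=(p-1)+p\delta > p-1 = w(u_i,S^*)$, and for \emph{every} $R\subseteq S^*$ of size $r$ (once $\delta>1/r$) one also has $\hat{w}(v_j)>\hat{w}(u_i)$. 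So your greedy returns $T=\{v_1,\dots,v_{p-r}\}$, and
\[
\disp(R\cup T)\ \le\ \binom{r}{2}+r(p-r)(1+\delta)+\binom{p-r}{2}\cdot 2\delta\ =\ O\bigl(rp+p^2\delta\bigr),
\]
an $o(1)$ fraction of $\disp(S^*)$ once $\delta=2/r$, say. The swap inequality $w(v,S^*)\le w(u,S^*)+d(v,u)$ you cite is correct but irrelevant: swapping $v_j$ out for $u_i$ changes $\disp(R\cup T)$ by $d(u_i,(R\cup T)\setminus\{v_j\})-d(v_j,(R\cup T)\setminus\{v_j\})$, which here is $\Theta(p)$ per swap---not an $\eps^2\disp(S^*)/p$ concentration error---so across $p-r$ swaps you lose $\Theta(\disp(S^*))$. (Separately, the diameter bound you invoke controls only $\max_{u,u'\in S^*}d(u,u')$; bounding $\max_{u\in S^*}d(v,u)$ for $v\notin S^*$ requires a further optimality argument you do not supply.)

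The paper takes a different route. It proves a structural lemma: for every $v\notin S^{\OPT}$ one has $d(u^{\min},v)\le 16\,\disp(S^{\OPT})/(p(p-1))$, where $u^{\min}$ minimizes $\disp(\cdot,S^{\OPT})$ over $S^{\OPT}$. Guessing $u^{\min}$ and the farthest $v\notin S^{\OPT}$, every point outside a ball of radius $\Delta=d(u^{\min},v)$ around $u^{\min}$ is known to lie in $S^{\OPT}$; the residual choice inside a slightly larger ball is then reduced to \dks\ with weights in $[0,1]$ and optimum density $\Omega(\eps)$, where Barman's \emph{additive} QPTAS suffices. Crucially, Barman's algorithm does not greedily extend a sample: it enumerates a logarithmic sample but then constrains the whole degree profile $\ow(T)$ of the output to match the sample's, and it is exactly this two-sided constraint that forces the quadratic term to be large---the piece your one-sided greedy cannot supply.
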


\subsection{Max-Sum Diversification}

Finally, we consider a generalization of Max-Sum Dispersion where, in addition to the metric space $(U, d)$, we are now also given a monotone set function $f$ (which we can access via a value oracle) and the goal is to select a set $S \subseteq U$ of size $p$ that maximizes $$\dive(S) := \disp(S) + f(S).$$ This problem is referred to as \emph{Max-Sum Diversification}.

The Max-Sum Diversification problem is more expressive than Max-Sum Dispersion. For example, the value $f(S)$ in the objective may be used to encode how relevant the selected set $S$ is to the given query, in addition to the diversity objective expressed by $\disp(S)$. 

Borodin et al.~\cite{BorodinJLY17} gave a 0.5-approximation algorithm for the problem when $f$ is a monotone submodular function. Since Max-Sum Diversification is a generalization of Max-Sum Dispersion, our aforementioned lower bounds also imply that improving on this 0.5 factor requires at least $n^{\Omega(\log n)}$ time. Furthermore, submodular Max-Sum Diversification is also a generalization of maximizing monotone submodular function subject to a cardinality constraint. For this problem, an $(1 - 1/e)$-approximation algorithm is known and it is also known that achieving better than this ratio is NP-hard~\cite{Feige98}. Therefore, it is impossible to achieve a better-than-$(1 - 1/e)$ approximation even in (randomized) quasi-polynomial time, assuming NP $\nsubseteq RTIME(n^{O(\log n)})$. Here we manage to provide such a tight quasi-polynomial time approximation algorithm:

\begin{theorem} \label{thm:max-sum-div-apx}
For any $\eps > 0$, there exists a randomized $n^{O(\log n / \eps^4)}$-time $(1 - 1/e - \eps)$-approximation algorithm for submodular Max-Sum Diversification.
\end{theorem}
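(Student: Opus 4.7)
The plan is to reduce submodular Max-Sum Diversification to monotone submodular maximization under a cardinality constraint, for which the greedy algorithm achieves the optimal $(1 - 1/e)$-approximation. The reduction replaces the (non-submodular) dispersion term by a modular surrogate built from a small, random subset $T$ of the unknown optimum $S^*$, which is found by enumeration over all $t$-subsets of $U$ for $t = \Theta(\log n / \eps^c)$; there are at most $n^{O(\log n / \eps^c)}$ such $T$. Since the sum of a non-negative modular function and a monotone submodular function is itself monotone submodular, the greedy subroutine applies directly to the surrogate.

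In detail, for each candidate $T$ we define weights $w_T(u) := \frac{p-1}{t} \sum_{v \in T} d(u, v)$ and the surrogate $\hat{\dive}_T(S) := \frac{1}{2}\sum_{u \in S} w_T(u) + f(S)$. Run the greedy algorithm on $\hat{\dive}_T$, initialized at $S := T$ and iteratively appending the element $u \in U \setminus S$ with the largest marginal until $|S| = p$; call the output $S_T$, and return the $S_T$ maximizing $\dive$ across all enumerated $T$. The analysis focuses on the ``lucky'' guesses in which $T$ happens to be a uniformly random $t$-subset of $S^*$: the scaling of $w_T$ is chosen so that $\E[\hat{\dive}_T(S^*)] \approx \dive(S^*)$, and a Chernoff concentration argument gives $\hat{\dive}_T(S^*) \geq (1 - \eps) \dive(S^*)$ with high probability. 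Combined with the greedy guarantee $\hat{\dive}_T(S_T) \geq (1 - 1/e)\hat{\dive}_T(S^*)$, this yields $\hat{\dive}_T(S_T) \geq (1 - 1/e)(1 - \eps)\dive(S^*)$. It then remains to transfer this to the original objective by showing $\dive(S_T) \geq \hat{\dive}_T(S_T) - O(\eps)\dive(S^*)$, whence $\dive(S_T) \geq (1 - 1/e - O(\eps))\dive(S^*)$; a final rescaling of $\eps$ delivers the claimed factor.

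The main obstacle is the transfer step $\dive(S_T) \geq \hat{\dive}_T(S_T) - O(\eps)\dive(S^*)$. Because $S_T$ itself depends on $T$, a direct Chernoff bound on a single fixed set does not apply, and a naive union bound over all $\binom{n}{p}$ candidate outputs is too weak. The plan is to exploit the fact that $S_T$ is determined by the sequence of greedy $\argmax$ steps, each over $n$ elements, so $S_T$ lies in a family of at most $n^{O(t)}$ possible outputs; a union bound over this much smaller family, together with the triangle inequality and the constraint $T \subseteq S_T$ (needed to relate the bipartite quantity $\sum_{v \in T} \sum_{u \in S_T} d(u, v)$ appearing in $\hat{\dive}_T$ to the true dispersion $\disp(S_T)$), should close the argument. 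Balancing the concentration constants then fixes $c = 4$ and yields the total running time $n^{O(\log n / \eps^4)}$.
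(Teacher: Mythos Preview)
Your approach is quite different from the paper's, and the transfer step you flag as ``the main obstacle'' is in fact a genuine gap that I do not see how to close.

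The surrogate $\hat{\dive}_T(S) - f(S) = \tfrac{p-1}{2t}\sum_{u\in S}\sum_{v\in T} d(u,v)$ is (after concentration for $T$ a random $t$-subset of $S^*$) essentially $\tfrac{p-1}{2p}\disp(S,S^*)$, \emph{not} $\disp(S)$. These two quantities are unrelated for $S\neq S^*$, and neither the inclusion $T\subseteq S_T$ nor the triangle inequality can repair this: from $d(u,u')\le d(u,v)+d(v,u')$ one only gets $\disp(S_T)\le \tfrac{p}{t}\,\disp(T,S_T)$, which is the wrong direction and off by a factor $p/t$. Concretely, take $f\equiv 0$ and three clusters $A,B,C$ with inter-cluster distance $1$, intra-cluster distance $0$ except inside $A$ where it is a tiny $\epsilon''>0$. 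Then $S^*$ takes $p/3$ from each cluster and $\disp(S^*)=\Theta(p^2)$; a random $T\subseteq S^*$ has $\approx t/3$ in each cluster; every $u$ has $w_T(u)\approx \tfrac{2(p-1)}{3}$, with $A$-points strictly largest (by $\Theta(\epsilon'')$). Greedy therefore fills $S_T$ with $A$-points, giving $\disp(S_T)=\Theta(tp)$ while $\hat{\dive}_T(S_T)=\Theta(p^2)$. Thus $\hat{\dive}_T(S_T)-\dive(S_T)=\Theta(p^2)=\Theta(\dive(S^*))$, not $O(\eps)\dive(S^*)$, and no union-bound refinement helps since the failure is deterministic once $w_T$ concentrates.

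What the paper does instead is a two-stage reduction. First a structural lemma (guess $u^{\min}\in S^{\OPT}$ and the farthest $v\notin S^{\OPT}$) shows $\dive(S^{\OPT})\ge \tfrac{p(p-1)}{16}\,d(u^{\min},v)$; this localizes the problem to a ball where all pairwise distances are within a $\Theta(1/\eps)$ factor, so an \emph{additive} guarantee on the normalized \dks\ objective becomes a \emph{multiplicative} one. Second, for the resulting ``Submodular \dks'' instance, the paper does not linearize: it randomly partitions the ground set into $s=\Theta(\eps^2 k/\log n)$ parts, enumerates all $\Theta(\log n/\eps^2)$-size subsets $U_i$ of each part, and crucially keeps only those $U_i$ whose empirical degree vector matches a guessed sample $Q$ (the analogue of your $T$) on \emph{both} sides: $\|\ow(U_i)-\ow(Q)\|_\infty$ small and $\oind(U_i)^T\ow(Q)\approx \oind(Q)^T\ow(Q)$. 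This two-sided constraint is exactly what prevents the greedy from drifting into a cluster far from $S^*$; it forces every chosen block to look like a random slice of $S^*$, so the density is preserved automatically, and only the submodular part needs the $(1-1/e)$ matroid algorithm. Your surrogate enforces only the one-sided condition ``$S_T$ has large $\disp(\cdot,S^*)$'', which is why the counter-example above goes through.
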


We remark that an interesting special case of submodular Max-Sum Diversification is when $f$ is linear, i.e. $f(S) = \sum_{u \in S} f(u)$. In this case, Gollapudi and Sharma~\cite{GollapudiS09} provided an approximation-preserving reduction from the problem to the Max-Sum Dispersion. Therefore, our QPTAS for the latter (\Cref{thm:qptas-dispersion}) also yields a QPTAS for this special case of Max-Sum Dispersion.

\section{Preliminaries}
For a natural number $n$, we use $[n]$ to denote $\{1, \dots, n\}$. We say that a randomized algorithm for a maximization problem is an $\alpha$-approximation if the expected objective of the output solution is at least $\alpha$ times the optimum; note that we can easily get a high-probability bound with approximation guarantee arbitrarily close to $\alpha$ by repeating the algorithm multiple times and pick the best solution.

\subsection{Concentration Inequalities}

For our randomized approximation algorithms, we will need some standard concentration inequalities. First, we will use the following version of Chernoff bound which gives a tail bound on the sum of i.i.d. random variables. (See e.g.~\cite{MitzenmacherU-book} for a proof.)

\begin{lemma}[Chernoff bound] \label{lem:chernoff}
Let $X_1, \dots, X_r \in [0, 1]$ be independent random variables, $S := X_1 + \cdots + X_r$ and $\mu := \E[S]$. Then, for any $\delta \in [0, 1]$, we have
\begin{align*}
\Pr[|S - \mu| > \delta \mu] \leq 2 \exp\left(-\frac{\delta^2 \mu}{3}\right).
\end{align*}
Furthermore, for any $\delta \geq 0$, we have
\begin{align*}
\Pr[S > (1 + \delta)\mu] \leq \exp\left(- \frac{\delta^2 \mu}{2 + \delta}\right).
\end{align*}
\end{lemma}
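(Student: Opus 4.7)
The statement is the standard multiplicative Chernoff bound for independent $[0,1]$-valued random variables, so the plan is to reproduce the classical moment-generating-function (MGF) argument rather than to invent anything new. First, I would handle the one-sided upper tail bound, since both the lower tail and the symmetric $\delta\in[0,1]$ bound reduce to it. For any $t>0$, Markov's inequality applied to $e^{tS}$ gives
\begin{align*}
\Pr[S>(1+\delta)\mu]\;=\;\Pr\!\left[e^{tS}>e^{t(1+\delta)\mu}\right]\;\leq\;\frac{\E[e^{tS}]}{e^{t(1+\delta)\mu}}\;=\;\frac{\prod_{i=1}^r \E[e^{tX_i}]}{e^{t(1+\delta)\mu}},
\end{align*}
where the last equality uses independence. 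I would then bound each factor via convexity: since $X_i\in[0,1]$, $e^{tX_i}\leq 1+X_i(e^t-1)$, so $\E[e^{tX_i}]\leq 1+\E[X_i](e^t-1)\leq \exp(\E[X_i](e^t-1))$, and multiplying gives $\E[e^{tS}]\leq \exp(\mu(e^t-1))$.

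Plugging in and choosing $t=\ln(1+\delta)$ (the standard optimizer) yields the ``raw'' Chernoff bound $\Pr[S>(1+\delta)\mu]\leq \bigl(e^{\delta}/(1+\delta)^{1+\delta}\bigr)^{\mu}$. The last step, and the only even slightly delicate one, is massaging this into the clean form stated in the lemma. Taking logarithms, one needs the elementary inequality $(1+\delta)\ln(1+\delta)-\delta\;\geq\;\delta^2/(2+\delta)$ for all $\delta\geq 0$, which I would verify by checking that both sides agree at $\delta=0$ and comparing derivatives (equivalently, using the integral representation $\ln(1+\delta)=\int_0^\delta \frac{du}{1+u}$). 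This gives the second displayed inequality $\Pr[S>(1+\delta)\mu]\leq \exp(-\delta^2\mu/(2+\delta))$.

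For the two-sided bound in the regime $\delta\in[0,1]$, I would repeat the MGF argument with $t<0$ to obtain the analogous lower-tail estimate $\Pr[S<(1-\delta)\mu]\leq \exp(-\delta^2\mu/2)$ (using the auxiliary inequality $(1-\delta)\ln(1-\delta)+\delta\geq \delta^2/2$), then observe that for $\delta\in[0,1]$ we have $\delta^2/(2+\delta)\geq \delta^2/3$ and $\delta^2/2\geq \delta^2/3$, so each tail is at most $\exp(-\delta^2\mu/3)$ and a union bound yields the factor of $2$.

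I do not expect any real obstacle: the entire argument is textbook, and the only thing requiring care is the choice of the elementary inequalities used to pass from the exact MGF bound to the clean exponential form. Since the lemma is cited to Mitzenmacher–Upfal, in the final write-up I would simply reference that source rather than reproduce the calculation, as the authors have done.
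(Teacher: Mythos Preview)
Your proposal is correct and matches the paper's approach: the paper does not prove this lemma at all but simply cites Mitzenmacher--Upfal, exactly as you note in your final paragraph. The MGF sketch you give is the standard textbook argument and is fine as a backup, but there is nothing to compare against.
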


It will also be convenient to have a concentration of sums of random variables that are drawn without replacement from a given set. For this, we will use (a without-replacement version of) the Hoeffding's inequality, stated below. (See e.g.~\cite{bardenet2015concentration}.)

\begin{lemma}[Hoeffding's inequality] \label{lem:hoeffding}
Let $X_1, \dots, X_r$ be random variables drawn without replacement from a multiset $\cX \subseteq [0, 1]$, $A := \frac{1}{r}\left(X_1 + \cdots + X_r\right)$ and $\mu := \E[A]$. Then, for any $\delta \in [0, 1]$, we have
\begin{align*}
\Pr[|A - \mu| > \delta] \leq 2 \exp\left(-2\delta^2 r\right).
\end{align*}
\end{lemma}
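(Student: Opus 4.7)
The plan is to reduce the without-replacement setting to the i.i.d.\ (with-replacement) setting, and then apply the textbook moment-generating-function (Chernoff) argument that proves the classical Hoeffding inequality.

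First, I would invoke Hoeffding's convex-domination lemma (Hoeffding, 1963). Let $Y_1,\dots,Y_r$ be drawn i.i.d.\ uniformly (with replacement) from $\cX$. The lemma states that for every convex $\phi:\R\to\R$,
$$\E\!\left[\phi\!\left(\sum_{i=1}^{r} X_i\right)\right] \;\le\; \E\!\left[\phi\!\left(\sum_{i=1}^{r} Y_i\right)\right].$$
The proof of this inequality writes the i.i.d.\ sum, conditioned on its multiset of sampled values, as a mixture that majorizes (in convex order) the without-replacement sum; Jensen's inequality then closes the argument. This is precisely the step where the absence of independence is exchanged for a concession in distribution, and it is the substantive combinatorial content of the whole proof.

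With the reduction in hand, the rest is the Chernoff recipe. Apply the domination with $\phi(x)=e^{tx}$ for $t\ge 0$. Since each $Y_i\in[0,1]$ has mean $\mu$, Hoeffding's lemma for bounded random variables bounds the centered MGF by $\E[\exp(t(Y_i-\mu))]\le\exp(t^2/8)$, and independence of the $Y_i$ together with the convex domination give
$$\E\!\left[\exp\!\left(t\sum_{i=1}^{r}(X_i-\mu)\right)\right] \;\le\; \exp(rt^2/8).$$
Markov's inequality then yields $\Pr[A-\mu>\delta]\le \exp(-tr\delta+rt^2/8)$, and optimizing at $t=4\delta$ gives $\exp(-2\delta^2 r)$. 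Running the same argument on $-X_i$ controls the lower tail, and a union bound over the two tails produces the stated $2\exp(-2\delta^2 r)$.

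The only real obstacle is the convex-domination step; everything downstream is the standard derivation of Hoeffding's inequality. An alternative route that avoids the 1963 reduction is to view the without-replacement sum as a Doob martingale on the random permutation defining the sample and apply Azuma's inequality directly, but Hoeffding's reduction is slightly cleaner and recovers exactly the $2\exp(-2\delta^2 r)$ constant that appears in the statement.
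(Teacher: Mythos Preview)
Your argument is correct: Hoeffding's 1963 convex-domination lemma reduces the without-replacement MGF to the i.i.d.\ one, after which Hoeffding's lemma for $[0,1]$-valued variables, Markov's inequality, and the optimization $t=4\delta$ give exactly $\exp(-2\delta^2 r)$ per tail, hence $2\exp(-2\delta^2 r)$ after a union bound. The alternative Azuma/martingale route you mention would also work and yield the same constant.

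However, there is nothing to compare against in the paper: the authors do not prove \Cref{lem:hoeffding} at all. They state it as a standard concentration inequality and simply point the reader to the literature (``See e.g.~\cite{bardenet2015concentration}''). So your write-up is not reproducing or diverging from the paper's proof; it is supplying a proof where the paper deliberately chose to cite one.
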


\subsection{Densest $k$-Subgraph}

For both our Max-Sum Dispersion and Max-Sum Diversification problems, we will use as a subroutine algorithms for (variants of) the \emph{Densest $k$-Subgraph (\dks)} problem. In \dks, we are given a set $V$ of nodes, weights $w: \binom{V}{2} \to [0, 1]$ and an integer $k$, the goal is to find a subset $T \subseteq V$ with $|T| = k$ that maximizes $\den(T) := \frac{1}{|T|(|T|-1)/2} \sum_{\{u, v\} \subseteq T} w(\{u, v\})$. An {\em additive QPTAS} is an algorithm running in quasipolynomial time for any fixed $\eps > 0$ such that its output $T$ satisfies $\den(T) \geq \OPT - \eps$; Barman~\cite{Barman18} gave such an algorithm for \dks. 

We will in fact use a slightly generalized version of the problem where a subset $I \subseteq V$ of vertices is given as an input and these vertices must be picked in the solution $T$ (i.e. $I \subseteq T$). To avoid cumbersomeness, we also refer to this generalized version as \dks. It is not hard to see\footnote{In fact, in \Cref{subsec:submodular-dks}, we also give a more general algorithm than the one stated in~\Cref{thm:qptas-dks} which can also handle an additional monotone submodular function.} that Barman's algorithm~\cite{Barman18} extends easily to this setting:
\begin{theorem} \label{thm:qptas-dks}
There is an additive QPTAS for \dks\ that runs in time $n^{O(\log n / \eps^2)}$.
\end{theorem}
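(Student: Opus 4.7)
The plan is to reduce the generalized \dks\ (with mandatory set $I \subseteq T$) to a variant of standard \dks\ on $V \setminus I$, and to observe that Barman's additive QPTAS~\cite{Barman18} handles the variant with only cosmetic modifications. Concretely, since
\begin{align*}
\sum_{\{u,v\} \subseteq T} w(\{u,v\}) \;=\; \sum_{\{u,v\} \subseteq I} w(\{u,v\}) \;+\; \sum_{v \in T \setminus I} w(v, I) \;+\; \sum_{\{u,v\} \subseteq T \setminus I} w(\{u,v\}),
\end{align*}
and the first sum is a fixed constant depending only on $I$, maximizing $\den(T)$ over feasible $T \supseteq I$ with $|T| = k$ is equivalent to maximizing
\begin{align*}
\Phi(T') \;:=\; \sum_{\{u,v\} \subseteq T'} w(\{u,v\}) \;+\; \sum_{v \in T'} \phi(v)
\end{align*}
over $T' \subseteq V \setminus I$ with $|T'| = k - |I|$, where $\phi(v) := w(v, I)$ is known exactly from the input.

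I would then run Barman's algorithm on the induced weighted graph on $V \setminus I$, with the single change that each intermediate estimator of the objective is augmented by the deterministic linear term $\sum_{v \in T'} \phi(v)$. Because the $\phi$-bonuses are known exactly and linear in the indicator of $T'$, they pass through Barman's sampling and aggregation steps unchanged; in particular the concentration bounds that drive the analysis of~\cite{Barman18} — which concern only the quadratic edge-weight term — remain valid verbatim. Noting that $|\phi(v)| \le |I| \le k$ so the bonus term has magnitude at most $k^2$, of the same order as the edge term, Barman's analysis yields $\Phi(T') \ge \max \Phi - O(\eps k^2)$ in time $n^{O(\log n / \eps^2)}$, which after dividing by $\binom{k}{2}$ gives the stated additive $\eps$ guarantee for $\den(I \cup T')$.

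The one edge case worth flagging is $k - |I| < r := \Theta(\log n / \eps^2)$, where Barman's sampling size exceeds the budget of free vertices to select; but then enumerating all $\binom{n}{k - |I|} \le n^r = n^{O(\log n / \eps^2)}$ candidate sets $T \setminus I \subseteq V \setminus I$ directly is exact and fits within the allotted time. The main place the adaptation requires opening the proof of~\cite{Barman18} rather than invoking it as a black box is verifying that its sampling/aggregation routine commutes with an added deterministic linear vertex-weight term; as argued above this is immediate because the term is linear and exactly computable, so no other part of the algorithm or its analysis changes.
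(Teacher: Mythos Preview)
Your proposal is correct and in fact supplies more detail than the paper does for this particular statement: the paper only asserts that ``Barman's algorithm extends easily'' and points (in a footnote) to the proof of the more general Submodular \dks\ result, \Cref{thm:submodular-dks}, whose specialization $h\equiv 0$ yields \Cref{thm:qptas-dks}.

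That said, the route taken there is slightly different from yours. Rather than splitting off the cross-term $\sum_{v\in T\setminus I} w(v,I)$ as an explicit linear bonus $\phi$ and folding it into Barman's objective, the paper works with the full weight matrix $\bW$ on all of $V$ and samples from $T^{\OPT}\setminus I$; the concentration constraint $\|\ow(U_i)-\ow(Q)\|_\infty\le O(\gamma')$ is then enforced over \emph{every} coordinate $v\in V$, including $v\in I$, so that $w(I,T')=|I|\cdot|T'|\cdot\oind(I)^T\ow(T')$ is controlled automatically by the very same constraint that controls the quadratic part. Your decomposition-plus-linear-bonus argument is more modular and closer to a black-box call to Barman, which is a virtue when only the density matters; the paper's formulation is chosen so that the submodular term can be layered on top without touching the density analysis. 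Both approaches are valid, and your handling of the small-budget edge case $k-|I|<\Theta(\log n/\eps^2)$ by brute force is exactly what is needed.
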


\dks\ is a classic problem in approximation algorithms literature, and many approximation algorithms~\cite{FS97,SW98,FL01,FPK01,AHI02,GL09,BCCFV10,Barman18} and hardness results~\cite{Feige02,Kho06,RS10,alon2011inapproximability,BCVGZ12,BravermanKRW17,Manurangsi17,ChalermsookCKLM20} have been proved over the years. Most of these works focus on \emph{multiplicative} approximation; the best known polynomial-time algorithm in this setting has an approximation ratio of $n^{1/4 + \eps}$ for any constant $\eps > 0$~\cite{BCCFV10} and there are evidences that achieving subpolynomial ratio in polynomial time is unlikely~\cite{Manurangsi17,BCVGZ12,CMMV17}. As for \emph{additive} approximation, it is known that an approximation scheme that runs in time $n^{\tilde{o}(\log n)}$ would break the exponential time hypothesis (ETH)~\cite{BravermanKRW17}; therefore, the running time in~\Cref{thm:qptas-dks} (in terms of $n$) is tight up to $\poly\log \log n$ factor in the exponent. We provide additional discussions on related results in~\Cref{app:hardness-from-dks}.

\subsection{Submodular Maximization over a Matroid Constraint}

For our approximation algorithm for Max-Sum Diversification, we will also need an approximation algorithm for \emph{monotone submodular maximization under a matroid constraint}. In this problem, we are given a monotone submodular set function $f: 2^X \to \R_{\geq 0}$ over a ground set $X$ together with a matroid $\cM = (X, \cI)$. The function $f$ is given via a value oracle and $\cM$ can be accessed via a membership oracle (which answers questions of the form ``does $S$ belong to $\cI$?''). The goal is to find $S \in \cI$ that maximizes $f(S)$. C{\u{a}}linescu et al. gave a randomized algorithm with approximation ratio $(1 - 1/e)$ for the problem, which we will use in our algorithm.

\begin{theorem}[\cite{CalinescuCPV11}] \label{thm:submodular-matroid}
There exists a randomized polynomial-time $(1 - 1/e)$-approximation algorithm for maximizing a montone submodular function over a matroid constraint.
\end{theorem}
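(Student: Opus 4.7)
The plan is to follow the continuous-greedy-plus-rounding framework of Vondr\'ak. The first ingredient is the \emph{multilinear extension} $F : [0,1]^X \to \R_{\geq 0}$ of $f$, defined by $F(\bx) = \E_{S \sim \bx}[f(S)]$ where $S$ is a random set that includes each element $i$ independently with probability $x_i$. Submodularity of $f$ translates into the analytic property that $F$ is concave along every nonnegative direction, monotonicity gives $\nabla F \geq 0$, and $F$ extends $f$ in the sense that $F(\ind_S) = f(S)$. I would estimate $F$ and its partial derivatives with small additive error by Monte Carlo sampling through the value oracle for $f$; a standard Chernoff-bound argument (Lemma~\ref{lem:chernoff}) shows that $\poly(n/\eps)$ samples suffice.

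Next, I would run the \emph{continuous greedy} process to optimize $F$ over the matroid polytope $P(\cM) = \mathrm{conv}\{\ind_S : S \in \cI\}$. Starting from $\bx(0) = \mathbf{0}$, in each infinitesimal step I would compute $\bv(t) = \argmax_{\bv \in P(\cM)} \langle \nabla F(\bx(t)), \bv \rangle$ and update $\bx(t + dt) = \bx(t) + \bv(t)\, dt$ for $t \in [0,1]$; the linear subproblem over $P(\cM)$ is solved exactly by the matroid greedy algorithm. The analytic heart of the argument, which I expect to be the main technical hurdle, is to establish the differential inequality $\tfrac{d}{dt} F(\bx(t)) \geq f(\OPT) - F(\bx(t))$ for any optimal integral solution $\OPT$: the chain rule together with the optimality of $\bv(t)$ gives $\tfrac{d}{dt} F(\bx(t)) \geq \langle \nabla F(\bx(t)), \ind_{\OPT}\rangle$ since $\ind_{\OPT} \in P(\cM)$, and a submodularity/monotonicity calculation yields $\langle \nabla F(\bx(t)), \ind_{\OPT}\rangle \geq f(\OPT) - F(\bx(t))$. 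Integrating this ODE gives $F(\bx(1)) \geq (1 - 1/e)\, f(\OPT)$.

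Finally, I would round $\bx(1) \in P(\cM)$ to an integral $S^* \in \cI$ without decreasing the expected value via \emph{pipage rounding}: repeatedly identify a pair of fractional coordinates $x_i, x_j$ lying in a common tight matroid rank constraint, and shift mass along $\mathbf{e}_i - \mathbf{e}_j$ as far as feasibility allows; submodularity makes $F$ concave along this direction, so the better endpoint has value at least $F(\bx)$, while at least one coordinate becomes integral. Iterating drives the solution to an integral vertex $\ind_{S^*}$ of $P(\cM)$ with $\E[f(S^*)] \geq F(\bx(1)) \geq (1 - 1/e)\, f(\OPT)$. Replacing continuous time by $\poly(1/\eps)$ discrete steps and controlling the sampling and discretization errors, each contributing at most an $O(\eps)$ loss that can be absorbed by rescaling, yields the desired randomized polynomial-time $(1 - 1/e)$-approximation.
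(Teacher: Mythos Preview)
The paper does not prove this theorem at all; it is stated as a black-box result cited from~\cite{CalinescuCPV11} and used only through its guarantee in the proof of Theorem~\ref{thm:submodular-dks}. Your sketch is a faithful outline of the continuous-greedy plus pipage/swap-rounding argument of that reference, so there is nothing to compare against in the present paper --- you are simply reproducing (correctly, at the level of a sketch) the proof of the cited work.
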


\section{Diversified Search Ranking}

In this section, we consider the diversified search ranking question as proposed in~\cite{BansalJKN10} and prove our upper and lower bounds (\Cref{thm:main-ptas,thm:dcg-lb}).

\subsection{Polynomial-time Approximation Scheme}

We will start by presenting our PTAS. At a high-level, our PTAS is similar to that of Bansal et al.'s: our algorithm use bruteforce to try every possible values of $\pi(1), \dots, \pi(\exp(\tilde{O}(1/\epsilon)))$. Once these are fixed, we solve the remaining problem using linear programming (LP). We use the same LP as Bansal et al., except with a slightly more refined rounding procedure, which allows us to achieve a better approximation guarantee.

The remainder of this section is organized as follows. In \Cref{subsec:lp-rounding}, we present our LP rounding algorithm and its guarantees. Then, we show how to use it to yield our PTAS in \Cref{subsec:ptas-ranking}.

\subsubsection{Improved LP Rounding}
\label{subsec:lp-rounding}

For convenience in the analysis below, let us also define a more generic objective function where $\frac{1}{\log(t_\pi(S)) + 1}$ in~\Cref{eq:dcg-def} can be replaced by any non-increasing function $f: [n] \to (0, 1]$:
\begin{align*}
\DCG^f_{\cS, \bk}(\pi) := \sum_{S \in \cS} f(t_{\pi}(S)).
\end{align*}

The main result of this subsection is the following polynomial time LP rounding algorithm for the above general version of DCG:

\begin{lemma} \label{lem:rounding-final}
There exists an absolute constant $C$ such that for any $\alpha \in (0, 0.5)$ the following holds: there is a polynomial-time algorithm that computes a ranking with expected DCG at least $(1 - \alpha) \cdot \tau_{f, \alpha}$ times that of the optimum where
$$\tau_{f, \alpha} := \min_{t \in [n]} \frac{f\left(\frac{C \log(1/\alpha)}{\alpha} \cdot \frac{t}{f(t)}\right)}{f(t)}.$$
\end{lemma}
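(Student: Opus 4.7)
The plan is to combine a standard time-indexed LP relaxation with independent randomized rounding, followed by a per-set analysis carefully matched to the time-stretch $g(t) := \frac{C \log(1/\alpha)}{\alpha}\cdot\frac{t}{f(t)}$ appearing in the definition of $\tau_{f,\alpha}$.

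The LP relaxation has assignment variables $x_{i,t} \in [0,1]$ with $\sum_t x_{i,t} = 1$ and $\sum_i x_{i,t} = 1$, together with coverage variables $z_{S,t} \in [0,1]$ subject to $k_S \cdot z_{S,t} \leq \sum_{i \in S} \sum_{t' \leq t} x_{i,t'}$. Via the telescoping identity $f(t_\pi(S)) = \sum_t (f(t) - f(t+1))\, \mathbf{1}[t_\pi(S) \leq t]$ with the convention $f(n+1) := 0$, the objective becomes $\max \sum_{S}\sum_t (f(t) - f(t+1))\, z_{S,t}$, a valid relaxation whose optimum upper-bounds the optimal DCG. The rounding is the natural one: for each element $i$, independently draw $T_i \sim (x^*_{i,t})_t$, and output the ranking $\pi$ obtained by sorting elements by $T_i$ (ties broken arbitrarily).

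For the analysis, let $T^*_S$ be the fractional LP coverage time with distribution $\Pr[T^*_S = t] = z^*_{S,t} - z^*_{S,t-1}$; the LP contribution of $S$ then equals $\E[f(T^*_S)]$. The target pointwise claim is: for every $t \in [n]$, $\Pr[t_\pi(S) \leq g(t)] \geq (1-\alpha/2)\cdot z^*_{S,t}$. Given this claim, a summation-by-parts computation, together with the defining inequality $f(g(t)) \geq \tau_{f,\alpha} f(t)$, yields $\E[f(t_\pi(S))] \geq (1-\alpha/2)\,\tau_{f,\alpha}\,\E[f(T^*_S)]$; summing over $S$ and absorbing the slack into $\alpha$ proves the lemma. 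The pointwise claim reduces to two tail bounds (applied via \Cref{lem:chernoff}): (i) at least $k_S$ elements of $S$ satisfy $T_i \leq \tau$ (coverage), and (ii) at most $g(t)$ elements in total satisfy $T_i \leq \tau$ (rank-bound), for a threshold $\tau \in [t, g(t)]$. Event (ii) is easy since $\E[\#\{i : T_i \leq \tau\}] = \tau \leq g(t)$, with the slack $g(t)/\tau$ growing in $1/\alpha$ and $1/f(t)$, so Chernoff makes the failure probability arbitrarily small.

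The main obstacle is event (i), especially when $k_S$ or $z^*_{S,t}$ is small: since $\E[\#\{i \in S : T_i \leq t\}] = Y_S(t) \geq k_S z^*_{S,t}$, a direct Chernoff bound at threshold $t$ yields only a constant failure probability — already the $k_S = 1$ case is nontrivial, giving at best $1 - 1/e$. The fix, and the source of the $1/f(t)$ factor in $g(t)$, is to inflate the threshold from $t$ up to some $\tau \leq g(t)$ at which $Y_S(\tau) \geq k_S z^*_{S,t}\cdot \Theta(\log(1/\alpha)/\alpha)$; then Chernoff (or a direct Bernoulli computation when $k_S$ is small) drives the failure probability below $\alpha z^*_{S,t}/8$. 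Showing that such a $\tau$ always exists within $[t, g(t)]$ leverages the LP's mass-spreading structure — the total mass $\sum_{i \in S} x^*_{i,\leq t'}$ grows at rate at most $1$ in $t'$, so the stretch $g(t)/t = \frac{C \log(1/\alpha)}{\alpha f(t)}$ is essentially tight to make the argument go through. Managing the remaining corner cases (small $|S|$ relative to $k_S$, LP mass concentrated beyond $g(t)$, and tie-breaking) via case analysis is where the bulk of the technical work lies.
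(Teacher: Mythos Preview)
There is a genuine gap: the LP you write down is too weak, and with your rounding the pointwise claim $\Pr[t_\pi(S)\le g(t)]\ge (1-\alpha/2)\,z^*_{S,t}$ is false in general. Take $k_S=2$ and $S=\{a,b\}$ where the LP solution puts mass $1/2$ on some time $\le t$ and mass $1/2$ on time $n$ for each of $a,b$. Your single aggregate constraint $k_S\,z_{S,t}\le \sum_{i\in S}\sum_{t'\le t}x_{i,t'}$ permits $z^*_{S,t}=1/2$. Under your rounding each of $T_a,T_b$ is $\le t$ with probability $1/2$ and equals $n$ otherwise, so for every $\tau<n$ we have $\Pr[\,|\{i\in S:T_i\le\tau\}|\ge 2\,]=1/4$; since $g(t)<n$ for $n$ large, the claim would force $1/4\ge(1-\alpha/2)\cdot 1/2$, which fails for small $\alpha$. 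Your ``inflate the threshold'' step cannot rescue this: you appeal to the fact that $Y_S(\tau)=\sum_{i\in S}x^*_{i,\le\tau}$ grows at rate \emph{at most} $1$, but what you need is a \emph{lower} bound on its growth, and the LP provides none---$Y_S$ can stay flat all the way to $n-1$.

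The paper handles exactly this obstruction in two places. First, it uses the knapsack-strengthened LP of~\cite{BansalJKN10}, with the constraints $\sum_{e\in S\setminus A}\sum_{t'<t}x_{e,t'}\ge(k_S-|A|)\,y_{S,t}$ for \emph{every} $A\subseteq S$; the case $A=\{a\}$ already kills the example above, and in the analysis one applies the constraint with $A$ equal to the set of elements whose (boosted) inclusion probability has saturated at $1$. Second, the rounding is not ``sample $T_i$ and sort.'' At each dyadic scale $t_i=2^i$ it \emph{boosts} each element's inclusion probability to $p_{e,i}=\min\{1,\ z_{e,i}/(\gamma f(t_i))\}$ and concatenates the resulting sets $A_1,A_2,\ldots$. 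The factor $1/f(t)$ inside $g(t)$ arises from this boosting---it controls $|A_1|+\cdots+|A_i|\lesssim t_i/(\gamma f(t_i))$---not from inflating a threshold after sampling. With the knapsack constraint the boosted masses of the not-yet-saturated elements sum to at least $(\eta/\gamma)(k_S-|S_g|)\ge 2(k_S-|S_g|)$, and Chernoff then gives coverage probability $\ge 1-\exp(-\Theta(\eta/\gamma))$.
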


Informally speaking, the term $\tau_{f, \alpha}$ somewhat determines ``how fast $f$ increases''. In the next section, once we fix the first $u$ elements of the ranking, $f$ will become $f(t) := 1/\log(t + u)$ which is ``slowly growing'' when $u$ is sufficiently large. This allows us to ensure that the guarantee in \Cref{lem:rounding-final} yields an $(1 - O(\eps))$-approximation as desired.

\paragraph{LP Formulation.}
To prove \Cref{lem:rounding-final}, we use the same knapsack constraint-enhanced LP as in~\cite{BansalJKN10}, stated below. Note that the number of knapsack constraints can be super-polynomial. However, it is known that such an LP can be solved in polynomial time; see e.g.~\cite[Section 3.1]{BansalGK10} for more detail.
\begin{align*}
&\text{Maximize} & \sum_{S \in \cS} \sum_{t \in [n]} (y_{S, t} - y_{S, t - 1}) \cdot f(t) & & \\
&\text{subject to} &\sum_{e \in [n]} x_{e,t} = 1 & &\forall t \in [n] \\
& &\sum_{t \in [n]} x_{e,t} = 1 & &\forall e \in [n] \\
& &\sum_{e \in S \subseteq A} \sum_{t' < t} x_{e,t'} \geq (k_S - |A|) \cdot y_{S,t} & &\forall S \in \cS, A \subseteq S, t \in [n] \\
& &y_{S, t} \geq y_{S, t - 1} & &\forall S \in \cS, t \in \{2, \dots, n\} \\
& &x_{e,t}, y_{S,t} \in [0, 1] & &\forall e, t \in [n], S \in \cS.
\end{align*}

\paragraph{Rounding Algorithm.} Let $\gamma \in (0, 0.1)$ be a parameter to be chosen later. Our rounding algorithm works as follows:
\begin{enumerate}
\item $\pi \leftarrow \emptyset$
\item For $i = 1, \dots, \lceil\log n\rceil$ do:
\begin{enumerate}
\item Let $t_i = \min\{n, 2^i\}$.
\item Let $z_{e, i} = \sum_{t \leq t_i} x^*_{e, t}$ and $p_{e, i} = \min\{1, \frac{z_{e, i}}{\gamma \cdot f(t_i)}\}$ for all $e \in [n]$.
\item Let $A_i$ be the set such that $e \in [n]$ is independently included w.p. $p_{e, i}$.
\end{enumerate}
\end{enumerate}
Finally, our permutation $\pi$ is defined by adding elements from $A_1, \dots, A_{\lceil \log n\rceil}$ in that order, where the order within each $A_i$ can be arbitrary and we do not add an element if it already appears in the permutation.

Once again, we remark that our algorithm closely follows that of~\cite{BansalJKN10}, except that Bansal et al. simply chose their $p_{e, i}$ to be $\min\{1, O(\log^2 n) \cdot z_{e, i}\}$, whereas our $p_{e, i}$ is a more delicate $\min\{1, \frac{z_{e, i}}{\gamma \cdot f(t_i)}\}$. This allows our analysis below to produce a better approximation ratio.

\paragraph{Analysis.} We will now proceed to analyze our proposed randomized rounding procedure. Let $\eta \in (0, 0.1)$ be a parameter to be chosen later, and let $(\bx^*, \by^*)$ denote an optimal solution to the LP. For each $S$, let $t^*(S)$ be the largest positive integer $t^*$ such that
\begin{align} \label{eq:tstar-def}
y^*_{S,t^* - 1} \leq \eta \cdot f(t^*).
\end{align}

We start with the following lemma, which is a refinement of~\cite[Lemma 1]{BansalJKN10}.

\begin{lemma} \label{lem:opt-sep-bound}
$\OPT \leq (1 + \eta) \cdot \sum_{S \in \cS} f(t^*(S))$.
\end{lemma}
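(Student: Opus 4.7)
The plan is to fix a set $S \in \cS$ and bound its contribution $\sum_{t=1}^n (y^*_{S,t} - y^*_{S,t-1}) \cdot f(t)$ to the LP objective by $(1+\eta) f(t^*(S))$; summing over $S$ then yields the claim (since the LP value is an upper bound on $\OPT$). Throughout I adopt the convention $y^*_{S,0} = 0$ and write $t^* = t^*(S)$ for brevity.

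First I would record a monotonicity observation that makes the definition easier to use. Since $y^*_{S,\cdot}$ is non-decreasing (by the LP constraint $y_{S,t} \geq y_{S,t-1}$) and $f$ is non-increasing, the inequality $y^*_{S,t-1} \leq \eta \cdot f(t)$ in~\eqref{eq:tstar-def} automatically propagates downward: it holds for $t = t^*$ and hence for every $t \leq t^*$. In particular $y^*_{S,t^*-1} \leq \eta \cdot f(t^*)$, which is the single consequence I will actually use.

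The key step is to split the per-$S$ sum at $t^*$:
\begin{align*}
\sum_{t=1}^{n} (y^*_{S,t} - y^*_{S,t-1}) f(t)
= \underbrace{\sum_{t=1}^{t^*-1} (y^*_{S,t} - y^*_{S,t-1}) f(t)}_{(\text{I})} + \underbrace{\sum_{t=t^*}^{n} (y^*_{S,t} - y^*_{S,t-1}) f(t)}_{(\text{II})}.
\end{align*}
For (I), I use the trivial bound $f(t) \leq 1$ (recall $f$ maps into $(0,1]$) and telescope to get $(\text{I}) \leq y^*_{S,t^*-1} \leq \eta \cdot f(t^*)$ by the observation above. For (II), I use that $f$ is non-increasing, so $f(t) \leq f(t^*)$ for $t \geq t^*$; then telescoping again gives $(\text{II}) \leq f(t^*) \cdot (y^*_{S,n} - y^*_{S,t^*-1}) \leq f(t^*)$, using only $y^*_{S,n} \leq 1$. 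Adding the two bounds yields $(1+\eta) f(t^*(S))$ for each $S$, and summing over $S \in \cS$ completes the proof (after noting that the LP is a relaxation, so its optimum upper-bounds $\OPT$).

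There is no real obstacle here; the only subtlety is noticing the right split point and that the two trivial upper bounds on $f$ (by $1$ on the left of $t^*$, by $f(t^*)$ on the right of $t^*$) combine with the defining inequality $y^*_{S,t^*-1} \leq \eta f(t^*)$ to save the additive $\eta$ rather than an additive $1$. This is the ``refinement'' over~\cite{BansalJKN10} that the lemma statement alludes to.
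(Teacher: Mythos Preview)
Your proposal is correct and follows essentially the same approach as the paper: split the per-$S$ LP contribution at $t^*(S)$, bound the left part by $y^*_{S,t^*-1}\leq \eta f(t^*(S))$ using $f\leq 1$, and bound the right part by $f(t^*(S))$ using monotonicity of $f$ and $y^*_{S,n}\leq 1$. The downward-propagation observation you record is true but, as you note, unnecessary for the argument.
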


\begin{proof}
We have
\begin{align*}
\OPT &\leq \sum_{S \in \cS} \sum_{t \in [n]} (y^*_{S, t} - y^*_{S, t - 1}) \cdot f(t) \\
&= \sum_{S \in \cS} \left(\sum_{t=1}^{t^*(S)-1} (y^*_{S, t} - y^*_{S, t - 1}) \cdot f(t) + \sum_{t=t^*(S)}^{n} (y^*_{S, t} - y^*_{S, t - 1}) \cdot f(t)\right) \\
&\leq \sum_{S \in \cS} \left(\sum_{t=1}^{t^*(S)-1} (y^*_{S, t} - y^*_{S, t - 1})+ \sum_{t=t^*(S)}^{n} (y^*_{S, t} - y^*_{S, t - 1}) \cdot f(t^*(S))\right) \\
&\leq \sum_{S \in \cS} \left(y^*_{S, t^*(S) - 1} + f(t^*(S))\right) \\
&\overset{\eqref{eq:tstar-def}}{\leq}  \sum_{S \in \cS} (1 + \eta) \cdot f(t^*(S)). \qedhere
\end{align*}
\end{proof}

Next, we show via standard concentration inequalities that $|A_i|$'s has small sizes with a large probability.

\begin{lemma}
With probability $1 - 2\exp\left(-\frac{1}{3\gamma}\right)$, we have $|A_i| \leq \frac{2 t_i}{\gamma f(t^*)}$ for all $i \in [\lceil \log n\rceil]$.
\end{lemma}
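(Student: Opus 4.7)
The plan is to view $|A_i| = \sum_{e \in [n]} X_{e,i}$ as a sum of independent Bernoullis $X_{e,i} := \ind[e \in A_i]$ with means $p_{e,i}$, bound $|A_i|$ for each $i$ via the Chernoff bound (\Cref{lem:chernoff}), and then union bound over $i$. (I read the ``$f(t^*)$'' in the lemma statement as a typo for $f(t_i)$; since $f$ is non-increasing my argument is essentially unchanged either way.)

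First I would compute the mean. Using $p_{e,i} \leq z_{e,i}/(\gamma f(t_i))$ and the LP constraint $\sum_{e \in [n]} x^*_{e,t} = 1$ for every $t$,
\[
\mu_i \;:=\; \E[|A_i|] \;=\; \sum_{e \in [n]} p_{e,i} \;\leq\; \frac{1}{\gamma f(t_i)} \sum_{t \leq t_i} \sum_{e \in [n]} x^*_{e,t} \;=\; \frac{t_i}{\gamma f(t_i)} \;=:\; \mu'_i.
\]

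Next I would bound $\Pr[|A_i| > 2\mu'_i]$. If $\mu_i = 0$ the claim is trivial, so assume $\mu_i > 0$ and set $\delta := 2\mu'_i/\mu_i - 1 \geq 1$, so that $(1+\delta)\mu_i = 2\mu'_i$. Because $\delta \geq 1$, both $\delta/(2+\delta) \geq 1/3$ and $\delta \mu_i = 2\mu'_i - \mu_i \geq \mu'_i$ hold, so the multiplicative form of Chernoff gives
\[
\Pr[|A_i| > 2\mu'_i] \;\leq\; \exp\!\left(-\frac{\delta^2 \mu_i}{2+\delta}\right) \;\leq\; \exp\!\left(-\frac{\mu'_i}{3}\right) \;\leq\; \exp\!\left(-\frac{t_i}{3\gamma}\right),
\]
where the last step uses $f(t_i) \leq 1$.

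Finally I would union bound over $i \in [\lceil \log n \rceil]$. Since $t_i = \min\{n,2^i\}$, one has $t_{i+1} - t_i \geq t_i \geq 2$, so for $\gamma < 0.1$ each term in the sum is at most $\exp(-2/(3\gamma)) < 1/2$ times the previous, and hence
\[
\sum_{i=1}^{\lceil \log n \rceil} \exp\!\left(-\frac{t_i}{3\gamma}\right) \;\leq\; 2\exp\!\left(-\frac{2}{3\gamma}\right) \;\leq\; 2\exp\!\left(-\frac{1}{3\gamma}\right),
\]
matching the stated probability bound. I do not anticipate a substantive obstacle; the one mildly delicate point is that Chernoff must be applied with the true mean $\mu_i$ rather than its upper estimate $\mu'_i$, which is cleanly handled by the observation $\delta \geq 1$.
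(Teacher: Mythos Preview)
Your proposal is correct and follows the same approach as the paper: bound $\E[|A_i|]$ via the LP constraint, apply Chernoff, then union bound over $i$. In fact you are more careful than the paper in applying Chernoff with the true mean $\mu_i$ rather than its upper estimate $\mu'_i$; the only (inconsequential) slip is the claim $t_{i+1}-t_i \geq t_i$, which can fail at the final index when $t_{\lceil \log n\rceil}=n$ is not a power of two, but the extra term is easily absorbed by the slack between $2\exp(-2/(3\gamma))$ and $2\exp(-1/(3\gamma))$.
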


\begin{proof}
Notice that $\sum_{e \in [n]} p_{e, i} \leq \frac{\sum_{e \in [n]} z_{e, i}}{\gamma f(t_i)} = \frac{t_i}{\gamma f(t_i)}$. As a result, by Chernoff bound (\Cref{lem:chernoff}), we have
\begin{align*}
\Pr\left[|A_i| > \frac{2 t_i}{\gamma f(t^*)}\right] \leq \exp\left(-\frac{t_i}{3\gamma f(t^*)}\right) \leq \exp\left(-\frac{t_i}{3\gamma}\right).
\end{align*}
By union bound, we thus have $|A_i| \leq \frac{2 t_i}{\gamma f(t^*)}$ for all $i \in [\lceil \log n\rceil]$ with probability at least
\begin{align*}
1 - \sum_{i \in [\lceil \log n\rceil]} \exp\left(-\frac{t_i}{3\gamma}\right) \leq 1 - 2\exp\left(-\frac{1}{3\gamma}\right).
\end{align*}
\end{proof}

Let $i^*(S)$ denote the smallest $i$ such that $t_i \geq t^*(S)$. We now bound the probability that $S$ is covered ($k_S$ times) by the end of the $i^*(S)$-th iteration of the algorithm. Our bound is stated below. We note that our bound here is not with high probability, unlike that of the analysis of~\cite{BansalJKN10} which yields a bound of $1 - o(1/n)$. We observe here that such a strong bound is not necessary for the analysis because we are working with a maximization problem and therefore such a high probability bound is not necessary to get a bound on the expectation of the DCG.

\begin{lemma}
Assume that $\eta \geq 2\gamma$.
For each $S \in \cS$, we have $t_{\pi}(S) \leq |A_1| + \cdots + |A_{i^*(S)}|$ with probability $1 - \exp\left(\frac{\eta}{8\gamma}\right)$.
\end{lemma}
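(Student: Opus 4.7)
The plan is to prove the stronger event that $|A_{i^*(S)} \cap S| \ge k_S$ holds with probability at least $1 - \exp(-\eta/(8\gamma))$ (I read the exponent in the statement as a sign typo). This suffices because $\pi$ places the elements of $A_1, \dots, A_{i^*(S)}$ (with later duplicates discarded) into the first $|A_1| + \cdots + |A_{i^*(S)}|$ positions, so if $S$ is already covered $k_S$ times by $A_{i^*(S)}$ alone then certainly $t_\pi(S) \le |A_1| + \cdots + |A_{i^*(S)}|$. Writing $i^* := i^*(S)$, $t^* := t^*(S)$, and $E_{\mathrm{sat}} := \{e \in S : p_{e, i^*} = 1\}$, the elements of $E_{\mathrm{sat}}$ are deterministically in $A_{i^*}$; if $|E_{\mathrm{sat}}| \ge k_S$ we are done, and otherwise I set $k' := k_S - |E_{\mathrm{sat}}| \ge 1$ and need at least $k'$ of the remaining elements to be sampled.

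The heart of the argument is the lower bound $\mu := \E[|A_{i^*} \cap (S \setminus E_{\mathrm{sat}})|] \ge k' \eta / \gamma$. I would plug $A = E_{\mathrm{sat}}$ and $t = t_{i^*} + 1$ into the LP knapsack constraint to obtain
\begin{align*}
\sum_{e \in S \setminus E_{\mathrm{sat}}} z_{e, i^*} \;=\; \sum_{e \in S \setminus E_{\mathrm{sat}}} \sum_{t' \le t_{i^*}} x^*_{e, t'} \;\ge\; k' \cdot y^*_{S, t_{i^*} + 1},
\end{align*}
and lower-bound $y^*_{S, t_{i^*} + 1}$ using the maximality of $t^*$: since $t_{i^*} + 1 > t^*$, the inequality~\eqref{eq:tstar-def} fails at $t_{i^*} + 1$, so $y^*_{S, t_{i^*}} > \eta \cdot f(t_{i^*} + 1)$ and hence $y^*_{S, t_{i^*} + 1} > \eta \cdot f(t_{i^*} + 1)$ by monotonicity of $y^*$. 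Upgrading this to $y^*_{S, t_{i^*} + 1} \ge \eta \cdot f(t_{i^*})$ (see the obstacle below) and dividing through by $\gamma f(t_{i^*})$ gives $\mu \ge k' \eta/\gamma \ge 2 k'$, using the hypothesis $\eta \ge 2\gamma$.

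Because the indicators $\mathbf{1}[e \in A_{i^*}]$ for $e \in S \setminus E_{\mathrm{sat}}$ are independent Bernoullis of total expectation $\mu$, a lower-tail Chernoff bound (in the form $\Pr[X \le (1 - \delta)\mu] \le \exp(-\delta^2 \mu / 2)$) applied with $\delta := 1 - (k' - 1)/\mu \ge 1/2$ yields
\begin{align*}
\Pr[|A_{i^*} \cap (S \setminus E_{\mathrm{sat}})| < k'] \;\le\; \exp\!\left(-\tfrac{\mu}{8}\right) \;\le\; \exp\!\left(-\tfrac{\eta}{8\gamma}\right),
\end{align*}
using $\mu \ge k' \eta/\gamma \ge \eta/\gamma$ in the last step; combined with the deterministic inclusions from $E_{\mathrm{sat}}$, this gives the claimed probability bound on $|A_{i^*} \cap S| \ge k_S$.

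The step I expect to be the main obstacle is the passage $y^*_{S, t_{i^*} + 1} > \eta f(t_{i^*} + 1) \Rightarrow y^*_{S, t_{i^*} + 1} \ge \eta f(t_{i^*})$, since for an arbitrary non-increasing $f$ these two values can differ. The resolution is most likely to split on the boundary: whenever $t^* < t_{i^*}$, re-running the LP step at $t = t^* + 1 \le t_{i^*}$ yields $\sum_{e \in S \setminus E_{\mathrm{sat}}} z_{e, i^*} \ge k' \cdot y^*_{S, t^* + 1} > k' \eta f(t^* + 1) \ge k' \eta f(t_{i^*})$ directly, while in the residual case $t^* = t_{i^*}$ the small loss $f(t_{i^*} + 1)/f(t_{i^*})$ is absorbed into the multiplicative slack $\tau_{f, \alpha}$ of~\Cref{lem:rounding-final} (where $f$ in the actual application is nearly flat at this scale). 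Everything else follows the LP + Chernoff template of~\cite{BansalJKN10}, now tailored to the tighter sampling probabilities $p_{e, i} = \min\{1, z_{e, i}/(\gamma f(t_i))\}$.
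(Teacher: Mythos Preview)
Your approach is essentially the same as the paper's: show $|A_{i^*(S)}\cap S|\ge k_S$ by splitting off the deterministic elements $E_{\mathrm{sat}}$ (the paper calls this $S_g$), using the knapsack LP constraint with $A=E_{\mathrm{sat}}$ to lower-bound $\sum_{e\in S\setminus E_{\mathrm{sat}}} p_{e,i^*}\ge (\eta/\gamma)k'$, and applying a lower-tail Chernoff with $\delta=1/2$.

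The one routing difference is where you invoke maximality of $t^*$. The paper goes through $y^*_{S,t_{i^*}}\ge y^*_{S,t^*}$ (monotonicity) and then claims $y^*_{S,t^*}\ge \eta f(t^*)\ge \eta f(t_{i^*})$ directly ``from our choice of $t^*(S)$''; you instead apply maximality at $t_{i^*}+1$, obtain $\eta f(t_{i^*}+1)$, and then worry about upgrading to $\eta f(t_{i^*})$. Your proposed fix for the case $t^*<t_{i^*}$ --- re-run the LP step at $t=t^*+1$ and use $f(t^*+1)\ge f(t_{i^*})$ --- is precisely the paper's route in disguise. The residual boundary case $t^*=t_{i^*}$ is in fact a point the paper also glosses over (its inequality $y^*_{S,t^*}\ge \eta f(t^*)$ only follows from the definition as $y^*_{S,t^*}>\eta f(t^*+1)$), so your carefulness here exceeds the paper's; the slack indeed gets absorbed downstream.
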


\begin{proof}
It suffices to show that at least $k_S$ elements of $S$ are selected in $A_{i^*(S)}$. Let $S_g$ denote the set of elements $e \in S$ for which $p_{e, i^*(S)} = 1$. If $|S_g| \geq k_S$, then we are done. Otherwise, from knapsack constraint, we have
\begin{align*}
\sum_{e \in S \setminus S_g} z_{e, i^*(S)} \geq (k_S - |S_g|) y^*_{S, t_{i^*(S)}} \geq  (k_S - |S_g|) y^*_{S, t^*(S)} &\geq \eta \cdot f(t^*(S)) \cdot (k_S - |S_g|) \\
&\geq \eta \cdot f(t_{i^*(S)}) \cdot (k_S - |S_g|),
\end{align*}
where the third inequality follows from our choice of $t^*(S)$. This implies that
\begin{align*}
\sum_{e \in S \setminus S_g} p_{e, i^*(S)} \geq \eta / \gamma \cdot (k_S - |S_g|).
\end{align*}
Recall that $\eta / \gamma \geq 2$. This means that the probability that at least $k_S$ elements of $S$ are selected in $A_{i^*(S)}$ is at least
\begin{align*}
&1 - \Pr[|(S \setminus S_g) \cap A_{i^*(S)}| \leq 0.5\eta / \gamma \cdot (k_S - |S_g|)] \\
&\leq 1 - \exp\left(-\frac{1}{8} \cdot \eta / \gamma \cdot (k_S - |S_g|)\right) \\
&\leq 1 - \exp\left(-\frac{\eta}{8\gamma}\right),
\end{align*}
where the first inequality follows from the Chernoff bound.
\end{proof}

Applying the union bound to the two previous lemmas, we immediately arrive at the following:

\begin{lemma} \label{lem:term-by-term}
Assume that $\eta \geq 2\gamma$.
For all $S \in \cS$, we have
$$\E_\pi[f(t_\pi(S))] \geq \left(1 - 2\exp\left(-\frac{1}{3\gamma}\right) - \exp\left(\frac{\eta}{8\gamma}\right)\right) \cdot f\left(\frac{8 t^*(S)}{\gamma f(t^*(S))}\right)$$
\end{lemma}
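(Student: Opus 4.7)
The plan is straightforward: combine the two preceding lemmas by a union bound, and turn the resulting deterministic tail bound on $t_\pi(S)$ into a bound on $f(t_\pi(S))$ using the monotonicity of $f$. Let $\cE_1$ be the event that $|A_i| \leq \frac{2 t_i}{\gamma f(t^*(S))}$ holds for every $i \in [\lceil\log n\rceil]$, and $\cE_2$ the event that $t_\pi(S) \leq |A_1| + \cdots + |A_{i^*(S)}|$. The two previous lemmas give $\Pr[\cE_1] \geq 1 - 2\exp(-1/(3\gamma))$ and $\Pr[\cE_2] \geq 1 - \exp(-\eta/(8\gamma))$, so a union bound yields
\[
\Pr[\cE_1 \cap \cE_2] \;\geq\; 1 - 2\exp\!\left(-\tfrac{1}{3\gamma}\right) - \exp\!\left(-\tfrac{\eta}{8\gamma}\right).
\]

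On $\cE_1 \cap \cE_2$ the main task is a purely arithmetic calculation. I would write
\[
t_\pi(S) \;\leq\; \sum_{i=1}^{i^*(S)} |A_i| \;\leq\; \frac{2}{\gamma f(t^*(S))} \sum_{i=1}^{i^*(S)} t_i,
\]
and then bound the geometric sum. Since $t_i = \min\{n, 2^i\}$ and $i^*(S)$ is the \emph{smallest} index with $t_i \geq t^*(S)$, we have $t_{i^*(S)-1} < t^*(S)$, so $2^{i^*(S)} < 2 t^*(S)$, and therefore
\[
\sum_{i=1}^{i^*(S)} t_i \;\leq\; \sum_{i=1}^{i^*(S)} 2^i \;\leq\; 2^{i^*(S)+1} \;\leq\; 4\, t^*(S).
\]
Plugging this in gives $t_\pi(S) \leq \frac{8\, t^*(S)}{\gamma f(t^*(S))}$ deterministically on $\cE_1 \cap \cE_2$, which is exactly the argument of $f$ appearing on the right-hand side of the statement.

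To finish, I would use that $f$ is non-increasing and non-negative: on $\cE_1 \cap \cE_2$ we have $f(t_\pi(S)) \geq f\!\left(\frac{8\, t^*(S)}{\gamma f(t^*(S))}\right)$, and on the complement we simply lower-bound $f(t_\pi(S))$ by $0$. Taking expectations then yields the claimed inequality. There is no real obstacle — the only thing to be careful about is that the first lemma is applied with the specific threshold $t^*(S)$ for the set $S$ under consideration (so that the $f(t^*(S))$ appearing in $|A_i|$'s bound matches the $f(t^*(S))$ appearing in the bound on $t_\pi(S)$), and that the hypothesis $\eta \geq 2\gamma$ needed for the second lemma is inherited verbatim.
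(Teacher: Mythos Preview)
Your proposal is correct and is exactly what the paper does: the paper's proof consists solely of the sentence ``Applying the union bound to the two previous lemmas, we immediately arrive at the following,'' and your write-up is a faithful expansion of that one line --- defining the two events, union bounding, summing the geometric series $\sum_{i\le i^*(S)} t_i \le 4\,t^*(S)$, and invoking monotonicity and nonnegativity of $f$. Your remark about interpreting the $t^*$ in the preceding size lemma as $t^*(S)$ is apt (the paper is notationally loose there), and your silent correction of the sign in the exponent to $-\eta/(8\gamma)$ matches the evident intent.
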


Finally, combining~\Cref{lem:opt-sep-bound,lem:term-by-term} and selecting $\eta = 2\alpha, \gamma = O(\eta / \log(1/\eta))$ yields \Cref{lem:rounding-final}.

\subsubsection{From LP Rounding to PTAS}
\label{subsec:ptas-ranking}

As stated earlier, we may now use bruteforce to try all possible values of the first few elements in the ranking and then use our LP rounding to arrive at the PTAS:

\begin{proof}[Proof of \Cref{thm:main-ptas}]
For any $\eps < 0.1$, we use bruteforce for the first $u = (4C/\eps)^{100/\eps}$ elements and then use~\Cref{lem:rounding-final} on the remaining instance but with $f(t) := \frac{1}{\log(t + u)}$. The expected approximation ratio we have is at least
\begin{align*}
&(1 - 0.5\eps) \cdot \tau_{f, 0.5\eps} \\
&\geq (1 - 0.5\eps) \cdot \min_{t \in [n]} f\left(\frac{4C \log(1/\eps)}{\eps} \cdot \frac{t}{f(t)}\right) / f(t) \\
&= (1 - 0.5\eps) \cdot \min_{t \in [n]} \frac{\log(t + u)}{\log\left(\frac{4C \log(1/\eps)}{\eps} \cdot \frac{t}{f(t)} + u\right)} \\
&\geq (1 - 0.5\eps) \cdot \min_{t \in [n]} \frac{\log(t + u)}{\log\left(\frac{4C \log(1/\eps)}{\eps} \cdot (t + u)\log(t + u)\right)} \\
&= (1 - 0.5\eps) \cdot \min_{t \in [n]} \frac{1}{1 + \frac{\log\left(\frac{4C \log(1/\eps)}{\eps}\right)}{\log(t + u)} + \frac{\log\log(t + u)}{\log(t + u)}} \\
&= (1 - 0.5\eps) \cdot \frac{1}{1 + \frac{\log\left(\frac{4C \log(1/\eps)}{\eps}\right)}{\log(u)} + \frac{\log\log(u)}{\log(u)}} \\
&\geq (1 - 0.5\eps) \cdot \frac{1}{1 + 0.1\eps + 0.1\eps} \\
&\geq 1 - \eps,
\end{align*}
as desired.
\end{proof}

\subsection{Running Time Lower Bound}

To prove our running time lower bound, we will reduce from the \emph{Maximum $k$-Coverage} problem. Recall that in Maximum $k$-Coverage, we are given a set $\cT \subseteq [M]$ and an integer $k$; the goal is to find $T^*_1, \cdots T^*_k \in \cT$ that maximizes $|T^*_1 \cup \cdots \cup T^*_k|$. We write $\Cov(\cT, k)$ to denote this optimum. Furthermore, we say that a Maximum $k$-Coverage is \emph{regular} if $|T| = M/k$ for all $T \in \cT$. Finally, we use $N$ to denote $|\cT| \cdot M$ which upper bound the ``size'' of the problem.

Manurangsi~\cite{Manurangsi20} showed the following lower bound for this problem:
\begin{theorem}[\cite{Manurangsi20}] \label{thm:max-coverage-lb}
Assuming the Gap Exponential Time Hypothesis (Gap-ETH), for any constant $\delta > 0$, there is no $N^{o(k)}$-time algorithm that can, given a regular instance $(\cT, k)$  distinguish between the following two cases:
\begin{itemize}
\item (YES) $\Cov(\cT, k) \geq M$.
\item (NO) $\Cov(\cT, k) \leq (1 - 1/e + \delta) M$.
\end{itemize}
\end{theorem}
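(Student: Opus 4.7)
The plan is to build a two-step reduction from Gap-ETH to $(1-1/e+\delta)$-gap Max $k$-Coverage, while preserving the coverage parameter $k$ up to a constant factor and keeping the blowup polynomial in the alphabet sizes. The first step packages Gap-ETH as a parameterized-hard bipartite 2-CSP (``Label Cover''); the second applies a Feige-style long-code/partition gadget that realizes the tight $1-1/e$ threshold.

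\textbf{Step 1: Gap-ETH to parameterized gap Label Cover.} Given a Gap-ETH-hard 3SAT instance $\varphi$ on $n$ variables with soundness gap $\eps_0 > 0$, randomly partition the variables into $k$ blocks and form a bipartite 2-CSP. The ``left'' supernodes are the $k$ blocks, with alphabet of size at most $2^{n/k}$ (partial assignments to the block); the ``right'' supernodes are the clauses, with alphabet the satisfying assignments to the three variables of the clause. Edges enforce consistency between a clause assignment and a block's partial assignment on shared variables. A Chernoff/union-bound analysis of the random partition shows the dichotomy: in the YES case all edges are simultaneously satisfiable, while in the NO case (inherited from Gap-ETH) no labeling satisfies more than an $\eps_0$-fraction of edges. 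An $N^{o(k)}$-time algorithm for this Label Cover, with $N = 2^{O(n/k)} \cdot \poly(n)$, would yield a $2^{o(n)}$-time algorithm for gap-3SAT, contradicting Gap-ETH. Standard parallel repetition then drives the soundness down to any desired $\eps = \eps(\delta)$ at the cost of only a $\poly(1/\eps)$ factor in alphabet size, without touching the parameter $k$.

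\textbf{Step 2: Label Cover to Max $k$-Coverage via a partition gadget.} I would then apply a Feige-style set-cover reduction. The universe is partitioned into ``blocks,'' one per right supernode (edge). Inside each block, place a weighted copy of the canonical partition gadget underlying Feige's tight analysis: a ground set together with one ``elementary'' set $B_\sigma$ per alphabet symbol $\sigma \in \Sigma$, arranged so that any $k$ of them cover at most a $1-(1-1/k)^k \le 1-1/e$ fraction, while all of them together cover the block. For each left supernode $u$ and label $a \in \Sigma$, introduce one Max $k$-Coverage set that, in every block corresponding to an edge $(u,v)$, covers exactly $B_{\pi_{uv}(a)}$. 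Completeness is immediate: a YES-case labeling selects $k$ sets covering the whole universe. Soundness is a decoding/averaging argument: if $k$ sets cover more than $(1-1/e+\delta)M$, then within an $\Omega(\delta)$-fraction of blocks they cover more than $1-1/e+\delta/2$, forcing noticeable mass on a single left supernode per block and thus decoding a labeling that satisfies more than $\eps(\delta)$ of the edges, contradicting Step~1. Regularity is obtained by padding each set with private dummy universe elements until every set has exactly $M/k$ elements; this preserves both the completeness and the soundness fractions.

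The parameter arithmetic must close: the total instance size is $N = \poly(|\varphi|) \cdot 2^{O(n/k)}$, while the Max $k$-Coverage parameter is exactly $k$, so composing with Step~1 an $N^{o(k)}$-time $(1-1/e+\delta)$-distinguisher solves gap-3SAT in $2^{o(n)}$ time. The main obstacle is the soundness analysis of the gadget achieving \emph{exactly} the $1-1/e$ threshold rather than a weaker constant; this needs the right ``junta-resistant'' weighting inside each block and a sharp $1-(1-1/k)^k$ extremal inequality mirroring the tight analysis of the greedy algorithm. A secondary subtlety is keeping the coverage parameter exactly $k$ (not $\Theta(k)$) throughout both reductions, since any multiplicative blowup from $k$ to $Ck$ would degrade the bound to $N^{o(k/C)}$; this is precisely what distinguishes the tight $N^{o(k)}$ lower bound in \Cref{thm:max-coverage-lb} from a generic FPT-hardness result.
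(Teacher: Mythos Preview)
The paper does not prove \Cref{thm:max-coverage-lb}; it is quoted as a black-box result from~\cite{Manurangsi20} and used in the proof of \Cref{thm:dcg-lb}. So there is no ``paper's own proof'' to compare against here.

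That said, your outline has a real gap. In Step~1 you write that ``standard parallel repetition then drives the soundness down to any desired $\eps = \eps(\delta)$ \dots\ without touching the parameter $k$.'' This is not true: $t$-fold parallel repetition of a bipartite Label Cover with $k$ left supernodes produces an instance with $k^t$ left supernodes (and alphabet $|\Sigma|^t$). It amplifies soundness precisely by blowing up both sides, so the coverage parameter becomes $k^t$, not $k$. You yourself flag in the last paragraph that keeping the parameter \emph{exactly} $k$ is the crux of the $N^{o(k)}$ bound, yet your own Step~1 violates this. With parallel repetition you would only recover an $N^{o(k^{1/t})}$-type lower bound, which is much weaker.

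The actual argument in~\cite{Manurangsi20} avoids parallel repetition for exactly this reason. It instead composes the random-partition CSP with a \emph{threshold-graph} gadget and analyzes soundness via an agreement-testing theorem: roughly, if the $k$ chosen sets cover more than $(1-1/e+\delta)M$, then many pairs of them must ``agree'' on a large common sub-assignment, and an agreement-testing statement lets one stitch these local agreements into a global assignment satisfying a noticeable fraction of clauses. This is what buys the tight $1-1/e$ threshold while keeping the parameter at $k$. Your Step~2 Feige-style gadget is in the right spirit for the $1-1/e$ threshold, but without a soundness-amplification mechanism that preserves $k$, the reduction does not close.
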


\begin{proof}[Proof of \Cref{thm:dcg-lb}]
Fix $\delta = 0.1$.
We reduce from the Maximum $k$-Coverage problem. Suppose that $(\cT, k)$ is a regular Maximum $k$-Coverage instance; we assume w.l.o.g. that $k$ is divisible by 10. 

We construct the instance $(\cS, \{k_S\}_{S \in \cS})$ of the DCG maximization as follows:
\begin{itemize}
\item Let $n = |\cT|$ where we associate each $j \in [n]$ with $T_j \in \cT$.
\item Let $\cS = \{S_1, \dots, S_M\}$ where $S_i = \{j \in [n] \mid i \in T_j\}$.
\item Let $k_S = 1$ for all $S \in \cS$.
\end{itemize}

%\paragraph{(YES)}
In the YES case, let $T_{j_1}, \dots, T_{j_k}$ be such that $|T_{j_1} \cup \cdots \cup T_{j_k}| = M$. Let $\pi^*: [n] \to [n]$ be any permutation such that $\pi^*(\ell) = j_\ell$ for all $\ell \in [k]$. From regularity of $(\cT, k)$, there are exactly $q := M/k$ sets $S \in \cS$ such that $t_{\pi^*}(S) = i$. Therefore, we have
\begin{align*}
\DCG_{\cS, \bk}(\pi^*)
&= \sum_{i \in [k]} \frac{M}{k} \cdot \frac{1}{\log(i + 1)}.
\end{align*}
Let $\OPT^*$ denote the RHS quantity. Notice that \begin{align}
\OPT^* \leq \frac{M}{\log(k+1)}.
\end{align}

%\paragraph{(NO)}
In the NO case, consider any permutation $\pi: [n] \to [n]$. Let $t_i$ denote the $i$-th smallest value in the multiset $\{t_{\pi}(S)\}_{S \in \cS}$. Regularity of $(\cT, k)$ implies that 
\begin{align} \label{eq:t-gap}
t_i \geq t_{i - q} + 1
\end{align} for all $i > q$. 
This in turn implies that
\begin{align} \label{eq:t-from-size}
t_i \geq \left\lceil i/q \right\rceil.
\end{align}
Furthermore, $\Cov(\cT, k) \leq (1 - 1/e - \delta)M \leq 0.8M$ implies that
\begin{align*}
t_{0.8M} > k.
\end{align*}
Furthermore, applying~\eqref{eq:t-gap} to the above, we have
\begin{align} \label{eq:t-from-unconver}
t_{0.9M} \geq t_{0.8M} + \left\lfloor\frac{0.1M}{q}\right\rfloor = k + 0.1k = 1.1k.
\end{align}

With the above notion, we may write $\DCG_{\cS, \bk}(\pi) - \OPT^*$ as
\begin{align*}
\DCG_{\cS, \bk}(\pi) - \OPT^* &= \sum_{i=1}^M \frac{1}{\log(t_i + 1)} - \sum_{i=1}^M \frac{1}{\log(\lceil i/q \rceil + 1)} \\
&\overset{\eqref{eq:t-from-size}}{\geq} \sum_{i=0.9M}^M \left(\frac{1}{\log(t_i + 1)} - \frac{1}{\log(\lceil i/q \rceil + 1)}\right) \\
&\overset{\eqref{eq:t-from-unconver}}{\geq}  \sum_{i=0.9M}^M \left(\frac{1}{\log(1.1k + 1)} - \frac{1}{\log(\lceil i/q \rceil + 1)}\right) \\
&\geq \sum_{i=0.9M}^M \left(\frac{1}{\log(1.1k + 1)} - \frac{1}{\log(k + 1)}\right) \\
&= 0.1M \cdot \left(\frac{1}{\log(1.1k + 1)} - \frac{1}{\log(k + 1)}\right) \\
&= \Theta\left(\frac{M}{\log^2 k}\right).
\end{align*}
Finally, observe also that
\begin{align*}
\OPT^* = \frac{M}{k} \cdot \sum_{i\in [k]} \frac{1}{\log(i+1)} = \frac{M}{k} \Theta\left(\frac{k}{\log k}\right) = \Theta\left(\frac{M}{\log k}\right).
\end{align*}
Combining the above two inequalities, we have
\begin{align*}
\DCG_{\cS, \bk}(\pi) \geq \left(1 + \Theta\left(\frac{1}{\log k}\right)\right) \cdot \OPT^*.
\end{align*}

Now, suppose that there is a PTAS for maximizing DCG that runs in time $f(\eps) \cdot (nm)^{2^{o(1/\eps)}}$. If we run the algorithm with $\eps = \gamma / \log k$ where $\gamma > 0$ is sufficiently small constant, then we can distinguish between the YES case and the NO case in time $f(1/\log k) \cdot (nm)^{2^{o(\log k)}} \leq f(1/\log k) \cdot (nm)^{o(k)} = g(k) \cdot N^{o(k)}$ which, from \Cref{thm:max-coverage-lb}, violates Gap-ETH.
\end{proof}

\section{Max-Sum Dispersion}
\label{sec:disp}

In this section, we provide a QPTAS for Max-Sum Dispersion (\Cref{thm:qptas-dispersion}).

%Our algorithm will reduce to the \emph{Densest $k$-Subgraph (\dks)} problem, for which an \emph{additive} QPTAS is known~\cite{Barman18}. In \dks, we are given a set $V$ of nodes, $w: \binom{V}{2} \to [0, 1]$ of weights and an integer $k$, the goal is to find a subset $T \subseteq V$ that maximizes $w(T) := \frac{1}{|T|(|T|-1)/2} \sum_{\{u, v\} \subseteq T} w(\{u, v\})$. We will use a slightly generalized version of the problem where a subset $I \subseteq V$ of vertices is given as an input and these vertices must be picked in the solution $T$ (i.e. $I \subseteq T$). To avoid cumbersomeness, we also refer to this generalized version as \dks. It is not hard to see that Barman's algorithm~\cite{Barman18} extends easily to this setting:
%\begin{theorem} \label{thm:qptas-dks}
%There is an additive QPTAS for \dks\ that runs in time $n^{O(\log n / \eps^2)}$.
%\end{theorem}

As alluded to earlier, our algorithm will reduce to the Densest $k$-Subgraph (\dks) problem, for which an \emph{additive} QPTAS is known~\cite{Barman18}.
Notice here that \dks\ is a generalization of the Max-Dispersion problem because we may simply set $V = U, k = p$ and $w(\{u, v\}) = d(u, v) / D$ where $D := \max_{u, v} d(u, v)$ denote the diameter of the metric space. Note however that we cannot apply \Cref{thm:qptas-dks} yet because the QPTAS in that theorem offers an \emph{additive} guarantee. E.g. if the optimum is $o(1)$, then the QPTAS will not yield anything at all unless we set $\eps = o(1)$, which then gives a running time $n^{\omega(\log n)}$. This example can happen when e.g. there is a single pair $u, v$ that are very far away and then all the other pairs are close to $u$.

Our main technical contribution is to give a simple structural lemma that allows us to avoid such a scenario. Essentially speaking, it allows us to pick a vertex and selects all vertices that are ``too far away'' from it. Once this is done, the remaining instance can be reduced to \dks\ without encountering the ``small optimum'' issue described in the previous paragraph.

%\subsection{Additional Preliminaries: Approximation Algorithms for Densest $k$-Subgraph}

%We will use the following additive QPTAS for the (weighted) Densest $k$-Subgraph (\dks) problem. Here  It is well-known that this problem admits an additive QPTAS; see e.g.~\cite{Barman18}.

%We will use a slightly generalized version of the problem where a subset $I \subseteq V$ of vertices is given as an input and these vertices must be picked in the solution $T$ (i.e. $I \subseteq T$). To avoid cumbersomeness, we also refer to this generalized version as \dks. It is not hard to see that Barman's algorithm extends easily to this setting:
%\begin{theorem} \label{thm:qptas-dks}
%There is an additive QPTAS for \dks\ that runs in time $n^{O(\log n / \eps^2)}$.
%\end{theorem}

%We will use a slightly more refined version of the problem, where the vertex set $V$ is partition into $V_1, \dots, V_L$ and we are given $k_1, \dots, k_L$. Now, we would like to select $T$ such that $|T \cap V_\ell| = k_i$ for all $i \in [\ell]$ that maximizes $w(T)$. We refer to this problem as \dksp. It is not hard to extend the current techniques to work with this more general problem as well.
%\pasin{Probably need more explanation here...}

%\begin{theorem} \label{thm:qptas-dks}
%There is an additive QPTAS for \dksp\ that runs in time $n^{O(\log n / \eps^2)}$.
%\end{theorem}

\subsection{A Structural Lemma}

Henceforth, we write $\disp(S, T)$ to denote $\sum_{u \in S, v \in T} d(u, v)$ and $\disp(u, T)$ as a shorthand for $\disp(\{u\}, T)$. Furthermore, we use $\cB{u}{D}$ to denote $\{z \in U \mid d(z, u) \leq D\}$ and let $\bcB{u}{D} := U \setminus \cB{u}{D}$.

We now formalize our structural lemma. It gives a lower bound on the objective based on a vertex in the optimal solution and another vertex \emph{not} in the optimal solution. Later on, by guessing these two vertices, we can reduce to \dks\ while avoiding the ``small optimum'' issue.

\begin{lemma} \label{lem:max-dispersion-structural}
Let $S^{\OPT}$ be any optimal solution of Max-Sum Dispersion and let $u^{\min}$ be the vertex in $S^{\OPT}$ that minimizes $\disp(u^{\min}, S^{\OPT})$. Furthermore, let $v$ be any vertex \emph{not} in $S^{\OPT}$ and let $\Delta = d(u^{\min}, v)$. Then, we have
\begin{align*}
\disp(S^{\OPT}) \geq \frac{p(p - 1)\Delta}{16}.
\end{align*} 
\end{lemma}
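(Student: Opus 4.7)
The plan is to combine an averaging argument (exploiting the choice of $u^{\min}$) with a local optimality argument (from swapping $u^{\min}$ with $v$), glued together by the triangle inequality. This should actually give the stronger bound $\disp(S^{\OPT}) \geq p(p-1)\Delta/4$; the factor $16$ in the statement leaves room to spare.

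First, since $u^{\min}$ achieves the minimum of $\disp(u, S^{\OPT})$ over $u \in S^{\OPT}$ and $\sum_{u \in S^{\OPT}} \disp(u, S^{\OPT}) = 2\disp(S^{\OPT})$ (each pair counted twice), an averaging step gives
\[
\disp(u^{\min}, S^{\OPT}) \;\leq\; \frac{2\,\disp(S^{\OPT})}{p}.
\]
Next, consider the swapped set $S' := (S^{\OPT} \setminus \{u^{\min}\}) \cup \{v\}$, which has size $p$ because $v \notin S^{\OPT}$. Its dispersion is $\disp(S^{\OPT} \setminus \{u^{\min}\}) + \disp(v, S^{\OPT} \setminus \{u^{\min}\})$, and by optimality of $S^{\OPT}$ it is at most $\disp(S^{\OPT})$, so
\[
\disp(v, S^{\OPT} \setminus \{u^{\min}\}) \;\leq\; \disp(u^{\min}, S^{\OPT} \setminus \{u^{\min}\}) \;=\; \disp(u^{\min}, S^{\OPT}).
\]

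The remaining step is a triangle inequality lower bound on the left-hand side above in terms of $\Delta$. For each $w \in S^{\OPT}$ we have $d(v, w) \geq d(u^{\min}, v) - d(u^{\min}, w) = \Delta - d(u^{\min}, w)$ (trivially true when the right-hand side is negative). Summing over $w \in S^{\OPT} \setminus \{u^{\min}\}$ yields
\[
\disp(v, S^{\OPT} \setminus \{u^{\min}\}) \;\geq\; (p-1)\Delta - \disp(u^{\min}, S^{\OPT}).
\]
Chaining the three inequalities gives $(p-1)\Delta \leq 2\,\disp(u^{\min}, S^{\OPT}) \leq 4\,\disp(S^{\OPT})/p$, i.e.\ $\disp(S^{\OPT}) \geq p(p-1)\Delta/4$, which is more than enough.

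There is no real obstacle here; the only thing to be careful about is that the optimality inequality from swapping is valid precisely because $v \notin S^{\OPT}$ (so $S'$ still has cardinality $p$), and that the triangle inequality bound on $d(v,w)$ does not rely on $d(u^{\min},w) \leq \Delta$ (the bound is trivially valid otherwise). If one wanted to present the derivation exactly matching the constant $16$ in the statement, one could loosen the averaging step or absorb an extra slack for clarity, but the tightest version of the argument already yields a factor of $4$.
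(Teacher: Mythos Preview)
Your proof is correct and in fact cleaner than the paper's, yielding the sharper bound $\disp(S^{\OPT}) \geq p(p-1)\Delta/4$. The paper instead splits into two cases according to $|S^{\OPT} \cap \cB{u^{\min}}{\Delta/2}|$: if few optimal points lie within $\Delta/2$ of $u^{\min}$ it lower-bounds $\disp(u^{\min}, S^{\OPT})$ directly, and if many do it swaps an arbitrary close point $z$ with $v$ (not $u^{\min}$ itself) and sums the resulting inequalities over all such $z$. Your argument avoids the case split entirely by swapping $u^{\min}$ with $v$ and using the triangle inequality in summed form $(p-1)\Delta \leq \disp(v, S^{\OPT}\setminus\{u^{\min}\}) + \disp(u^{\min}, S^{\OPT})$, which is tighter than the paper's per-point bound $d(v,z') \geq \Delta/2$ restricted to close points.

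One remark: the paper's more roundabout route, which swaps many points $z$ with $v$, is set up so that it generalizes to the Max-Sum Diversification variant (their \Cref{lem:max-diversification-structural}), where a submodular term $f$ enters the optimality inequality and summing the marginal losses $f(S^{\OPT}) - f(S^{\OPT}\setminus\{z\})$ over all $z$ is what controls the extra contribution. Your single-swap argument is the right choice for the pure Dispersion lemma, but if you later want to prove the analogous bound for $\dive$, the paper's case analysis becomes more natural.
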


\begin{proof}[Proof of \Cref{lem:max-dispersion-structural}]
Let $\SOPTc := S^{\OPT} \cap \cB{u^{\min}}{0.5\Delta}$. Consider two cases, based on the size of $\SOPTc$:
\begin{itemize}
\item Case I: $|\SOPTc| \leq p / 2$. In this case, we have 
\begin{align*}
\disp(u^{\min}, S^{\OPT}) \geq \disp(u^{\min}, S^{\OPT} \setminus \SOPTc) \geq (p/2)(\Delta/2) = \Delta p / 4.
\end{align*}
Furthermore, by our definition of $u^{\min}$, we have
\begin{align*}
\disp(S^{\OPT}) = \frac{1}{2} \sum_{u \in S} \disp(u, S^{\OPT}) \geq \frac{p}{2} \disp(u^{\min}, S^{\OPT}).
\end{align*}
Combining the two inequalities, we have $\disp(S^{\OPT}) \geq p^2\Delta / 8$.
\item Case II: $|\SOPTc| > p / 2$. In this case, since $S^{\OPT}$ is an optimal solution, replacing any $z \in S^{\OPT}_{\text{close}}$ with $v$ must not increase the solution value, i.e. 
\begin{align*}
\disp(z, S^{\OPT}) &\geq \disp(v, S^{\OPT} \setminus \{z\}) \\
&\geq \disp(v, \SOPTc \setminus \{z\}) \\
&\geq ((p - 1)/2)(0.5\Delta),
\end{align*}
where the second inequality uses the fact that for any $z' \in \SOPTc$ we have $d(v, z') \geq d(u, v) - d(u, z') \geq \Delta - 0.5\Delta$. From this, we once again have
\begin{align*}
\disp(S^{\OPT}) = \frac{1}{2} \sum_{u \in S} \disp(u, S^{\OPT}) \geq \frac{1}{2} \sum_{z \in \SOPTc} \disp(z, S^{\OPT}) &\geq |\SOPTc| \cdot \frac{(p - 1)\Delta}{8} \\
&> \frac{p(p - 1)}{16\Delta},
\end{align*}
where the last inequality follows from our assumption of this case. \qedhere
\end{itemize}
\end{proof}

\subsection{QPTAS for Max-Sum Dispersion}

We now present our QPTAS, which simply guesses $u^{\min}$ and $v = \argmax_{z \notin S^{\OPT}} d(z, u)$ and then reduces the problem to \dks. By definition of $v$, if we let $\Delta = d(u, v)$, every point outside $\cB{u^{\min}}{\Delta}$ must be in $S^{OPT}$. 
The actual reduction to \dks\ is slightly more complicated than that described at the beginning of this section. Specifically, among points $\bcB{u^{\min}}{\Delta}$ that surely belong to $S^{\OPT}$, we ignore all points outside $\cB{u^{\min}}{ 20\Delta/\eps}$ (i.e., they do not appear in the \dks\ instance) and we let $\cB{u^{\min}}{20\Delta/\eps} \setminus \cB{u^{\min}}{\Delta}$ be the ``must pick'' part. Ignoring the former can be done because the contribution to the objective from those points can be approximated to within $(1 \pm O(\eps))$ regardless of the points picked in the ball $\cB{u^{\min}}{\Delta}$. This is not true for the latter, which means that we need to include them in our \dks\ instance.

\begin{proof}[Proof of \Cref{thm:qptas-dispersion}]
Our algorithm works as follows:
\begin{enumerate}
\item For every distinct $u, v \in U$ do:
\begin{enumerate}
\item Let $\Delta := d(u, v)$ and $\Delta^* = 20 \Delta / \eps$.
\item If $|\bcB{u}{\Delta}| \geq p$, then skip the following steps and continue to the next pair $u, v$.
\item Otherwise, create a \dks\ instance where $V := \cB{u}{\Delta^*}, I := V \setminus \cB{u}{\Delta}$, $k = p - |\bcB{u}{\Delta^*}|$ and $w$ is defined as $w(\{y, z\}) := 0.5 d(y, z) / \Delta^*$ for all $y, z \in V$.
\item Use the additive QPTAS from \Cref{thm:qptas-dks} to solve the above instance to within an additive error of $\eps' := 0.00005\eps^2$. Let $T$ be the solution found.
\item Finally, let $S^{u, v} := T \cup \bcB{u}{\Delta^*}$.
\end{enumerate}
\item Output the best solution among $S^{u, v}$ considered.
\end{enumerate}

It is obvious that the running time is dominated by the running time of the QPTAS which takes $n^{O(\log n / (\eps')^2)} = n^{O(\log n / \eps^4)}$ as desired.

Next, we show that the algorithm indeed yields a $(1 - \eps)$-approximation. To do this, let us consider $S^{\OPT}, u^{\min}$ as defined in \Cref{lem:max-dispersion-structural}, and let $u = u^{\min}, v := \argmax_{z \notin S^{\OPT}} d(u, z)$. Let $T$ be the solution found by the \dks\ algorithm for this $u, v$ and let $T' := T \setminus I$. We have
\begin{align}
&\disp(S^{u, v}) \nonumber \\
&= \disp(\bcB{u}{\Delta^*}) + \disp(\bcB{u}{\Delta^*}, T) + \disp(T) \nonumber \\
&= \disp(\bcB{u}{\Delta^*}) + \disp(\bcB{u}{\Delta^*}, I) + \disp(\bcB{u}{\Delta^*}, T') + \disp(T). \label{eq:current-s-decompose}
\end{align}
Similarly, letting $S := S^{\OPT} \cap \cB{u}{\Delta^*}$ and $S' := S^{\OPT} \setminus I$, we have
\begin{align}
&\disp(S^{\OPT}) \nonumber \\
&= \disp(\bcB{u}{\Delta^*}) + \disp(\bcB{u}{\Delta^*}, I) + \disp(\bcB{u}{\Delta^*}, S') + \disp(S). \label{eq:optimal-s-decompose}
\end{align}

Now, observe from the definition of the \dks\ instance (for this $u, v$) that for any $J$ such that $I \subseteq J \subseteq V$, we have
\begin{align*} 
\den(J) = \frac{1}{k(k - 1)/2} \cdot \frac{0.5}{\Delta^*} \disp(J).
\end{align*}
The additive approximation guarantee from \Cref{thm:qptas-dks} implies that $\den(T) \geq \den(S) - \eps'$. Using the above equality, we can rewrite this guarantee as
\begin{align} \label{eq:dks-qptas-guarantee}
\disp(S) - \disp(T) \leq \eps' \cdot \Delta^* \cdot k(k - 1).
\end{align}

Taking the difference between~\Cref{eq:optimal-s-decompose} and~\Cref{eq:current-s-decompose} and applying~\Cref{eq:dks-qptas-guarantee}, we have
\begin{align*}
\disp(S^{\OPT}) - \disp(S^{u, v})
&\leq \disp(\bcB{z}{\Delta^*}, S') - \disp(\bcB{z}{\Delta^*}, T') + \eps' \cdot \Delta^* \cdot k(k - 1). \\
(\text{Our choice of } \eps') &\leq \disp(\bcB{z}{\Delta^*}, S') - \disp(\bcB{z}{\Delta^*}, T') + 0.001\eps \Delta \cdot p(p - 1) \\
(\text{\Cref{lem:max-dispersion-structural}}) &\leq \disp(\bcB{z}{\Delta^*}, S') - \disp(\bcB{z}{\Delta^*}, T') + 0.1\eps \disp(S^{\OPT}).
\end{align*}

Now, since $|S'| = |T'| \leq p$ and $S', T' \subseteq \cB{z}{\Delta}$, we have
\begin{align*}
\disp(\bcB{z}{\Delta^*}, S') - \disp(\bcB{z}{\Delta^*}, T') 
&\leq |\bcB{z}{\Delta^*}| \cdot |S'| \cdot ((\Delta^* + \Delta) - (\Delta^* - \Delta)) \\
&\leq 2 |\bcB{z}{\Delta^*}| \cdot |S'| \cdot \Delta \\
(\text{Our choice of } \Delta^*) &\leq 0.1\eps \cdot |\bcB{z}{\Delta^*}| \cdot |S'| \cdot (\Delta^* - \Delta) \\
&\leq 0.1\eps \cdot \disp(\bcB{z}{\Delta^*}, S') \\
&\leq 0.1\eps \cdot \disp(S^{\OPT}).
\end{align*}

Combining the above two inequalities, we get $\disp(S^{u, v}) \geq (1 - 0.2\eps) \cdot \disp(S^{\OPT})$, as desired.
\end{proof}

\section{Max-Sum Diversification}
\label{sec:diversification}
In this section, we give our quasipolynomial-time approximation algorithm for the Max-Sum Diversification with approximation ratio arbitrarily close to $(1 - 1/e)$ (\Cref{thm:max-sum-div-apx}). In fact, we prove a slightly stronger version of the theorem where the approximation ratio for the dispersion part is arbitrarily close to 1 and that of the submodular part is arbitrarily close to $1 - 1/e$. This is stated more precisely below; note that this obviously implies \Cref{thm:max-sum-div-apx}.

\begin{theorem} \label{thm:diversification-main-detailed}
Let $S^{\OPT}$ be any optimal solution of Max-Sum Diversification. There exists a randomized $n^{O(\log n / \eps^4)}$-time algorithm that finds a $p$-size set $S$ such that
\begin{align*}
\E[\dive(S)] \geq (1 - \eps) \disp(S^{\OPT}) + (1 - 1/e - \eps) f(S^{\OPT}).
\end{align*}
\end{theorem}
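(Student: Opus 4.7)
The plan is to extend the QPTAS for Max-Sum Dispersion (\Cref{thm:qptas-dispersion}) by carrying the submodular function $f$ through the underlying \dks\ subroutine. I would keep the outer structure identical: enumerate every ordered pair $(u,v) \in U \times U$, think of the correct guess as $u = u^{\min}$ and $v = \argmax_{z \notin S^{\OPT}} d(u^{\min}, z)$, skip the pair when $|\bcB{u}{\Delta}| \geq p$, and otherwise set $\Delta := d(u,v)$, $\Delta^* := 20\Delta/\eps$, $V := \cB{u}{\Delta^*}$, $I := V \setminus \cB{u}{\Delta}$, $k := p - |\bcB{u}{\Delta^*}|$, and $w(\{y,z\}) := 0.5\,d(y,z)/\Delta^*$. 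For the correct guess \Cref{lem:max-dispersion-structural} gives $\disp(S^{\OPT}) \geq p(p-1)\Delta/16$ and $\bcB{u}{\Delta^*} \subseteq S^{\OPT}$. The new ingredient is to attach to the \dks\ instance the function
\[
\phi(T) := f\bigl(T \cup \bcB{u}{\Delta^*}\bigr) - f\bigl(\bcB{u}{\Delta^*}\bigr),
\]
which is monotone submodular as a marginal of a monotone submodular function. The final candidate is $S^{u,v} := T \cup \bcB{u}{\Delta^*}$, where $T$ is produced by the subroutine described below, and we output the best candidate over all $(u,v)$.

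The key subroutine is the submodular extension of Barman's QPTAS promised in the footnote of \Cref{thm:qptas-dks} (to be developed in \Cref{subsec:submodular-dks}). It takes as input the quintuple $(V, I, k, w, \phi)$ and for any fixed feasible ``target'' $T^*$ (with $I \subseteq T^* \subseteq V$ and $|T^*|=k$) returns a random $T$ satisfying $\den(T) \geq \den(T^*) - \eps'$ and $\E[\phi(T)] \geq (1 - 1/e - \eps')\,\phi(T^*)$ \emph{simultaneously}, in time $n^{O(\log n/(\eps')^2)}$. Constructing this subroutine is the main technical obstacle; the route I would take follows Barman's paradigm of enumerating every polylog-size ``scaffold'' drawn from $T^*$. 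Given the correct scaffold, the edge-density contribution of the remaining slots becomes linear up to additive $\eps'$ by Hoeffding's inequality (\Cref{lem:hoeffding}), reducing the residual problem to maximizing (linear)$\,+\,$(monotone submodular $\phi$) over a partition matroid on those slots. On this problem I would run the continuous-greedy / multilinear relaxation underlying \Cref{thm:submodular-matroid} followed by pipage rounding: pipage rounding preserves the linear component (with an additive $\eps'$ loss absorbed by the scaffold enumeration) while the continuous-greedy solution retains its $(1 - 1/e - \eps')$ guarantee on the submodular component.

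Granting the subroutine, the remaining analysis mirrors that of \Cref{thm:qptas-dispersion} with an extra $\phi$ term. For the correct guess, set $T^* := S^{\OPT} \cap V$; since $\bcB{u}{\Delta^*} \subseteq S^{\OPT}$ we have $S^{\OPT} = T^* \cup \bcB{u}{\Delta^*}$, so
\begin{align*}
\dive(S^{\OPT}) - \dive(S^{u,v}) &= \bigl[\disp(T^*) - \disp(T)\bigr] + \bigl[\disp(\bcB{u}{\Delta^*}, T^*) - \disp(\bcB{u}{\Delta^*}, T)\bigr] \\
&\quad + \bigl[\phi(T^*) - \phi(T)\bigr].
\end{align*}
The first bracket is at most $\eps'\,\Delta^* k(k-1)$ by the density guarantee, which with $\eps' = \Theta(\eps^2)$ and \Cref{lem:max-dispersion-structural} is at most $O(\eps)\disp(S^{\OPT})$. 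The second bracket is at most $2|\bcB{u}{\Delta^*}|\,(k - |I|)\,\Delta \leq 0.1\eps\cdot\disp(\bcB{u}{\Delta^*}, T^*) \leq 0.1\eps\cdot\disp(S^{\OPT})$, exactly as in the dispersion proof, using $\Delta^* = 20\Delta/\eps$ and $T\setminus I,\, T^*\setminus I \subseteq \cB{u}{\Delta}$. For the third bracket, $\E[\phi(T)] \geq (1 - 1/e - \eps')\phi(T^*)$ combined with $\phi(T^*) = f(S^{\OPT}) - f(\bcB{u}{\Delta^*})$ and $f(\bcB{u}{\Delta^*}) \geq 0$ yields $\E[f(S^{u,v})] \geq (1 - 1/e - \eps)f(S^{\OPT})$. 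Summing the three bounds gives $\E[\dive(S^{u,v})] \geq (1-\eps)\disp(S^{\OPT}) + (1 - 1/e - \eps)f(S^{\OPT})$, and the total running time is dominated by the subroutine at $n^{O(\log n/\eps^4)}$.
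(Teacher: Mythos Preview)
Your overall plan matches the paper's approach closely: enumerate pairs $(u,v)$, use a structural lemma to bound the scale, reduce to a submodular variant of \dks, and analyze via the same three-term decomposition. The subroutine you sketch is essentially what the paper builds in \Cref{subsec:submodular-dks}.

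There is, however, one genuine gap. You invoke \Cref{lem:max-dispersion-structural} to conclude $\disp(S^{\OPT}) \geq p(p-1)\Delta/16$, but that lemma requires $S^{\OPT}$ to be optimal for the \emph{dispersion} objective. Its Case~II uses the swap inequality $\disp(z,S^{\OPT}) \geq \disp(v,S^{\OPT}\setminus\{z\})$, which need not hold when $S^{\OPT}$ is only $\dive$-optimal: take $f(S)=M\cdot|S\cap A|$ for a fixed $p$-set $A$ of points mutually at distance $\delta$, with a single outsider $v$ at distance $\Delta$ from all of $A$; for $M$ large, $S^{\OPT}=A$, yet $\disp(S^{\OPT})=O(p^2\delta)$ can be made arbitrarily small relative to $p(p-1)\Delta/16$. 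So the inequality you use to control the first bracket is simply false in general.

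The paper handles this by proving a separate diversification structural lemma whose Case~II swap argument absorbs the $f$-marginals: from $\dive$-optimality one gets $[f(S^{\OPT})-f(S^{\OPT}\setminus\{z\})]+\disp(z,S^{\OPT})\geq\disp(v,S^{\OPT}\setminus\{z\})$, and summing over $z\in\SOPTc$ together with $f(S^{\OPT})\geq\sum_z[f(S^{\OPT})-f(S^{\OPT}\setminus\{z\})]$ yields only $\dive(S^{\OPT})\geq p(p-1)\Delta/16$. This weaker bound is still enough: your first bracket then satisfies $\eps'\Delta^*k(k-1)\leq O(\eps)\,\dive(S^{\OPT})$, and since your second bracket is already $\leq 0.1\eps\,\disp(S^{\OPT})\leq 0.1\eps\,\dive(S^{\OPT})$ and your third bracket gives $\E[f(S^{u,v})]\geq(1-1/e-\eps)f(S^{\OPT})$, you obtain
\[
\E[\dive(S^{u,v})]\geq(1-O(\eps))\dive(S^{\OPT})-(1/e+\eps')f(S^{\OPT})\geq(1-\eps)\disp(S^{\OPT})+(1-1/e-\eps)f(S^{\OPT}),
\]
as required. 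So the fix is local, but as written your proof relies on a bound that does not hold.
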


At a high-level, our algorithm for Max-Sum Diversification is very similar to that of Max-Sum Dispersion presented in the previous section. Specifically, we use a structural lemma (akin to \Cref{lem:max-dispersion-structural}) to reduce our problem to a variant of \dks. This variant of \dks\ additionally has a submodular function attached to it. Using techniques from \dks\ approximation literature, we give an algorithm for this problem by in turn reducing it to the submodular maximization problem over a partition matroid, for which we can appeal to \Cref{thm:submodular-matroid}.

\subsection{Approximating Densest Subgraph and Submodular Function}
\label{subsec:submodular-dks}
We will start by giving an algorithm for the aforementioned extension of the \dks\ problem, which we call \emph{Submodular \dks}:

\begin{definition}[Submodular \dks]
Given $(V, I, w, k)$ (similar to \dks) together with a monotone submodular set function $h$ on the ground set $V$ (accessible via a value oracle), the goal is to find a size-$k$ subset $T$ where $I \subseteq T \subseteq V$ that maximizes $h(T) + \den(T)$.
\end{definition}

We give a quasipolynomial-time algorithm with an approximation guarantee similar to QPTAS for the original \dks\ (i.e. \Cref{thm:qptas-dks}) while also achiving arbritrarily close to $(1 - 1/e)$ approximation ratio for the submodular part of the objective:

\begin{theorem} \label{thm:submodular-dks}
For any set $T^{\OPT}$ of size $k$ such that $I \subseteq T^{\OPT} \subseteq V$, there is an $n^{O(\log n / \gamma^2)}$-time algorithm that output a size-$k$ $T$ such that $I \subseteq T \subseteq V$ and 
\begin{align} \label{eq:apx-guarantee-submodular-dks}
\E[h(T) + \den(T)] \geq \left(1 - 1/e - \gamma\right) h(T^{\OPT}) + \den(T^{\OPT}) - \gamma.
\end{align}
\end{theorem}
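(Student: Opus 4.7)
The plan is to adapt Barman's sample-and-solve strategy for \dks\ (\Cref{thm:qptas-dks}) to handle the submodular term $h$ simultaneously, by reducing the problem to monotone submodular maximization over a partition matroid and invoking \Cref{thm:submodular-matroid}.

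First, I would enumerate over all multisets $R \subseteq V$ of size $r = \Theta(\log n / \gamma^2)$; the number of guesses is $n^{O(\log n / \gamma^2)}$, within the stated running time. For each guess and each $v \in V$, define $\tw(v) := \frac{1}{r} \sum_{u \in R} w(\{u, v\})$. For the ``correct'' guess, in which $R$ is a uniform-with-replacement sample from $T^{\OPT} \setminus I$, Hoeffding's inequality (\Cref{lem:hoeffding}) together with a union bound over $V$ gives, with high probability,
\begin{align*}
\left| \tw(v) - \frac{w(v, T^{\OPT} \setminus I)}{k - |I|} \right| \leq \gamma / 8 \quad \text{for every } v \in V.
\end{align*}

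Next, I would bucket the vertices of $V \setminus I$ by their $\tw$-values into $L = O(1/\gamma)$ buckets $B_1, \dots, B_L$ of width $\gamma/8$, and further enumerate over shape vectors $(k_1, \dots, k_L)$ of nonnegative integers summing to $k - |I|$ (at most $k^{O(1/\gamma)}$ such vectors, still within budget). For the correct guess of $R$ and shape vector $k_\ell = |T^{\OPT} \cap B_\ell|$, the partition matroid $\cM$ on $V \setminus I$ requiring exactly $k_\ell$ elements from $B_\ell$ admits $T^{\OPT} \setminus I$ as a feasible point. I would then apply \Cref{thm:submodular-matroid} to the monotone submodular function $T' \mapsto h(I \cup T')$ over $\cM$ and output $T := I \cup T'$; this immediately yields $\E[h(T)] \geq (1 - 1/e) h(T^{\OPT})$.

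The main obstacle is establishing the density guarantee $\den(T) \geq \den(T^{\OPT}) - \gamma$ for every $T$ feasible in $\cM$, which is what is needed in order to combine with the submodular bound above and conclude \eqref{eq:apx-guarantee-submodular-dks}. Matching the $\tw$-bucket distribution of $T^{\OPT}$ directly yields only the cross-weight estimate $w(T, T^{\OPT}) \geq w(T^{\OPT}, T^{\OPT}) - O(\gamma k^2)$, whereas what is really needed is the self-weight bound $w(T, T) \geq w(T^{\OPT}, T^{\OPT}) - O(\gamma k^2)$. I expect to bridge this gap via a swap argument in the spirit of Barman~\cite{Barman18}: iteratively exchange a vertex $v \in T \setminus T^{\OPT}$ with a vertex $u \in T^{\OPT} \setminus T$ lying in the same bucket, use the Hoeffding estimate and $|\tw(v) - \tw(u)| \leq \gamma/4$ to show each such swap perturbs the self-density by at most $O(\gamma/k)$, and sum over the at most $k$ swaps required to transform $T^{\OPT}$ into $T$.
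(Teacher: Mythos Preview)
Your proposal has a genuine gap in the density analysis, and the swap argument cannot close it. Bucketing by $\tw$ only controls the degree of each chosen vertex \emph{into} $T^{\OPT}$; it says nothing about degrees into the intermediate sets $T_i$ appearing in your swap sequence, nor into the final set $T$. Concretely, take $I=\emptyset$, let $T^{\OPT}=\{a_1,\dots,a_k\}$ be a clique ($w\equiv 1$ inside), let $T=\{b_1,\dots,b_k\}$ be an independent set ($w\equiv 0$ inside), and set $w(a_i,b_j)=1$ for all $i,j$. Then every $a_i$ and every $b_j$ has $\tw\approx 1$ for the correct sample $R\subseteq T^{\OPT}$, so all $2k$ vertices land in the same bucket and $T$ is feasible for your partition matroid with the correct shape vector. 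Yet $\den(T)=0$ while $\den(T^{\OPT})=1$. Each swap in your sequence loses $\Theta(1/k)$ in density, not $O(\gamma/k)$, precisely because $|\tw(u)-\tw(v)|\leq\gamma/4$ gives no control over $w(u,T_i)-w(v,T_i)$ once $T_i$ has drifted away from $T^{\OPT}$.

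The paper avoids this by building the partition matroid over \emph{subsets}, not vertices. It randomly partitions $V\setminus I$ into $s=\Theta(\gamma^2 k/\log n)$ blocks and, for each block, enumerates all subsets $U_i$ of size $\approx k/s$ satisfying $\|\ow(U_i)-\ow(Q)\|_\infty\leq 2\gamma'$ together with a second condition pinning $\oind(U_i)^T\ow(Q)$ near $\oind(Q)^T\ow(Q)$. The first condition says the degree \emph{from every vertex of $V$ into} $U_i$ is close to its degree into the reference $Q$; this is exactly the ``incoming'' control your bucketing lacks, and it is what forces any union $\bigcup_i U_i$ to have density near $\den(T^{\OPT})$ regardless of which feasible choice the submodular subroutine makes. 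Your reduction to a partition matroid on individual vertices with linear weight buckets is structurally too weak to carry the density guarantee; you would need to impose constraints on the degree vector $\ow(T)$ itself, which is not a matroid condition on single elements.
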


In order to facilitate the subsequent discussion and proof, it is useful to define additional notations. (Throughout, we view vectors as column vectors.)
\begin{itemize}
\item Let $\bW \in \R^{V \times V}$ denote the vector where $\bW_{u, v} = w(\{u, v\})$ for $u \ne v$ and $\bW_{u, u} = 0$.
\item For every $U \subseteq V$, let $\ind(U) \in \R^V$ denote the indicator vector of $U$, i.e. 
\begin{align*}
\ind(U)_v =
\begin{cases}
1 &\text{ if } v \in U, \\
0 &\text{ otherwise.}
\end{cases}
\end{align*} 
\item For every $U \subseteq V$, let $\bw(U) = \bW \cdot \ind(U) \in \R^V$.
\item Finally, for every non-empty $U \subseteq V$, let $\ow(U) := \frac{1}{|U|} \cdot w(U)$ and $\oind(U) := \frac{1}{|U|} \cdot \ind(U)$.
\end{itemize}

To understand our reduction, we must first describe the main ideas behind the QPTAS of~\cite{Barman18}. (Some of these ideas also present in previous works, e.g.~\cite{AlonLSV13}.) Let us assume for simplicity of presentation that $I = \emptyset$. Observe that \dks\ is, up to an appropriate scaling, equivalent to find a size-$k$ subset $T$ that maximizes $\oind(T)^T \cdot \bW \cdot \oind(T) = \oind(T)^T \ow(T)$. The main observation is that, if we randomly pick a subset $U \subseteq T^{\OPT}$ of size $\Theta_\gamma(\log n)$, then with high probability $\|\ow(U) - \ow(T^{\OPT})\|_\infty \leq O(\gamma)$ and $|\oind(T^{\OPT})^T \ow(T^{\OPT}) - \oind(U)^T \ow(U)| < O(\gamma)$. Roughly speaking,~\cite{Barman18} exploits this by ``guessing'' such a set $U$ and then solves for $T$ such that $\|\ow(U) - \ow(T)\|_\infty \leq O(\gamma)$ and $|\oind(T)^T \ow(U) - \oind(U)^T \ow(U)| < O(\gamma)$; note that (the fractional version of) this is a linear program and can be solved efficiently. \cite{Barman18} then shows that a fractional solution to such a linear program can be rounded to an actual size-$k$ set without any loss in the objective function.

We further push this idea by noting that, if we randomly partition $V$ into $V_1, \dots, V_s$ part where $s = O_\gamma(k/\log n)$, then the intersections $U^{\OPT}_i := V_i \cap T^{\OPT}$ satisfy the two conditions from the previous paragraphs (for $T = U^{\OPT}_i$). Therefore, we may enumerate all sets $U_i \subseteq V_i$ of roughly expected size to construct a collection $\cP_i$ of subsets that satisfies these two conditions. Our goal now become picking $U_1 \in \cP_1, \dots, U_s \in \cP_s$ that maximizes $h(U_1 \cup \cdots \cup U_s)$. This is simply monotone submodular maximization subject to a partition matroid constraint and therefore we may appeal to \Cref{thm:submodular-matroid}. We remark here that the two conditions that all subsets in $\cP_i$ satisfy already ensure that the \dks\ objective is close to optimum.
\ificalp
The full proof of \Cref{thm:submodular-dks} is deferred to the appendix.
\fi

\iffullversion
The approach outlined above is formalized in the following proof. Note that the exact algorithm below is slightly more complicated than above since we also have to deal with the fact that $I$ may be non-empty.

\begin{proof}[Proof of \Cref{thm:submodular-dks}]
%For every $v \in V$, let $w(v) \in \R^V$ be the vector where $w(v)_u = w(\{u, v\})$. Furthermore, for every $U \subset V$, let $w(U) := \sum_{v \in U} w(v)$ and let $\ind(U)$ denote the indicator vector of $U$. Moreover, let $\ow(U) := \frac{1}{|U|} \cdot w(U)$ and $\oind(U) := \frac{1}{|U|} \cdot \ind(U)$. %Furthermore, let $W \in \R^{V \times V}$ denote the matrix representation of the input weights, i.e. $W_{u, v} = w(\{u, v\})$.
Let $k' := k - |I|, V' := V \setminus I, \gamma' = 0.01 \gamma$, $s := \lfloor 0.001 \gamma'^2 k' / \log n \rfloor, t := k'/s$. We may assume w.l.o.g. that $s \geq 1$; otherwise, we can easily solve the problem exactly in claimed running time via brute-force search. %We may assume w.l.o.g. that $k - |I|$ is divisible by $s$. Otherwise, let $r < s$ denote the remainder of $n$ when dividing by $s$; we can enumerate to find $s$ vertices in $T \setminus I$ in $n^{O(s)}$ time and go back to the case where $k - |I|$ is divisible by $s$.

Our algorithm works as follows:
\begin{itemize}
\item Randomly partition $V'$ into $(V'_1, \dots, V'_s)$ where each vertex is independently place in each partition with probability $1/s$. 
\item For every non-empty subset $Q \subseteq V'$ of size at most $(1 + \gamma')t$, do:
\begin{itemize}
%\item Let $t := \frac{1}{|Q|} \sum_{v \in Q} w^v$. Furthermore, let $t_I := \frac{1}{|Q|} \sum_{v \in I} w^v$.
\item For $i = 1, \dots, s$:
\begin{itemize}
\item Let $\cP_i \leftarrow \emptyset$
\item For each non-empty subset $U_i \subseteq V'_i$ of size between $(1 - \gamma')t$ and $(1 + \gamma')t$:
\begin{itemize}
%\item Let $u_i := \sum_{v \in U_i} w^v$.
%\item Let $\tw(U_i) = \frac{1}{s} \cdot w(U_i)$ and $\tind(U_i) = \frac{1}{s} \cdot \ind(U_i)$.
\item  If the following two conditions hold, then add $U_i$ to $\cP_i$:
\begin{align} \label{eq:cond-deg-concen}
\left\|\ow(U_i) - \ow(Q)\right\|_\infty \leq 2\gamma',
\end{align}
\begin{align} \label{eq:cond-opt-concen}
|\oind(U_i)^T \ow(Q) - \oind(Q)^T \ow(Q)| \leq 4\gamma'.
\end{align}
\end{itemize}
\end{itemize}
\item Create a partition matroid $\cM$ on the ground set $\cP_1 \cup \cdots \cup \cP_s$ such that $\cS \subseteq \cP_1 \cup \cdots \cup \cP_s$ is an independent set iff $|\cS \cap \cP_i| \leq 1$ for all $i \in [s]$.
\item Let $f$ denote the set function on the ground set $\cP_1 \cup \cdots \cup \cP_s$ defined as $h(\cS) := h\left(I \cup \bigcup_{S \in \cS} S\right)$
\item Run the algorithm from \Cref{thm:submodular-matroid} to on $(f, \cM)$ to get a set $\cZ_Q \subseteq \cP_1 \cup \cdots \cup \cP_s$.
\item Let $\tZ_Q = \bigcup_{S \in \cZ} S$.
\item If $|\tZ_Q| \geq k'$, let $\tT_Q$ be a random subset of $\tZ_Q$ of size $k'$. Otherwise, let $Z_Q$ be an arbitrary superset of $\tZ_Q$ of size $k'$.
\end{itemize}
\item Output the best set $I \cup Z_Q$ found among all $Q$'s.
\end{itemize}

It is obvious to see that the algorithm runs in $n^{O(t)} = n^{O(\log n / \gamma^2)}$ time. The rest of the proof is devoted to proving~\eqref{eq:apx-guarantee-submodular-dks}.

Let $T'^{\OPT} := T^{\OPT} \setminus I$, and let $U_i^{\OPT} := V'_i \cap T^{\OPT}$. We will start by proving the following claim, which (as we will argue below) ensures that w.h.p. $U_i^{\OPT}$ is included in $\cP_i$.

\begin{claim} \label{claim:random-partition-conditions}
With probability $1 - O(1/n)$ (over the random partition $V'_1, \dots, V'_s$), the following holds for all $i \in [s]$:
\begin{align} \label{eq:intersection-size}
|U_i^{\OPT}| \in [(1 - \gamma')t, (1 + \gamma')t]
\end{align}
\begin{align} \label{eq:weight-apx}
\left\|\ow(U_i) - \ow(T'^{\OPT})\right\|_\infty \leq \gamma'
\end{align}
\begin{align} \label{eq:opt-large}
|\oind(U_i)^T \ow(T'^{\OPT}) - \oind(T'^{\OPT})^T \ow(T'^{\OPT})| \leq \gamma'
\end{align}
\end{claim}

\begin{proof}[Proof of \Cref{claim:random-partition-conditions}]
We will argue that all conditions holds for a fixed $i \in [s]$ with probability $1 - O(1/n^2)$. Union bound over all $i \in [s]$ then yields the claim. 

Let us fix $i \in [s]$. Since each vertex is included in $U_i$ with probability $1/s$, we may apply Chernoff bound (\Cref{lem:chernoff}) to conclude that
\begin{align} \label{eq:intersection-size-fixed-i}
\Pr[|U_i^{\OPT}| \notin [(1 - \gamma')t, (1 + \gamma')t]] \leq 2 \exp\left(-\frac{\gamma'^2 t}{3}\right) \leq 2/n^3.
\end{align}

Next, consider a fixed $v \in V$. We will now bound the probability that $|\ow(U_i)_v - \ow(T'^{\OPT})_v| < \gamma'$. To do so, let us condition on the size of $U_i^{\OPT}$ equal to $g \in \N$. After such a conditioning, we may view the set $U_i^{\OPT}$ as being generated by drawing $u_1, \dots, u_g$ randomly without replacement from $T'^{\OPT}$. Since $\ow(U_i^{\OPT})_v = \frac{1}{g}\left(\sum_{i \in [g]} w(\{u_i, v\})\right)$ and $\E[\ow(U_i^{\OPT})_v] = \ow(T'^{\OPT})_v$, we may apply \Cref{lem:hoeffding} to conclude that
\begin{align} \label{eq:deg-concen-fixed-i}
\Pr[|\ow(U_i^{\OPT})_v - \ow(T'^{\OPT})_v| > \gamma' \mid |U_i^{\OPT}| = g] \leq 2\exp(-\gamma'^2 g). 
\end{align}
Therefore, we have
\begin{align*}
&\Pr[|\ow(U_i^{\OPT})_v - \ow(T'^{\OPT})_v| > \gamma'] \\
&\leq \Pr[|U_i^{\OPT}| < (1 - \gamma')t] + \Pr[|\ow(U_i^{\OPT})_v - \ow(T'^{\OPT})_v| > \gamma' \mid |U_i^{\OPT}| \geq (1 - \gamma')t] \\
&\overset{\text{\eqref{eq:intersection-size-fixed-i}, \eqref{eq:deg-concen-fixed-i}}}{\leq} 2/n^3 + 2\exp(-\gamma'^2 (1 - \gamma')t) \\
&\leq 4/n^3.
\end{align*}
Taking the union bound over all $v \in V$, we have
\begin{align*}
\Pr[\|\ow(U_i^{\OPT})_v - \ow(T'^{\OPT})_v\|_{\infty} > \gamma'] \leq 4/n^2.
\end{align*}

Analogous arguments also imply that
\begin{align*}
\Pr[|\oind(U_i)^T \ow(T'^{\OPT}) - \oind(T'^{\OPT})^T \ow(T'^{\OPT})| > \gamma'] \leq O(1/n^2).
\end{align*}

Applying the union bound, we conclude that all three conditions hold for a fixed $i$ with probability at least $1 - O(1/n^2)$. Finally, applying the union bound over all $i \in [s]$, we have that all three conditions hold for all $i \in [s]$ with probability at least $1 - O(1/n)$, which concludes our proof.
\end{proof}

Let $E$ denote the event that all conditions in \Cref{claim:random-partition-conditions} hold for all $i$. Conditioned on $E$, and letting $Q = U_1$. For all $i \in [s]$, we have
\begin{align*}
\left\|\ow(U_i) - \ow(Q)\right\|_\infty \leq \left\|\ow(U_i) - \ow(T'^{\OPT})\right\|_\infty + \left\|\ow(T'^{\OPT}) - \ow(Q)\right\|_\infty \overset{\text{\eqref{eq:weight-apx}}}{\leq} 2\gamma'.
\end{align*}
and
\begin{align*}
&|\oind(U_i)^T \ow(Q) - \oind(Q)^T \ow(Q)| \\
&\leq |\oind(U_i)^T \ow(T'^{\OPT}) - \oind(T'^{\OPT})^T \ow(T'^{\OPT})| + |\oind(T'^{\OPT})^T \ow(T'^{\OPT}) - \oind(Q)^T \ow(T'^{\OPT})| \\
&\qquad + |\oind(U_i)^T(\ow(Q) - \ow(T'^{\OPT}))| + |\oind(Q)^T(\ow(T'^{\OPT}) - \ow(Q))| \\
&\overset{\text{\eqref{eq:opt-large}}}{\leq} 2\gamma' + \|\oind(U_i)\|_1 \|\ow(Q) - \ow(T'^{\OPT})\|_\infty + \|\oind(Q)\|_1 \|\ow(Q) - \ow(T'^{\OPT})\|_\infty \\
&\overset{\text{\eqref{eq:weight-apx}}}{\leq} 4\gamma'.
\end{align*}
Therefore, $U_i^{\OPT}$ is included in $\cP_i$ for all $i \in [s]$. Thus, the guarantee of \Cref{thm:submodular-matroid} means that $\E[f(\cZ_{Q})] \geq (1 - 1/e)f(\{U_1^{\OPT}, \dots, U_s^{\OPT}\})$. This is equivalent to 
\begin{align} \label{eq:submod-not-final}
\E[h(I \cup \tZ_Q)] \geq h(T^{\OPT}).
\end{align}
Next, we will lower bound $\den(\tZ_Q)$. Note that we may assume\footnote{If $\cZ_Q \cap \cP_i = \emptyset$, we may add any element of $\cP_i$ to $\cZ_Q$.} that $\cZ_Q = \{U_1, \dots, U_s\}$ where $U_i \in \cP_i$. %Since $|U_i| \leq (1 + \gamma')t$, we have $|\tZ_Q| \leq (1 + \gamma')k'$.

For sets $A, B \subseteq V$, we write $w(A)$ to denote $\sum_{\{u, v\} \subseteq A} w(\{u, v\})$ and $w(A, B)$ to denote $\sum_{u \in A, v \in B} w(\{u, v\})$.
We may write $w(I \cup \tZ_Q)$ as
\begin{align}
w(I \cup \tZ_Q)
&= w(I \cup U_1 \cup \dots \cup U_s) \nonumber \\
&= w(I) + \sum_{i \in [s]} w(I, U_i) + \frac{1}{2} \sum_{i, j \in [s]} w(U_i, U_j). \label{eq:weight-expand}
\end{align}
We will now lower bound each term in the sum. First, we have
\begin{align*}
w(I, U_i) &= \ind(I)^T W \ind(U_i) \\
&= \ind(I)^T \bw(U_i) \\
&= |I| \cdot |U_i| \cdot \oind(I)^T \ow(U_i) \\
&\overset{\eqref{eq:cond-deg-concen}}{\geq} |I| \cdot |U_i| \cdot \left(\oind(I)^T \ow(Q) - 2\gamma'\right) \\
&\overset{\eqref{eq:weight-apx}}{\geq} |I| \cdot |U_i| \cdot \left(\oind(I)^T \ow(T'^{\OPT}) - 3\gamma'\right).
\end{align*}
Recall from our construction that $|U_i| \geq (1 - \gamma')t = (1 - \gamma') |T'^{\OPT}|/s$. Plugging this into the above, we have
\begin{align} \label{eq:ui-i-lb}
w(I, U_i) \geq \frac{1 - \gamma'}{s} \cdot w(I, T'^{\OPT}) - 3\gamma' \cdot |I| \cdot |U_i|.
\end{align}
 
Secondly, we have
\begin{align*}
w(U_i, U_j) &= \ind(U_i)^T W \ind(U_j) \\
&= \ind(U_i)^T \bw(U_j) \\
&= |U_i| \cdot |U_j| \cdot \oind(U_i)^T \ow(U_j) \\
&\overset{\eqref{eq:cond-deg-concen}}{\geq} |U_i| \cdot |U_j| \cdot \left(\oind(U_i)^T \ow(Q) - 2\gamma'\right) \\
&\overset{\eqref{eq:cond-opt-concen}}{\geq} |U_i| \cdot |U_j| \cdot \left(\oind(Q)^T \ow(Q) - 6\gamma'\right) \\
&\overset{\eqref{eq:weight-apx}}{\geq} |U_i| \cdot |U_j| \cdot \left(\oind(Q)^T \ow(T'^{\OPT}) - 7\gamma'\right) \\
&\overset{\eqref{eq:opt-large}}{\geq} |U_i| \cdot |U_j| \cdot \left(\oind(T'^{\OPT})^T \ow(T'^{\OPT}) - 8\gamma'\right) \\
&= |U_i| \cdot |U_j| \cdot \left(\frac{2}{|T'^{\OPT}|^2} \cdot w(T'^{\OPT}) - 8\gamma'\right)
\end{align*}
Similar to above, we may now use the fact that $|U_i|, |U_j| \geq (1 - \gamma') |T'^{\OPT}|/s$ to derive
\begin{align} \label{eq:ui-uj-lb}
w(U_i, U_j) \geq \frac{2(1 - 2\gamma')}{s^2} w(T'^{\OPT}) - 8\gamma' \cdot |U_i| \cdot |U_j|
\end{align}

Plugging~\eqref{eq:ui-i-lb} and~\eqref{eq:ui-uj-lb} back into~\eqref{eq:weight-expand}, we arrive at
\begin{align*}
w(I \cup \tZ_Q) 
&\geq w(I) + (1 - \gamma') w(I, T'^{\OPT}) + (1 - 2\gamma') w(T'^{\OPT}) - 8 \gamma' |I \cup \tZ_Q|^2 \\
&\geq (1 - 2\gamma') w(S^{OPT}) - 8 \gamma' |I \cup \tZ_Q|^2.
\end{align*}
Therefore,
\begin{align*}
\den(I \cup \tZ_Q) = \frac{1}{\binom{|I \cup \tZ_Q|}{2}} \cdot w(I \cup \tZ_Q) \geq \frac{1 - 2\gamma'}{\binom{|I \cup \tZ_Q|}{2}} \cdot w(T^{\OPT}) - 16\gamma'.
\end{align*}
Since $|U_i| \leq (1 + \gamma')t$, we have $|\tZ_Q| \leq (1 + \gamma')k'$. This implies that $|I \cup \tZ_Q| \leq (1 + \gamma')k$ Therefore,
\begin{align} 
\den(I \cup \tZ_Q) 
&\geq \frac{k(k - 1)}{(1+\gamma')k((1+\gamma')k + 1)} \cdot \frac{1 - 2\gamma'}{\binom{k}{2}} \cdot w(T^{\OPT}) - 16\gamma' \nonumber \\
&\geq (1 - 5\gamma') \den(T^{\OPT}) - 16\gamma'. \label{eq:den-almost-final}
\end{align}

From \eqref{eq:submod-not-final} and \eqref{eq:den-almost-final}, we can conclude that 
\begin{align}
\E[\den(\tZ_Q \cup I) + h(\tZ_Q \cup I) | E] \geq (1 - 5\gamma') \den(T^{\OPT}) - 16\gamma' + (1 - 1/e)h(T^{\OPT}).
\end{align}
Recall from \Cref{claim:random-partition-conditions} that $E$ happens with probability at least $1 - O(1/n)$. Thus, we have
\begin{align} \label{eq:objective-bound-almost-final}
\E[\den(\tZ_Q \cup I) + h(\tZ_Q \cup I)] \geq  (1 - 5\gamma' - O(1/n)) \den(T^{\OPT}) - 16\gamma' + (1 - 1/e - O(1/n))h(T^{\OPT}).
\end{align}

Finally, $|\tZ_Q| \leq (1 + \gamma')k'$ also implies that
\begin{align*}
&\E[\den(Z_Q \cup I) + h(Z_Q \cup I)] \\
&\geq \frac{k'(k' - 1)}{(1 + \gamma')k'((1 + \gamma')k' - 1)} \E[\den(\tZ_Q \cup I)] + \frac{k'}{(1 + \gamma')k'} \E[h(\tZ_Q \cup I)] \\
&\geq (1 - 3\gamma') \E\left[\left(\den(\tZ_Q \cup I) + h(\tZ_Q \cup I) \right)\right] \\
&\overset{\eqref{eq:objective-bound-almost-final}}{\geq} (1 - 8\gamma' - O(1/n)) \den(T^{\OPT}) - 16\gamma' + (1 - 1/e - 3\gamma' - O(1/n))h(T^{\OPT}) \\
&\geq \den(T^{\OPT}) + (1 - 1/e - 3\gamma' - O(1/n))h(T^{\OPT}) - 24\gamma' - O(1/n),
\end{align*}
which is at least $\den(T^{\OPT}) + (1 - 1/e - \gamma)h(T^{\OPT}) - \gamma$ for sufficiently large $n \geq \Omega(1/\gamma)$. Note that when $n$ is $O(1/\gamma)$, we may simply run the bruteforce $2^{O(n)} = 2^{O(1/\gamma)}$ to solve the problem.
\end{proof}
\fi

\subsection{From Submodular \dks\ to Max-Sum Diversification}

\ificalp
Having provided an approximation algorithm for Submodular \dks, we can use it to approximate Max-Sum Diversification via a similar approach to the reduction from Max-Sum Dispersion to \dks\ in the previous section. In particular, we can prove a structural lemma for Max-Sum Diversification that is analogous to \Cref{lem:max-dispersion-structural} for Max-Sum Dispersion. We can then use the reduction nearly identical to the one in the proof of \Cref{thm:qptas-dispersion} to arrive at \Cref{thm:diversification-main-detailed}. The full details are deferred to the appendix.
\fi

\iffullversion
Having provided an approximation algorithm for Submodular \dks, we now turn our attention back to how to use it to approximate Max-Sum Diversification.

\subsubsection{A Structural Lemma}

We start by proving a structural lemma for Max-Sum Diversification that is analogous to \Cref{lem:max-dispersion-structural} for Max-Sum Dispersion.

\begin{lemma} \label{lem:max-diversification-structural}
Let $S^{\OPT}$ be any optimal solution of Max-Sum Diversification and let $u^{\min}$ be the vertex in $S^{\OPT}$ that minimizes $\disp(u^{\min}, S^{\OPT})$. Furthermore, let $v$ be any vertex \emph{not} in $S^{\OPT}$ and let $\Delta = d(u^{\min}, v)$. Then, we have
\begin{align*}
\dive(S^{\OPT}) \geq \frac{p(p - 1)\Delta}{16}.
\end{align*} 
\end{lemma}

\begin{proof}[Proof of \Cref{lem:max-diversification-structural}]
Let $\SOPTc := S^{\OPT} \cap \cB{u^{\min}}{0.5\Delta}$. Consider two cases, based on the size of $\SOPTc$:
\begin{itemize}
\item Case I: $|\SOPTc| \leq p / 2$. This is similar to the first case in the proof of \Cref{lem:max-dispersion-structural}: we have 
\begin{align*}
\disp(u^{\min}, S^{\OPT}) \geq \disp(u^{\min}, S^{\OPT} \setminus \SOPTc) \geq (p/2)(\Delta/2) = \Delta p / 4.
\end{align*}
Furthermore, by our definition of $u^{\min}$, we have
\begin{align*}
\disp(S^{\OPT}) = \frac{1}{2} \sum_{u \in S} \disp(u, S^{\OPT}) \geq \frac{p}{2} \disp(u^{\min}, S^{\OPT}).
\end{align*}
Combining the two inequalities, we have $\dive(S^{\OPT}) \geq \disp(S^{\OPT}) \geq p^2\Delta / 8$.
\item Case II: $|\SOPTc| > p / 2$. In this case, since $S^{\OPT}$ is an optimal solution, replacing any $z \in S^{\OPT}_{\text{close}}$ with $v$ must not increase the solution value, i.e. 
\begin{align*}
\left[f(S^{\OPT}) - f(S^{\OPT} \setminus \{z\})\right] + \disp(z, S^{\OPT}) &\geq \disp(v, S^{\OPT} \setminus \{z\}) \\
&\geq \disp(v, \SOPTc \setminus \{z\}) \\
&\geq ((p - 1)/2)(0.5\Delta),
\end{align*}
where the second inequality uses the fact that for any $z' \in \SOPTc$ we have $d(v, z') \geq d(u, v) - d(u, z') \geq \Delta - 0.5\Delta$. From this, we have
\begin{align*}
\dive(S^{\OPT}) = f(S^{\OPT}) + \disp(S^{\OPT}) 
&= f(S^{\OPT}) + \frac{1}{2} \sum_{u \in S} \disp(u, S^{\OPT}) \\
&\geq \sum_{z \in S^{\OPT}} \left[f(S^{\OPT}) - f(S^{\OPT} \setminus \{z\})\right] + \frac{1}{2} \sum_{u \in S} \disp(u, S^{\OPT}) \\
&\geq \frac{1}{2} \sum_{z \in \SOPTc} \left(\left[f(S^{\OPT}) - f(S^{\OPT} \setminus \{z\})\right] + \disp(z, S^{\OPT})\right) \\
&\geq |\SOPTc| \cdot \frac{(p - 1)\Delta}{8} \\
&> \frac{p(p - 1)}{16\Delta},
\end{align*}
where the last inequality follows from our assumption of this case. \qedhere
\end{itemize}
\end{proof}

\subsubsection{Putting Things Together: Proof of \Cref{thm:diversification-main-detailed}}

Finally, we use the structural lemma to reduce Max-Sum Diversification to Submodular \dks. Again, this reduction is analogous to that of Max-Sum Dispersion to \dks\ presented in the previous section.

\begin{proof}[Proof of \Cref{thm:diversification-main-detailed}]
Our algorithm works as follows:
\begin{enumerate}
\item For every distinct $u, v \in U$ do:
\begin{enumerate}
\item Let $\Delta := d(u, v)$ and $\Delta^* = 20 \Delta / \eps$.
\item If $|\bcB{u}{\Delta}| \geq p$, then skip the following steps and continue to the next pair $u, v$.
\item Otherwise, create a submodular \dks\ instance where $V := \cB{u}{\Delta^*}, I := V \setminus \cB{u}{\Delta}$, $k = p - |\bcB{u}{\Delta^*}|$, define $h$ by $h(C) := \frac{1}{k(k - 1) \Delta^*} \cdot f(\bcB{u}{\Delta} \cup C)$, and define $w$ by $w(\{y, z\}) := (0.5 / \Delta^*) \cdot d(y, z)$ for all $y, z \in V$.
\item Use the algorithm from \Cref{thm:submodular-dks} to solve the above instance with $\gamma := 0.00005\eps^2$. Let $T$ be the solution found.
\item Finally, let $S^{u, v} := T \cup \bcB{u}{\Delta^*}$.
\end{enumerate}
\item Output the best solution among $S^{u, v}$ considered.
\end{enumerate}

It is obvious that the running time is dominated by the running time of the algorithm from \Cref{thm:submodular-dks} which takes $n^{O(\log n / \gamma^2)} = n^{O(\log n / \eps^4)}$ as desired.

Next, we prove the algorithm's approximation guarantee. To do this, let us consider $S^{\OPT}, u^{\min}$ as defined in \Cref{lem:max-dispersion-structural}, and let $u = u^{\min}, v := \argmax_{z \notin S^{\OPT}} d(u, z), \Delta = d(u, v)$. Recall that by the definition of $v$, we have $S^{\OPT} \supseteq \bcB{u}{\Delta} = \bcB{u}{\Delta^*} \cup I$. 
Let $T$ be the solution found by the submodular \dks\ algorithm for this $u, v$ and let $T' := T \setminus I$. We have
\begin{align}
&\disp(S^{u, v}) \nonumber \\
&= \disp(\bcB{u}{\Delta^*}) + \disp(\bcB{u}{\Delta^*}, T) + \disp(T) \nonumber \\
&= \disp(\bcB{u}{\Delta^*}) + \disp(\bcB{u}{\Delta^*}, I) + \disp(\bcB{u}{\Delta^*}, T') + \disp(T). \label{eq:current-s-decompose-2}
\end{align}
Similarly, letting $S := S^{\OPT} \cap \cB{u}{\Delta^*}$ and $S' := S^{\OPT} \setminus I$, we have
\begin{align}
&\disp(S^{\OPT}) \nonumber \\
&= \disp(\bcB{u}{\Delta^*}) + \disp(\bcB{u}{\Delta^*}, I) + \disp(\bcB{u}{\Delta^*}, S') + \disp(S). \label{eq:optimal-s-decompose-2}
\end{align}

Now, observe from the definition of the submodular \dks\ instance (for this $u, v$) that for any $J$ such that $I \subseteq J \subseteq V$, we have
\begin{align*} 
\den(J) = \frac{1}{k(k - 1)\Delta^*} \cdot \disp(J)
\end{align*}
and
\begin{align*}
h(J) = \frac{1}{k(k - 1)\Delta^*} \cdot f(\bcB{u}{\Delta} \cup J).
\end{align*}
The approximation guarantee from \Cref{thm:submodular-dks} ensures that $\E[h(S^{u, v}) + \den(T)] \geq (1 - 1/e - \gamma)h(S^{\OPT}) + \den(S) - \gamma$. Using the above two equalities, we can rewrite this guarantee as
\begin{align} \label{eq:dks-submodular-guarantee}
\disp(S) + (1 - 1/e - \gamma)f(S^{\OPT}) - \E[\disp(T) + f(S^{u, v})] \leq \gamma \cdot \Delta^* \cdot k(k - 1).
\end{align}

Taking the difference between~\eqref{eq:optimal-s-decompose-2} and~\eqref{eq:current-s-decompose-2} and applying~\eqref{eq:dks-submodular-guarantee}, we have
\begin{align*}
&\disp(S^{\OPT}) + (1 - 1/e - \gamma)f(S^{\OPT}) - \E[\disp(S^{u, v}) + f(S^{u, v})] \\
&\leq \disp(\bcB{z}{\Delta^*}, S') - \disp(\bcB{z}{\Delta^*}, T') + \gamma \cdot 2\Delta^* \cdot k(k - 1). \\
(\text{Our choice of } \gamma) &\leq \disp(\bcB{z}{\Delta^*}, S') - \disp(\bcB{z}{\Delta^*}, T') + 0.001\eps \Delta \cdot p(p - 1) \\
(\text{\Cref{lem:max-diversification-structural}}) &\leq \disp(\bcB{z}{\Delta^*}, S') - \disp(\bcB{z}{\Delta^*}, T') + 0.1\eps \dive(S^{\OPT}).
\end{align*}

Now, since $|S'| = |T'| \leq p$ and $S', T' \subseteq \cB{z}{\Delta}$, we have
\begin{align*}
\disp(\bcB{z}{\Delta^*}, S') - \disp(\bcB{z}{\Delta^*}, T') 
&\leq |\bcB{z}{\Delta^*}| \cdot |S'| \cdot ((\Delta^* + \Delta) - (\Delta^* - \Delta)) \\
&\leq 2 |\bcB{z}{\Delta^*}| \cdot |S'| \cdot \Delta \\
(\text{Our choice of } \Delta^*) &\leq 0.1\eps \cdot |\bcB{z}{\Delta^*}| \cdot |S'| \cdot (\Delta^* - \Delta) \\
&\leq 0.1\eps \cdot \disp(\bcB{z}{\Delta^*}, S') \\
&\leq 0.1\eps \cdot \disp(S^{\OPT}).
\end{align*}

Combining the above two inequalities, we get
\begin{align*}
\E[\dive(S^{u, v})] 
&\geq \disp(S^{\OPT}) + (1 - 1/e - \gamma)f(S^{\OPT}) - 0.2\eps \dive(S^{\OPT}) \\
&\geq (1 - \eps) \disp(S^{\OPT}) + (1 - 1/e - \eps) f(S^{\OPT}),
\end{align*}
completing our proof.
\end{proof}
\fi

\section{Conclusion}

In this work, we consider three problems related to diversification: DCG in diversified search ranking, Max-Sum Dispersion and Max-Sum Diversification. For DCG, we give a PTAS and prove a nearly matching running time lower bound. For Max-Sum Dispersion, we give a QPTAS and similarly provide evidence for nearly matching running time lower bounds. Finally, we give a quasi-polynomial time algorithm for Max-Sum Diversification that achieves an approximation ratio arbitrarily close to $(1 - 1/e)$, which is also tight given the $(1 - 1/e + o(1))$ factor NP-hardness of approximating Maximum $k$-Coverage~\cite{Feige98}. Our algorithms for DCG and Max-Sum Diversification are randomized and it remains an interesting open question whether there are deterministic algorithms with similar running times and approximation ratios.

\iffullversion
\section*{Acknowledgment}

We are grateful to Karthik C.S. for insightful discussions, and to Badih Ghazi for encouraging us to work on the problems.
\fi

\ificalp
\bibliographystyle{plainurl}
\fi
\iffullversion
\bibliographystyle{alpha}
\fi
\bibliography{ref}

\appendix

\section{Inapproximability of Max-Sum Dispersion}
\label{app:hardness-from-dks}

As we have already seen in our algorithm (\Cref{sec:disp}), Max-Sum Dispersion problem is closely related to the Densest $k$-Subgraph problem. In fact, the known NP-hardness reductions (e.g.~\cite{MoonC84}) can also be viewed as a reverse reduction--form Densest $k$-Subgraph to Max-Sum Dispersion.

To make this formal, let us define the \emph{Densest $k$-Subgraph (\dks) with perfect completeness} be the same as \dks\ except that there is a promise that the optimum is exactly one, i.e. there exists a set $T \subseteq V$ of size $k$ such that $w(u, v) = 1$ for all distinct $u, v \in T$. With this definition, the known reductions may be formulated as follows:

\begin{lemma} \label{lem:hardness-disp-from-dks}
If there exists an $(0.5 + \delta)$-approximation algorithm for Max-Sum Dispersion that runs in time $f(n, p)$, then there is a $2\delta$-approximation algorithm for \dks\ with perfect completeness that runs in time $f(n, k)$.
\end{lemma}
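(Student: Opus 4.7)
The plan is to construct, from any \dks\ instance $(V, w, k)$ with perfect completeness, a Max-Sum Dispersion instance on which the assumed $(0.5 + \delta)$-approximation must output a set of density at least $2\delta$. Specifically, I would take $U := V$, $p := k$, and define $d(u,v) := 1 + w(\{u,v\})$ for every distinct $u, v$ (with $d(u,u) := 0$). Symmetry is immediate, and the triangle inequality holds trivially because every distance between distinct points lies in $[1,2]$, so $d(u,v) \leq 2 \leq d(u,z) + d(z,v)$ for any distinct $u,v,z$. Hence $(U,d)$ is a valid metric space, and the reduction preserves $n$ and maps $k$ to $p$, so the running time is exactly $f(n,k)$ as required.

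The next step is to pin down the Max-Sum Dispersion optimum. By perfect completeness there is a set $T^* \subseteq V$ of size $k$ with $w(\{u,v\}) = 1$ for every distinct $u,v \in T^*$; for this $T^*$ we have $\disp(T^*) = 2\binom{k}{2}$. Since $d \leq 2$ everywhere, no set of size $k$ can achieve dispersion greater than $2\binom{k}{2}$, so the Max-Sum Dispersion optimum equals exactly $2\binom{k}{2}$.

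Running the hypothesized $(0.5 + \delta)$-approximation on this instance then returns a set $S$ of size $k$ with
\[
\binom{k}{2} + \sum_{\{u,v\} \subseteq S} w(\{u,v\}) \;=\; \disp(S) \;\geq\; (0.5 + \delta) \cdot 2\binom{k}{2} \;=\; (1 + 2\delta)\binom{k}{2},
\]
and rearranging gives $\den(S) = \binom{k}{2}^{-1} \sum_{\{u,v\} \subseteq S} w(\{u,v\}) \geq 2\delta$. Since the \dks\ optimum under perfect completeness is exactly $1$, this is the promised $2\delta$-approximation.

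There is no genuine obstacle here; the only nontrivial design choice is the shift by $1$ in the definition of $d$, which simultaneously forces $d$ to satisfy the triangle inequality (since all nonzero values lie in $[1,2]$) and makes the additive $\binom{k}{2}$ contribution cancel cleanly so that the $(0.5 + \delta)$ dispersion guarantee converts into a $2\delta$ density guarantee. The one bookkeeping point worth stating explicitly is that $2\binom{k}{2}$ is actually the global maximum of $\disp$, without which the translation of approximation ratios would fail.
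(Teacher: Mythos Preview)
Your proposal is correct and follows essentially the same approach as the paper: the same shifted metric $d(u,v)=1+w(\{u,v\})$, the same identification of the dispersion optimum as $2\binom{k}{2}$, and the same arithmetic converting a $(0.5+\delta)$-approximation into density at least $2\delta$. If anything, you are slightly more careful than the paper in explicitly verifying the triangle inequality and in noting that $2\binom{k}{2}$ is the global dispersion maximum.
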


\begin{proof}
Given an instance $(V, w, k)$ of \dks\ with perfect completeness. We create the instance $(U, d)$ of Max-Sum Dispersion as follows. Let $V = U, p = k$ and then let $d(u, v) = 1 + w(\{u, v\})$ for all distinct $u, v \in V$. It is simple to see that $d$ is a valid metric. Furthermore, for any $T \subseteq V$ of size $k$, we have $\disp(T) = \frac{k(k - 1)}{2} \cdot (1 + w(T))$. Since we know that the optimum of \dks\ instance is one, the optimum of Max-Sum Dispersion is $k(k - 1)$. Thus, an $(0.5 + \delta)$-approximation algorithm for Max-Sum Dispersion will find $T \subseteq V$ of size $k$ such that $\disp(T) \geq k(k - 1)(0.5 + \delta)$ which implies that $w(T) \geq 2\delta$. In other words, the algorithm yields a $2\delta$-approximation for \dks\ with perfect completeness.
\end{proof}

A number of recent works have proved hardness of approximation for \dks\ with perfect completeness, with varying inapproximability factors, running time lower bound and approximation ratio. Each of them results in a different hardness result for Max-Sum Dispersion, which we list below.

\paragraph*{Hardness Based on Planted Clique Problem.}
We start with the hardness results based on the \emph{Planted Clique problem}~\cite{Karp76,Jerrum92}, which is to distinguish between a random Erdos-Renyi $G(n, 1/2)$ graph and one in which a clique of size say $n^{0.4}$ is added. The \emph{Planted Clique Hypothesis} states that this problem cannot be solved in polynomial time. Alon et al.~\cite{alon2011inapproximability} showed that, under this hypothesis, \dks\ with perfect completeness is hard to approximate to any constant factor. Plugging this into \Cref{lem:hardness-disp-from-dks}, we immediately get:
\begin{corollary}
Assuming the Planted Clique Hypothesis, there is no polynomial-time $(0.5 + \delta)$-approximation algorithm for Max-Sum Dispersion for all constant $\delta > 0$.
\end{corollary}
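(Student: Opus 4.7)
The plan is to obtain this as an immediate consequence of the reduction in \Cref{lem:hardness-disp-from-dks} combined with the Alon et al.~\cite{alon2011inapproximability} inapproximability result for \dks\ with perfect completeness under the Planted Clique Hypothesis. Concretely, I would argue by contrapositive: suppose, towards contradiction, that for some constant $\delta > 0$ there is a polynomial-time $(0.5 + \delta)$-approximation algorithm for Max-Sum Dispersion, i.e. one running in time $f(n,p) = \poly(n)$.

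Applying \Cref{lem:hardness-disp-from-dks} to this algorithm with the same $\delta$ immediately yields a polynomial-time $2\delta$-approximation algorithm for \dks\ with perfect completeness (the reduction itself is polynomial-time, so the composed running time remains polynomial). Since $2\delta > 0$ is a constant, this contradicts the theorem of~\cite{alon2011inapproximability}, which asserts that under the Planted Clique Hypothesis, no polynomial-time algorithm approximates \dks\ with perfect completeness to within any constant factor. Thus no such $(0.5 + \delta)$-approximation algorithm can exist.

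There is essentially no obstacle here beyond invoking the two ingredients correctly: \Cref{lem:hardness-disp-from-dks} is already stated in a form tailored to this application, and the Alon et al. hardness is a black-box citation. The only minor care needed is to note that the reduction preserves both the running time class (polynomial to polynomial) and the size parameter ($n$ and $p = k$ are mapped identically), so that a constant-factor gap on the Max-Sum Dispersion side transfers to a constant-factor gap on the \dks\ side without any loss. The corollary then follows in one line.
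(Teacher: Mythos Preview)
Your proposal is correct and matches the paper's approach exactly: the corollary is obtained by plugging the Alon et al.~\cite{alon2011inapproximability} hardness for \dks\ with perfect completeness into \Cref{lem:hardness-disp-from-dks}. The paper states this in one line without further elaboration, and your contrapositive write-up simply spells out that same derivation.
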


Recently, a stronger hypothesis called the \emph{Strongish Planted Clique Hypothesis} has been proposed~\cite{ManurangsiRS21}. It states that the Planted Clique problem cannot be solved even in $n^{o(\log n)}$ time. Under this hypothesis, the above running time lower bound immediately improves to $n^{o(\log n)}$.
\begin{corollary}
Assuming the Strongish Planted Clique Hypothesis, there is no $n^{o(\log n)}$-time $(0.5 + \delta)$-approximation algorithm for Max-Sum Dispersion for all constant $\delta > 0$.
\end{corollary}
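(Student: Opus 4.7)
The plan is to follow exactly the template of the previous corollary, substituting the Strongish Planted Clique Hypothesis (SPCH) and its stronger consequences for the standard Planted Clique Hypothesis. Concretely, I would argue by contraposition: suppose there existed an $n^{o(\log n)}$-time $(0.5 + \delta)$-approximation algorithm for Max-Sum Dispersion for some constant $\delta > 0$. Applying Lemma~\ref{lem:hardness-disp-from-dks} with the parameter instantiation $f(n, p) = n^{o(\log n)}$ immediately produces an $n^{o(\log n)}$-time $(2\delta)$-approximation algorithm for \dks\ with perfect completeness.

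The second step is to invoke the corresponding hardness-of-approximation result for \dks\ with perfect completeness under SPCH. The Alon et al.~\cite{alon2011inapproximability} reduction that was used for the previous corollary is essentially an $n^{O(1)}$-blowup reduction from detecting planted cliques to approximating \dks\ with perfect completeness to within any constant factor; plugging the SPCH $n^{o(\log n)}$ lower bound for Planted Clique into the same reduction (as recorded in~\cite{ManurangsiRS21}) rules out any constant-factor approximation for \dks\ with perfect completeness in time $n^{o(\log n)}$. Since $2\delta$ is a positive constant, this directly contradicts the algorithm from the first step, finishing the argument.

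The only step requiring any care is ensuring that the Alon et al. reduction is indeed polynomial-size and gap-preserving, so that a $T(n)$-time approximation algorithm for \dks\ with perfect completeness translates to a $\poly(n) \cdot T(\poly(n))$-time algorithm for Planted Clique without blowing up past $n^{o(\log n)}$. I would handle this by citing the formal statement in~\cite{ManurangsiRS21} that packages precisely this consequence of SPCH, so that Lemma~\ref{lem:hardness-disp-from-dks} can then be composed with it as a black box. No new calculation is needed beyond the proof of Lemma~\ref{lem:hardness-disp-from-dks} already given.
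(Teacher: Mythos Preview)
Your proposal is correct and matches the paper's approach exactly: the paper does not give a separate proof of this corollary but simply remarks that, under SPCH, the running-time lower bound from the previous corollary ``immediately improves to $n^{o(\log n)}$,'' which is precisely the contrapositive argument you spell out via Lemma~\ref{lem:hardness-disp-from-dks} composed with the Alon et al.\ reduction and the SPCH consequence recorded in~\cite{ManurangsiRS21}.
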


\paragraph*{Hardness Based on Exponential Time Hypotheses.}
Next, we state the hardness results based on the Exponential Time Hypothesis (ETH) and the Gap Exponential Time Hypothesis (Gap-ETH). ETH~\cite{IPZ01} postulates that there is no $2^{o(n)}$-time algorithm to decide whether a given $n$-variable 3SAT formula is satisfiable. Gap-ETH~\cite{Din16,ManurangsiR17} is a strengthening of ETH; it asserts that there is no $2^{o(n)}$-time algorithm to distinguish between a satisfiable $n$-variable 3SAT formula and one which is not even $(1 - \eps)$-satisfiable for some $\eps > 0$. 

Braverman et al.~\cite{BravermanKRW17} showed that, assuming Gap-ETH\footnote{Note that there reduction also works with ETH but it only gives a slightly weaker running time lower bound of $n^{\tilde{o}\left(\frac{\log n}{(\log \log n)^{O(1)}}\right)}$.}, there is no $n^{o(\log n)}$-time $(1 - \gamma)$-approximation algorithm for \dks\ with perfect completeness for some constant $\gamma > 0$. Plugging this into \Cref{lem:hardness-disp-from-dks}, it gives the following hardness for Max-Sum Dispersion.
\begin{corollary}
Assuming Gap-ETH, there is no $n^{\tilde{o}(\log n)}$-time $(1 - \gamma)$-approximation algorithm for Max-Sum Dispersion for constant some $\gamma > 0$.
\end{corollary}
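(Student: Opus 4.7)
The plan is a direct composition of \Cref{lem:hardness-disp-from-dks} with the Braverman et al.~\cite{BravermanKRW17} hardness for \dks\ with perfect completeness invoked in the preceding paragraph. That result provides a specific constant $\gamma^{\star} > 0$ such that, under Gap-ETH, no $n^{o(\log n)}$-time algorithm produces a $(1 - \gamma^{\star})$-approximation for \dks\ with perfect completeness. I would argue by contrapositive: suppose, toward a contradiction, that there is an $n^{\tilde{o}(\log n)}$-time $(1 - \gamma)$-approximation algorithm for Max-Sum Dispersion, where $\gamma$ is to be fixed appropriately.

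To align the approximation factors, I would set $\gamma := \gamma^{\star}/2$ and view the hypothetical Max-Sum Dispersion algorithm as a $(0.5 + \delta)$-approximation with $\delta := 0.5 - \gamma = (1 - \gamma^{\star})/2$. Applying \Cref{lem:hardness-disp-from-dks} then produces a $2\delta = (1 - \gamma^{\star})$-approximation algorithm for \dks\ with perfect completeness whose running time is still $n^{\tilde{o}(\log n)}$, because the reduction in that lemma preserves both the instance size and the running time (it simply sets $p = k$ and defines $d(u,v) := 1 + w(\{u,v\})$). Since $n^{\tilde{o}(\log n)} \subseteq n^{o(\log n)}$, this contradicts the Braverman et al. lower bound, completing the proof.

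There is no substantive obstacle here; the argument is essentially bookkeeping of approximation factors and running times. The only points worth a moment of care are (i) the quantifier structure---the constant $\gamma^{\star}$ from Braverman et al. is a specific positive constant, so the resulting $\gamma = \gamma^{\star}/2$ is likewise a specific positive constant, matching the ``for some constant $\gamma > 0$'' quantifier in the statement---and (ii) the verification that \Cref{lem:hardness-disp-from-dks} really does preserve the $n^{\tilde{o}(\log n)}$ running time bound under the identification $p = k$, which is immediate from the form of the reduction.
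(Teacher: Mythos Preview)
Your proposal is correct and is exactly the paper's approach: the paper simply states that the corollary follows by ``plugging'' the Braverman et al.\ hardness for \dks\ with perfect completeness into \Cref{lem:hardness-disp-from-dks}, and you have spelled out the (straightforward) matching of approximation factors and running times that this entails.
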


Manurangsi~\cite{Manurangsi17} proved an ETH-based inapproximability result for \dks\ with perfect completeness that rules out any polynomial-time nearly-polynomial factor approximation. Plugging this to \Cref{lem:hardness-disp-from-dks}, we get:
\begin{corollary}
Assuming ETH, there is no polynomial-time $\left(0.5 + \frac{1}{n^{1/(\log \log n)^{O(1)}}}\right)$-approximation algorithm for Max-Sum Dispersion.
\end{corollary}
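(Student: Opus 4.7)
The plan is to simply combine the reduction in \Cref{lem:hardness-disp-from-dks} with the ETH-based inapproximability result of Manurangsi~\cite{Manurangsi17} for \dks\ with perfect completeness, in the same way that the previous two corollaries in this appendix were derived from the Planted Clique and Gap-ETH-based hardness results. No new reduction is required.

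First, I would recall the relevant statement from~\cite{Manurangsi17}: assuming ETH, there is no polynomial-time algorithm that, given an instance $(V, w, k)$ of \dks\ with perfect completeness, outputs a subset of size $k$ with density at least $n^{-1/(\log\log n)^{O(1)}}$. Here $n = |V|$. Next, I would apply \Cref{lem:hardness-disp-from-dks} in the contrapositive: a polynomial-time $(0.5 + \delta)$-approximation algorithm for Max-Sum Dispersion yields a polynomial-time $2\delta$-approximation algorithm for \dks\ with perfect completeness (with the same $n$ and with $p = k$). Setting $2\delta = n^{-1/(\log\log n)^{O(1)}}$ and observing that the factor of $2$ is absorbed into the $(\log\log n)^{O(1)}$ in the exponent of $n$, we obtain the claimed hardness threshold $0.5 + n^{-1/(\log\log n)^{O(1)}}$.

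There is essentially no obstacle to writing this out; the only minor bookkeeping point is verifying that the parameter $p$ in the Max-Sum Dispersion instance equals $k$ in the \dks\ instance (which is immediate from the reduction in \Cref{lem:hardness-disp-from-dks}), so that the ``$n^{1/(\log\log n)^{O(1)}}$'' quantity on the Max-Sum Dispersion side is taken with respect to the same size parameter. Given how brief this argument is, I would present it in the same style as the two preceding corollaries in this appendix, i.e.\ one sentence invoking~\cite{Manurangsi17} and one sentence plugging into \Cref{lem:hardness-disp-from-dks}.
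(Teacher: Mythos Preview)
Your proposal is correct and matches the paper's approach exactly: the paper simply states Manurangsi's ETH-based hardness for \dks\ with perfect completeness and notes that plugging it into \Cref{lem:hardness-disp-from-dks} yields the corollary, with no additional argument. Your observation that the factor of $2$ is absorbed into the $(\log\log n)^{O(1)}$ exponent is the only thing one might add, and it is routine.
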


Finally, in addition to approximation algorithms, parameterized algorithms form another popular set of techniques used to handle hard problems (see e.g.~\cite{CyganFKLMPPS15} for more background). It is therefore natural to ask whether there exists a fixed-parameter tractable (FPT) algorithm for Max-Sum Dispersion that beats a factor of 0.5. Unfortunately, Chalermsook et al.~\cite{ChalermsookCKLM20} proved that, under Gap-ETH, there is no FPT algorithm for \dks\ with perfect completeness that achieves approximation ratio $k^{o(1)}$. Plugging this into \Cref{lem:hardness-disp-from-dks} also rules out FPT algorithm for Max-Sum Dispersion with better-than-0.5 approximation ratio:
\begin{corollary}
Assuming Gap-ETH, for any function $g$, there is no $g(k) \cdot n^{O(1)}$-time $\left(0.5 + \frac{1}{k^{o(1)}}\right)$-approximation algorithm for Max-Sum Dispersion.

\end{corollary}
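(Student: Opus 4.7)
The plan is to derive this corollary by direct composition of the reduction in \Cref{lem:hardness-disp-from-dks} with the FPT inapproximability result of Chalermsook et al.~\cite{ChalermsookCKLM20}, following the exact pattern of the previous corollaries in this appendix. The crucial structural observation is that the reduction of \Cref{lem:hardness-disp-from-dks} is parameter-preserving: it sets $p = k$, so any FPT algorithm in the parameter $p$ for Max-Sum Dispersion translates directly into an FPT algorithm in the parameter $k$ for \dks\ with perfect completeness.

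Concretely, I would argue by contrapositive. Suppose for the sake of contradiction that, for some function $g$, there exists a $g(p)\cdot n^{O(1)}$-time algorithm for Max-Sum Dispersion with approximation ratio $0.5 + \frac{1}{p^{o(1)}}$. Instantiating \Cref{lem:hardness-disp-from-dks} with $\delta = \frac{1}{p^{o(1)}}$ and $f(n,p) = g(p)\cdot n^{O(1)}$, we obtain a $g(k)\cdot n^{O(1)}$-time algorithm for \dks\ with perfect completeness whose output $T$ satisfies $w(T) \geq 2\delta = \frac{2}{k^{o(1)}} = k^{-o(1)}$ (using $p = k$). Under the standard convention that the approximation ratio for \dks\ means the multiplicative gap $\OPT/\mathrm{ALG}$, this is an FPT approximation with ratio $k^{o(1)}$. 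By the result of \cite{ChalermsookCKLM20}, no such algorithm exists under Gap-ETH, yielding the desired contradiction.

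Since the proof is a direct chaining of two previously established results, there is no real technical obstacle. The only point that warrants care is confirming that the parameters in the reduction match up correctly (the $k$ of the \dks\ instance being literally the $p$ of the Max-Sum Dispersion instance, which is immediate from the construction in the proof of \Cref{lem:hardness-disp-from-dks}), and that the conversion between the additive form $0.5 + \delta$ used for Max-Sum Dispersion and the multiplicative form $k^{o(1)}$ used for \dks\ preserves the $o(1)$ qualifier in the exponent, which it clearly does since $2/k^{o(1)} = k^{-o(1)}$.
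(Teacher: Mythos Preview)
Your proposal is correct and follows exactly the approach the paper takes: the paper does not give an explicit proof for this corollary but simply notes that it follows by ``plugging'' the Chalermsook et al.\ result into \Cref{lem:hardness-disp-from-dks}, which is precisely the contrapositive composition you wrote out. Your added remarks about the parameter-preserving nature of the reduction ($p=k$) and the conversion between the additive $0.5+\delta$ and multiplicative $k^{o(1)}$ ratios make explicit the details the paper leaves implicit.
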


\ificalp
\section{Missing Proofs from Section~\Cref{sec:diversification}}

\subsection{Proof of \Cref{thm:submodular-dks}}

The proof below follows the outline in \Cref{subsec:submodular-dks}. Note that the exact algorithm below is slightly more complicated than above since we also have to deal with the fact that $I$ may be non-empty.

\begin{proof}[Proof of \Cref{thm:submodular-dks}]
%For every $v \in V$, let $w(v) \in \R^V$ be the vector where $w(v)_u = w(\{u, v\})$. Furthermore, for every $U \subset V$, let $w(U) := \sum_{v \in U} w(v)$ and let $\ind(U)$ denote the indicator vector of $U$. Moreover, let $\ow(U) := \frac{1}{|U|} \cdot w(U)$ and $\oind(U) := \frac{1}{|U|} \cdot \ind(U)$. %Furthermore, let $W \in \R^{V \times V}$ denote the matrix representation of the input weights, i.e. $W_{u, v} = w(\{u, v\})$.
Let $k' := k - |I|, V' := V \setminus I, \gamma' = 0.01 \gamma$, $s := \lfloor 0.001 \gamma'^2 k' / \log n \rfloor, t := k'/s$. We may assume w.l.o.g. that $s \geq 1$; otherwise, we can easily solve the problem exactly in claimed running time via brute-force search. %We may assume w.l.o.g. that $k - |I|$ is divisible by $s$. Otherwise, let $r < s$ denote the remainder of $n$ when dividing by $s$; we can enumerate to find $s$ vertices in $T \setminus I$ in $n^{O(s)}$ time and go back to the case where $k - |I|$ is divisible by $s$.

Our algorithm works as follows:
\begin{itemize}
\item Randomly partition $V'$ into $(V'_1, \dots, V'_s)$ where each vertex is independently place in each partition with probability $1/s$. 
\item For every non-empty subset $Q \subseteq V'$ of size at most $(1 + \gamma')t$, do:
\begin{itemize}
%\item Let $t := \frac{1}{|Q|} \sum_{v \in Q} w^v$. Furthermore, let $t_I := \frac{1}{|Q|} \sum_{v \in I} w^v$.
\item For $i = 1, \dots, s$:
\begin{itemize}
\item Let $\cP_i \leftarrow \emptyset$
\item For each non-empty subset $U_i \subseteq V'_i$ of size between $(1 - \gamma')t$ and $(1 + \gamma')t$:
\begin{itemize}
%\item Let $u_i := \sum_{v \in U_i} w^v$.
%\item Let $\tw(U_i) = \frac{1}{s} \cdot w(U_i)$ and $\tind(U_i) = \frac{1}{s} \cdot \ind(U_i)$.
\item  If the following two conditions hold, then add $U_i$ to $\cP_i$:
\begin{align} \label{eq:cond-deg-concen}
\left\|\ow(U_i) - \ow(Q)\right\|_\infty \leq 2\gamma',
\end{align}
\begin{align} \label{eq:cond-opt-concen}
|\oind(U_i)^T \ow(Q) - \oind(Q)^T \ow(Q)| \leq 4\gamma'.
\end{align}
\end{itemize}
\end{itemize}
\item Create a partition matroid $\cM$ on the ground set $\cP_1 \cup \cdots \cup \cP_s$ such that $\cS \subseteq \cP_1 \cup \cdots \cup \cP_s$ is an independent set iff $|\cS \cap \cP_i| \leq 1$ for all $i \in [s]$.
\item Let $f$ denote the set function on the ground set $\cP_1 \cup \cdots \cup \cP_s$ defined as $h(\cS) := h\left(I \cup \bigcup_{S \in \cS} S\right)$
\item Run the algorithm from \Cref{thm:submodular-matroid} to on $(f, \cM)$ to get a set $\cZ_Q \subseteq \cP_1 \cup \cdots \cup \cP_s$.
\item Let $\tZ_Q = \bigcup_{S \in \cZ} S$.
\item If $|\tZ_Q| \geq k'$, let $\tT_Q$ be a random subset of $\tZ_Q$ of size $k'$. Otherwise, let $Z_Q$ be an arbitrary superset of $\tZ_Q$ of size $k'$.
\end{itemize}
\item Output the best set $I \cup Z_Q$ found among all $Q$'s.
\end{itemize}

It is obvious to see that the algorithm runs in $n^{O(t)} = n^{O(\log n / \gamma^2)}$ time. The rest of the proof is devoted to proving~\eqref{eq:apx-guarantee-submodular-dks}.

Let $T'^{\OPT} := T^{\OPT} \setminus I$, and let $U_i^{\OPT} := V'_i \cap T^{\OPT}$. We will start by proving the following claim, which (as we will argue below) ensures that w.h.p. $U_i^{\OPT}$ is included in $\cP_i$.

\begin{claim} \label{claim:random-partition-conditions}
With probability $1 - O(1/n)$ (over the random partition $V'_1, \dots, V'_s$), the following holds for all $i \in [s]$:
\begin{align} \label{eq:intersection-size}
|U_i^{\OPT}| \in [(1 - \gamma')t, (1 + \gamma')t]
\end{align}
\begin{align} \label{eq:weight-apx}
\left\|\ow(U_i) - \ow(T'^{\OPT})\right\|_\infty \leq \gamma'
\end{align}
\begin{align} \label{eq:opt-large}
|\oind(U_i)^T \ow(T'^{\OPT}) - \oind(T'^{\OPT})^T \ow(T'^{\OPT})| \leq \gamma'
\end{align}
\end{claim}

\begin{proof}[Proof of \Cref{claim:random-partition-conditions}]
We will argue that all conditions holds for a fixed $i \in [s]$ with probability $1 - O(1/n^2)$. Union bound over all $i \in [s]$ then yields the claim. 

Let us fix $i \in [s]$. Since each vertex is included in $U_i$ with probability $1/s$, we may apply Chernoff bound (\Cref{lem:chernoff}) to conclude that
\begin{align} \label{eq:intersection-size-fixed-i}
\Pr[|U_i^{\OPT}| \notin [(1 - \gamma')t, (1 + \gamma')t]] \leq 2 \exp\left(-\frac{\gamma'^2 t}{3}\right) \leq 2/n^3.
\end{align}

Next, consider a fixed $v \in V$. We will now bound the probability that $|\ow(U_i)_v - \ow(T'^{\OPT})_v| < \gamma'$. To do so, let us condition on the size of $U_i^{\OPT}$ equal to $g \in \N$. After such a conditioning, we may view the set $U_i^{\OPT}$ as being generated by drawing $u_1, \dots, u_g$ randomly without replacement from $T'^{\OPT}$. Since $\ow(U_i^{\OPT})_v = \frac{1}{g}\left(\sum_{i \in [g]} w(\{u_i, v\})\right)$ and $\E[\ow(U_i^{\OPT})_v] = \ow(T'^{\OPT})_v$, we may apply \Cref{lem:hoeffding} to conclude that
\begin{align} \label{eq:deg-concen-fixed-i}
\Pr[|\ow(U_i^{\OPT})_v - \ow(T'^{\OPT})_v| > \gamma' \mid |U_i^{\OPT}| = g] \leq 2\exp(-\gamma'^2 g). 
\end{align}
Therefore, we have
\begin{align*}
&\Pr[|\ow(U_i^{\OPT})_v - \ow(T'^{\OPT})_v| > \gamma'] \\
&\leq \Pr[|U_i^{\OPT}| < (1 - \gamma')t] + \Pr[|\ow(U_i^{\OPT})_v - \ow(T'^{\OPT})_v| > \gamma' \mid |U_i^{\OPT}| \geq (1 - \gamma')t] \\
&\overset{\text{\eqref{eq:intersection-size-fixed-i}, \eqref{eq:deg-concen-fixed-i}}}{\leq} 2/n^3 + 2\exp(-\gamma'^2 (1 - \gamma')t) \\
&\leq 4/n^3.
\end{align*}
Taking the union bound over all $v \in V$, we have
\begin{align*}
\Pr[\|\ow(U_i^{\OPT})_v - \ow(T'^{\OPT})_v\|_{\infty} > \gamma'] \leq 4/n^2.
\end{align*}

Analogous arguments also imply that
\begin{align*}
\Pr[|\oind(U_i)^T \ow(T'^{\OPT}) - \oind(T'^{\OPT})^T \ow(T'^{\OPT})| > \gamma'] \leq O(1/n^2).
\end{align*}

Applying the union bound, we conclude that all three conditions hold for a fixed $i$ with probability at least $1 - O(1/n^2)$. Finally, applying the union bound over all $i \in [s]$, we have that all three conditions hold for all $i \in [s]$ with probability at least $1 - O(1/n)$, which concludes our proof.
\end{proof}

Let $E$ denote the event that all conditions in \Cref{claim:random-partition-conditions} hold for all $i$. Conditioned on $E$, and letting $Q = U_1$. For all $i \in [s]$, we have
\begin{align*}
\left\|\ow(U_i) - \ow(Q)\right\|_\infty \leq \left\|\ow(U_i) - \ow(T'^{\OPT})\right\|_\infty + \left\|\ow(T'^{\OPT}) - \ow(Q)\right\|_\infty \overset{\text{\eqref{eq:weight-apx}}}{\leq} 2\gamma'.
\end{align*}
and
\begin{align*}
&|\oind(U_i)^T \ow(Q) - \oind(Q)^T \ow(Q)| \\
&\leq |\oind(U_i)^T \ow(T'^{\OPT}) - \oind(T'^{\OPT})^T \ow(T'^{\OPT})| + |\oind(T'^{\OPT})^T \ow(T'^{\OPT}) - \oind(Q)^T \ow(T'^{\OPT})| \\
&\qquad + |\oind(U_i)^T(\ow(Q) - \ow(T'^{\OPT}))| + |\oind(Q)^T(\ow(T'^{\OPT}) - \ow(Q))| \\
&\overset{\text{\eqref{eq:opt-large}}}{\leq} 2\gamma' + \|\oind(U_i)\|_1 \|\ow(Q) - \ow(T'^{\OPT})\|_\infty + \|\oind(Q)\|_1 \|\ow(Q) - \ow(T'^{\OPT})\|_\infty \\
&\overset{\text{\eqref{eq:weight-apx}}}{\leq} 4\gamma'.
\end{align*}
Therefore, $U_i^{\OPT}$ is included in $\cP_i$ for all $i \in [s]$. Thus, the guarantee of \Cref{thm:submodular-matroid} means that $\E[f(\cZ_{Q})] \geq (1 - 1/e)f(\{U_1^{\OPT}, \dots, U_s^{\OPT}\})$. This is equivalent to 
\begin{align} \label{eq:submod-not-final}
\E[h(I \cup \tZ_Q)] \geq h(T^{\OPT}).
\end{align}
Next, we will lower bound $\den(\tZ_Q)$. Note that we may assume\footnote{If $\cZ_Q \cap \cP_i = \emptyset$, we may add any element of $\cP_i$ to $\cZ_Q$.} that $\cZ_Q = \{U_1, \dots, U_s\}$ where $U_i \in \cP_i$. %Since $|U_i| \leq (1 + \gamma')t$, we have $|\tZ_Q| \leq (1 + \gamma')k'$.

For sets $A, B \subseteq V$, we write $w(A)$ to denote $\sum_{\{u, v\} \subseteq A} w(\{u, v\})$ and $w(A, B)$ to denote $\sum_{u \in A, v \in B} w(\{u, v\})$.
We may write $w(I \cup \tZ_Q)$ as
\begin{align}
w(I \cup \tZ_Q)
&= w(I \cup U_1 \cup \dots \cup U_s) \nonumber \\
&= w(I) + \sum_{i \in [s]} w(I, U_i) + \frac{1}{2} \sum_{i, j \in [s]} w(U_i, U_j). \label{eq:weight-expand}
\end{align}
We will now lower bound each term in the sum. First, we have
\begin{align*}
w(I, U_i) &= \ind(I)^T W \ind(U_i) \\
&= \ind(I)^T \bw(U_i) \\
&= |I| \cdot |U_i| \cdot \oind(I)^T \ow(U_i) \\
&\overset{\eqref{eq:cond-deg-concen}}{\geq} |I| \cdot |U_i| \cdot \left(\oind(I)^T \ow(Q) - 2\gamma'\right) \\
&\overset{\eqref{eq:weight-apx}}{\geq} |I| \cdot |U_i| \cdot \left(\oind(I)^T \ow(T'^{\OPT}) - 3\gamma'\right).
\end{align*}
Recall from our construction that $|U_i| \geq (1 - \gamma')t = (1 - \gamma') |T'^{\OPT}|/s$. Plugging this into the above, we have
\begin{align} \label{eq:ui-i-lb}
w(I, U_i) \geq \frac{1 - \gamma'}{s} \cdot w(I, T'^{\OPT}) - 3\gamma' \cdot |I| \cdot |U_i|.
\end{align}
 
Secondly, we have
\begin{align*}
w(U_i, U_j) &= \ind(U_i)^T W \ind(U_j) \\
&= \ind(U_i)^T \bw(U_j) \\
&= |U_i| \cdot |U_j| \cdot \oind(U_i)^T \ow(U_j) \\
&\overset{\eqref{eq:cond-deg-concen}}{\geq} |U_i| \cdot |U_j| \cdot \left(\oind(U_i)^T \ow(Q) - 2\gamma'\right) \\
&\overset{\eqref{eq:cond-opt-concen}}{\geq} |U_i| \cdot |U_j| \cdot \left(\oind(Q)^T \ow(Q) - 6\gamma'\right) \\
&\overset{\eqref{eq:weight-apx}}{\geq} |U_i| \cdot |U_j| \cdot \left(\oind(Q)^T \ow(T'^{\OPT}) - 7\gamma'\right) \\
&\overset{\eqref{eq:opt-large}}{\geq} |U_i| \cdot |U_j| \cdot \left(\oind(T'^{\OPT})^T \ow(T'^{\OPT}) - 8\gamma'\right) \\
&= |U_i| \cdot |U_j| \cdot \left(\frac{2}{|T'^{\OPT}|^2} \cdot w(T'^{\OPT}) - 8\gamma'\right)
\end{align*}
Similar to above, we may now use the fact that $|U_i|, |U_j| \geq (1 - \gamma') |T'^{\OPT}|/s$ to derive
\begin{align} \label{eq:ui-uj-lb}
w(U_i, U_j) \geq \frac{2(1 - 2\gamma')}{s^2} w(T'^{\OPT}) - 8\gamma' \cdot |U_i| \cdot |U_j|
\end{align}

Plugging~\eqref{eq:ui-i-lb} and~\eqref{eq:ui-uj-lb} back into~\eqref{eq:weight-expand}, we arrive at
\begin{align*}
w(I \cup \tZ_Q) 
&\geq w(I) + (1 - \gamma') w(I, T'^{\OPT}) + (1 - 2\gamma') w(T'^{\OPT}) - 8 \gamma' |I \cup \tZ_Q|^2 \\
&\geq (1 - 2\gamma') w(S^{OPT}) - 8 \gamma' |I \cup \tZ_Q|^2.
\end{align*}
Therefore,
\begin{align*}
\den(I \cup \tZ_Q) = \frac{1}{\binom{|I \cup \tZ_Q|}{2}} \cdot w(I \cup \tZ_Q) \geq \frac{1 - 2\gamma'}{\binom{|I \cup \tZ_Q|}{2}} \cdot w(T^{\OPT}) - 16\gamma'.
\end{align*}
Since $|U_i| \leq (1 + \gamma')t$, we have $|\tZ_Q| \leq (1 + \gamma')k'$. This implies that $|I \cup \tZ_Q| \leq (1 + \gamma')k$ Therefore,
\begin{align} 
\den(I \cup \tZ_Q) 
&\geq \frac{k(k - 1)}{(1+\gamma')k((1+\gamma')k + 1)} \cdot \frac{1 - 2\gamma'}{\binom{k}{2}} \cdot w(T^{\OPT}) - 16\gamma' \nonumber \\
&\geq (1 - 5\gamma') \den(T^{\OPT}) - 16\gamma'. \label{eq:den-almost-final}
\end{align}

From \eqref{eq:submod-not-final} and \eqref{eq:den-almost-final}, we can conclude that 
\begin{align}
\E[\den(\tZ_Q \cup I) + h(\tZ_Q \cup I) | E] \geq (1 - 5\gamma') \den(T^{\OPT}) - 16\gamma' + (1 - 1/e)h(T^{\OPT}).
\end{align}
Recall from \Cref{claim:random-partition-conditions} that $E$ happens with probability at least $1 - O(1/n)$. Thus, we have
\begin{align}
&\E[\den(\tZ_Q \cup I) + h(\tZ_Q \cup I)] \nonumber \\
&\geq  (1 - 5\gamma' - O(1/n)) \den(T^{\OPT}) - 16\gamma' + (1 - 1/e - O(1/n))h(T^{\OPT}). \label{eq:objective-bound-almost-final}
\end{align}

Finally, $|\tZ_Q| \leq (1 + \gamma')k'$ also implies that
\begin{align*}
&\E[\den(Z_Q \cup I) + h(Z_Q \cup I)] \\
&\geq \frac{k'(k' - 1)}{(1 + \gamma')k'((1 + \gamma')k' - 1)} \E[\den(\tZ_Q \cup I)] + \frac{k'}{(1 + \gamma')k'} \E[h(\tZ_Q \cup I)] \\
&\geq (1 - 3\gamma') \E\left[\left(\den(\tZ_Q \cup I) + h(\tZ_Q \cup I) \right)\right] \\
&\overset{\eqref{eq:objective-bound-almost-final}}{\geq} (1 - 8\gamma' - O(1/n)) \den(T^{\OPT}) - 16\gamma' + (1 - 1/e - 3\gamma' - O(1/n))h(T^{\OPT}) \\
&\geq \den(T^{\OPT}) + (1 - 1/e - 3\gamma' - O(1/n))h(T^{\OPT}) - 24\gamma' - O(1/n),
\end{align*}
which is at least $\den(T^{\OPT}) + (1 - 1/e - \gamma)h(T^{\OPT}) - \gamma$ for sufficiently large $n \geq \Omega(1/\gamma)$. Note that when $n$ is $O(1/\gamma)$, we may simply run the bruteforce $2^{O(n)} = 2^{O(1/\gamma)}$ to solve the problem.
\end{proof}

\subsection{From Submodular DkS to Max-Sum Diversification}

Having provided an approximation algorithm for Submodular \dks, we now turn our attention back to how to use it to approximate Max-Sum Diversification.

\subsubsection{A Structural Lemma}

We start by proving a structural lemma for Max-Sum Diversification that is analogous to \Cref{lem:max-dispersion-structural} for Max-Sum Dispersion.

\begin{lemma} \label{lem:max-diversification-structural}
Let $S^{\OPT}$ be any optimal solution of Max-Sum Diversification and let $u^{\min}$ be the vertex in $S^{\OPT}$ that minimizes $\disp(u^{\min}, S^{\OPT})$. Furthermore, let $v$ be any vertex \emph{not} in $S^{\OPT}$ and let $\Delta = d(u^{\min}, v)$. Then, we have
\begin{align*}
\dive(S^{\OPT}) \geq \frac{p(p - 1)\Delta}{16}.
\end{align*} 
\end{lemma}

\begin{proof}[Proof of \Cref{lem:max-diversification-structural}]
Let $\SOPTc := S^{\OPT} \cap \cB{u^{\min}}{0.5\Delta}$. Consider two cases, based on the size of $\SOPTc$:
\begin{itemize}
\item Case I: $|\SOPTc| \leq p / 2$. This is similar to the first case in the proof of \Cref{lem:max-dispersion-structural}: we have 
\begin{align*}
\disp(u^{\min}, S^{\OPT}) \geq \disp(u^{\min}, S^{\OPT} \setminus \SOPTc) \geq (p/2)(\Delta/2) = \Delta p / 4.
\end{align*}
Furthermore, by our definition of $u^{\min}$, we have
\begin{align*}
\disp(S^{\OPT}) = \frac{1}{2} \sum_{u \in S} \disp(u, S^{\OPT}) \geq \frac{p}{2} \disp(u^{\min}, S^{\OPT}).
\end{align*}
Combining the two inequalities, we have $\dive(S^{\OPT}) \geq \disp(S^{\OPT}) \geq p^2\Delta / 8$.
\item Case II: $|\SOPTc| > p / 2$. In this case, since $S^{\OPT}$ is an optimal solution, replacing any $z \in S^{\OPT}_{\text{close}}$ with $v$ must not increase the solution value, i.e. 
\begin{align*}
\left[f(S^{\OPT}) - f(S^{\OPT} \setminus \{z\})\right] + \disp(z, S^{\OPT}) &\geq \disp(v, S^{\OPT} \setminus \{z\}) \\
&\geq \disp(v, \SOPTc \setminus \{z\}) \\
&\geq ((p - 1)/2)(0.5\Delta),
\end{align*}
where the second inequality uses the fact that for any $z' \in \SOPTc$ we have $d(v, z') \geq d(u, v) - d(u, z') \geq \Delta - 0.5\Delta$. From this, we have
\begin{align*}
&\dive(S^{\OPT}) = f(S^{\OPT}) + \disp(S^{\OPT}) \\
&= f(S^{\OPT}) + \frac{1}{2} \sum_{u \in S} \disp(u, S^{\OPT}) \\
&\geq \sum_{z \in S^{\OPT}} \left[f(S^{\OPT}) - f(S^{\OPT} \setminus \{z\})\right] + \frac{1}{2} \sum_{u \in S} \disp(u, S^{\OPT}) \\
&\geq \frac{1}{2} \sum_{z \in \SOPTc} \left(\left[f(S^{\OPT}) - f(S^{\OPT} \setminus \{z\})\right] + \disp(z, S^{\OPT})\right) \\
&\geq |\SOPTc| \cdot \frac{(p - 1)\Delta}{8} \\
&> \frac{p(p - 1)}{16\Delta},
\end{align*}
where the last inequality follows from our assumption of this case. \qedhere
\end{itemize}
\end{proof}

\subsubsection{Putting Things Together: Proof of \Cref{thm:diversification-main-detailed}}

Finally, we use the structural lemma to reduce Max-Sum Diversification to Submodular \dks. Again, this reduction is analogous to that of Max-Sum Dispersion to \dks\ presented in the previous section.

\begin{proof}[Proof of \Cref{thm:diversification-main-detailed}]
Our algorithm works as follows:
\begin{enumerate}
\item For every distinct $u, v \in U$ do:
\begin{enumerate}
\item Let $\Delta := d(u, v)$ and $\Delta^* = 20 \Delta / \eps$.
\item If $|\bcB{u}{\Delta}| \geq p$, then skip the following steps and continue to the next pair $u, v$.
\item Otherwise, create a submodular \dks\ instance where $V := \cB{u}{\Delta^*}, I := V \setminus \cB{u}{\Delta}$, $k = p - |\bcB{u}{\Delta^*}|$, define $h$ by $h(C) := \frac{1}{k(k - 1) \Delta^*} \cdot f(\bcB{u}{\Delta} \cup C)$, and define $w$ by $w(\{y, z\}) := (0.5 / \Delta^*) \cdot d(y, z)$ for all $y, z \in V$.
\item Use the algorithm from \Cref{thm:submodular-dks} to solve the above instance with $\gamma := 0.00005\eps^2$. Let $T$ be the solution found.
\item Finally, let $S^{u, v} := T \cup \bcB{u}{\Delta^*}$.
\end{enumerate}
\item Output the best solution among $S^{u, v}$ considered.
\end{enumerate}

It is obvious that the running time is dominated by the running time of the algorithm from \Cref{thm:submodular-dks} which takes $n^{O(\log n / \gamma^2)} = n^{O(\log n / \eps^4)}$ as desired.

Next, we prove the algorithm's approximation guarantee. To do this, let us consider $S^{\OPT}, u^{\min}$ as defined in \Cref{lem:max-dispersion-structural}, and let $u = u^{\min}, v := \argmax_{z \notin S^{\OPT}} d(u, z), \Delta = d(u, v)$. Recall that by the definition of $v$, we have $S^{\OPT} \supseteq \bcB{u}{\Delta} = \bcB{u}{\Delta^*} \cup I$. 
Let $T$ be the solution found by the submodular \dks\ algorithm for this $u, v$ and let $T' := T \setminus I$. We have
\begin{align}
&\disp(S^{u, v}) \nonumber \\
&= \disp(\bcB{u}{\Delta^*}) + \disp(\bcB{u}{\Delta^*}, T) + \disp(T) \nonumber \\
&= \disp(\bcB{u}{\Delta^*}) + \disp(\bcB{u}{\Delta^*}, I) + \disp(\bcB{u}{\Delta^*}, T') + \disp(T). \label{eq:current-s-decompose-2}
\end{align}
Similarly, letting $S := S^{\OPT} \cap \cB{u}{\Delta^*}$ and $S' := S^{\OPT} \setminus I$, we have
\begin{align}
&\disp(S^{\OPT}) \nonumber \\
&= \disp(\bcB{u}{\Delta^*}) + \disp(\bcB{u}{\Delta^*}, I) + \disp(\bcB{u}{\Delta^*}, S') + \disp(S). \label{eq:optimal-s-decompose-2}
\end{align}

Now, observe from the definition of the submodular \dks\ instance (for this $u, v$) that for any $J$ such that $I \subseteq J \subseteq V$, we have
\begin{align*} 
\den(J) = \frac{1}{k(k - 1)\Delta^*} \cdot \disp(J)
\end{align*}
and
\begin{align*}
h(J) = \frac{1}{k(k - 1)\Delta^*} \cdot f(\bcB{u}{\Delta} \cup J).
\end{align*}
The approximation guarantee from \Cref{thm:submodular-dks} ensures that $\E[h(S^{u, v}) + \den(T)] \geq (1 - 1/e - \gamma)h(S^{\OPT}) + \den(S) - \gamma$. Using the above two equalities, we can rewrite this guarantee as
\begin{align} \label{eq:dks-submodular-guarantee}
\disp(S) + (1 - 1/e - \gamma)f(S^{\OPT}) - \E[\disp(T) + f(S^{u, v})] \leq \gamma \cdot \Delta^* \cdot k(k - 1).
\end{align}

Taking the difference between~\eqref{eq:optimal-s-decompose-2} and~\eqref{eq:current-s-decompose-2} and applying~\eqref{eq:dks-submodular-guarantee}, we have
\begin{align*}
&\disp(S^{\OPT}) + (1 - 1/e - \gamma)f(S^{\OPT}) - \E[\disp(S^{u, v}) + f(S^{u, v})] \\
&\leq \disp(\bcB{z}{\Delta^*}, S') - \disp(\bcB{z}{\Delta^*}, T') + \gamma \cdot 2\Delta^* \cdot k(k - 1). \\
(\text{Our choice of } \gamma) &\leq \disp(\bcB{z}{\Delta^*}, S') - \disp(\bcB{z}{\Delta^*}, T') + 0.001\eps \Delta \cdot p(p - 1) \\
(\text{\Cref{lem:max-diversification-structural}}) &\leq \disp(\bcB{z}{\Delta^*}, S') - \disp(\bcB{z}{\Delta^*}, T') + 0.1\eps \dive(S^{\OPT}).
\end{align*}

Now, since $|S'| = |T'| \leq p$ and $S', T' \subseteq \cB{z}{\Delta}$, we have
\begin{align*}
\disp(\bcB{z}{\Delta^*}, S') - \disp(\bcB{z}{\Delta^*}, T') 
&\leq |\bcB{z}{\Delta^*}| \cdot |S'| \cdot ((\Delta^* + \Delta) - (\Delta^* - \Delta)) \\
&\leq 2 |\bcB{z}{\Delta^*}| \cdot |S'| \cdot \Delta \\
(\text{Our choice of } \Delta^*) &\leq 0.1\eps \cdot |\bcB{z}{\Delta^*}| \cdot |S'| \cdot (\Delta^* - \Delta) \\
&\leq 0.1\eps \cdot \disp(\bcB{z}{\Delta^*}, S') \\
&\leq 0.1\eps \cdot \disp(S^{\OPT}).
\end{align*}

Combining the above two inequalities, we get
\begin{align*}
\E[\dive(S^{u, v})] 
&\geq \disp(S^{\OPT}) + (1 - 1/e - \gamma)f(S^{\OPT}) - 0.2\eps \dive(S^{\OPT}) \\
&\geq (1 - \eps) \disp(S^{\OPT}) + (1 - 1/e - \eps) f(S^{\OPT}),
\end{align*}
completing our proof.
\end{proof}
\fi 

\end{document}